\definecolor{tan}{HTML}{ECCBAE}
\definecolor{dblue}{HTML}{046C9A}
\definecolor{lblue}{HTML}{ABDDDE}
\theoremstyle{plain}
\newtheorem{theorem}{Theorem}[section]
\newtheorem*{theorem*}{Theorem}
\newtheorem*{lemma*}{Lemma}
\newtheorem{proposition}{Proposition}[section]
\theoremstyle{definition}
\newtheorem{example}{Example}[section]
\theoremstyle{remark}
\def\T{{ \mathrm{\scriptscriptstyle T} }}
\begin{document}
\mathtoolsset{centercolon}

\title{Mixture representations and Bayesian nonparametric inference for likelihood ratio ordered distributions}

\date{} 

\author{Michael Jauch,  \textit{Florida State University} \\ Andrés F. Barrientos, \textit{Florida State University} \\ Víctor Peña, \textit{Universitat Politècnica de Catalunya} \\ David S. Matteson, \textit{Cornell University}}

\maketitle

\date{\vspace{-5ex}}
\maketitle

\begin{abstract}
In this article, we introduce mixture representations for likelihood ratio ordered distributions. Essentially, the ratio of two probability densities, or mass functions, is monotone if and only if one can be expressed as a mixture of one-sided truncations of the other. To illustrate the practical value of the mixture representations, we address the problem of density estimation for likelihood ratio ordered distributions. In particular, we propose a nonparametric Bayesian solution which takes advantage of the mixture representations. The prior distribution is constructed from Dirichlet process mixtures and has large support on the space of pairs of densities satisfying the monotone ratio constraint. Posterior consistency holds under reasonable conditions on the prior specification and the true unknown densities. To our knowledge, this is the first posterior consistency result in the literature on order constrained inference. With a simple modification to the prior distribution, we can test the equality of two distributions against the alternative of likelihood ratio ordering. We develop a Markov chain Monte Carlo algorithm for posterior inference and demonstrate the method in a biomedical application.
\end{abstract}


{\bf Keywords:} Density estimation; Dirichlet process; Monotone likelihood ratio; Shape constrained estimation; Stochastic order.

\clearpage

\section{Introduction} \label{intro}

Let $F$ and $G$ be distribution functions with densities, or mass functions, $f$ and $g.$ We say that $F$ is smaller than $G$ in the likelihood ratio order, denoted $F \leq_{\text{LR}} G,$ if $f/g$ is monotone non-increasing. The likelihood ratio order is a prominent example of a stochastic order \citep{Shaked2007}. The most familiar stochastic orders formalize, in different ways, the intuitive idea that one random variable, or distribution, is larger than another. Monotone likelihood ratios have important implications in statistical theory \citep{Karlin1956, Butucea2023} and feature in applications of statistics to economics and finance \citep{Roosen2004, Beare2015} as well as biology, medicine, and public health \citep{Dykstra1995, Carolan2005, Yu2017, Westling2023}. The likelihood ratio order is also pertinent to many problems in applied probability, stochastic processes, and operations research \citep{Shaked2007}.

This article introduces mixture representations for likelihood ratio ordered distributions. Essentially, we show that $f/g$ is non-increasing if and only if $f$ can be expressed as a mixture of one-sided truncations of $g.$ The mixing distribution, which assigns weights to the truncation points, is unique and tractable. A symmetric counterpart to this result provides a mixture representation for $g.$ The mixture representations shed light on the convex geometry of the likelihood ratio order and immediately lead to Khintchine-type results \citep{Khintchine1938, Williamson1956, Shepp1962}. 

To illustrate the practical value of the mixture representations, we address the problem of density estimation for likelihood ratio ordered distributions. Suppose we observe two independent samples $X_1, \ldots, X_{n} \sim F$ and $Y_1, \ldots, Y_{m} \sim G$ with $F \leq_{\text{LR}} G$. How can we flexibly estimate the densities $f$ and $g$ subject to the constraint that the ratio $f /g$ is monotone? 

As motivation for this problem, and following \citet{Yu2017} and \citet{Westling2023},
consider the relationship between a continuous explanatory variable $Z$ and a binary response variable $D.$ For example, $Z$ could be the value of a biomarker and $D$ could indicate the presence of a disease. In this situation, one commonly fits a generalized linear model, in which case $\text{pr}(D=1 \mid Z=z)$ is a monotonic function of $z.$ An alternative nonparametric analysis can be carried out via density estimation. Let $f$ and $g$ denote the densities of $Z$ conditional on $D=0$ and $D=1,$ respectively. By Bayes' theorem, 
\begin{align*} 
    \text{pr}(D=1 \mid Z=z) = \frac{g(z)\text{pr}(D=1)}{g(z)\text{pr}(D=1) + f(z)\text{pr}(D=0)} 
    = \left\{1 + \frac{f(z)}{g(z)}\frac{\text{pr}(D=0)}{\text{pr}(D=1)} \right\}^{-1}.
\end{align*}
To ensure the nonparametric analysis respects the monotonicity of {$\text{pr}(D=1 \mid Z=z),$} 
the ratio $f/g$ must be monotone. In many applications, the results may be difficult to interpret or act upon without such a constraint. Density estimation for likelihood ratio ordered distributions can also be applied in several seemingly unrelated statistical tasks, including estimating receiver operating characteristic curves \citep{Yu2017}, comparing trends in nonhomogenous Poisson processes \citep{Dykstra1995}, and accounting for selection bias.  

Despite the importance of the density estimation problem, there has been little work addressing it. Prior to the present article, only \citet{Yu2017} had proposed a solution, combining a reparametrization with the maximum smoothed likelihood technique of \citet{Eggermont1995a}. Recently, \citet{Hu2023} presented a method based on Bernstein polynomials. Both works established the consistency of their estimators, but left open the question of uncertainty quantification. There is a substantial and growing frequentist literature on related estimation \citep{Dykstra1995, Westling2023, Mosching2022} and hypothesis testing \citep{Roosen2004, Beare2015, Beare2019} problems. There are, as far as we are aware, no comparable Bayesian methods for likelihood ratio ordered distributions. The Bayesian literature addressing stochastic ordering constraints is focused primarily on the usual stochastic order \citep{Gelfand2001, Hoff2003a, Hoff2003b, Karabatsos2007, Dunson2008, Wang2011a}. 

We propose a nonparametric Bayesian solution to the density estimation problem which takes advantage of the mixture representations. Our strategy is to assign a prior to $g$ and the mixing density over truncation points. By doing so, we induce a prior distribution for $f$ and $g$ such that $f/g$ is monotone non-increasing almost surely. The prior distribution is constructed from Dirichlet process mixtures \citep{Lo1984} and has large support on the space of pairs of densities satisfying the monotone ratio constraint.  Posterior consistency holds under reasonable conditions on the prior specification and the true unknown densities. To our knowledge, this is the first posterior consistency result in the literature on order constrained inference. We develop a Markov chain Monte Carlo algorithm for posterior inference, enabling us to calculate, up to Monte Carlo error, posterior means and credible intervals for all parameters of interest. With a simple modification to the prior distribution, we can also calculate the posterior probability of the null hypothesis $H_0: F = G$ versus the alternative $H_1: F \leq_{\text{LR}}G,$ addressing a testing problem of interest in economics \citep{Roosen2004, Beare2015, Beare2019} and statistics \citep{Dykstra1995, Carolan2005}. Following \citet{Westling2023}, we apply our method to the evaluation of C-reactive protein as a means of diagnosing bacterial infection in children with systemic inflammatory response syndrome.  Proofs appear in the supplementary material, and \texttt{R} code can be found at \href{https://github.com/michaeljauch/lrmix}{https://github.com/michaeljauch/lrmix}.

\section{The mixture representations} \label{sec:mixtrep}

\subsection{Overview}

In this section, we present the mixture representations for likelihood ratio ordered distributions. We first consider finite discrete distributions and then move on to absolutely continuous distributions. Count distributions with countably infinite support are considered in the supplementary material. For each setting, there are two symmetric results. One expresses the density, or mass function, of the smaller distribution as a mixture, while the other expresses the density, or mass function, of the larger distribution as a mixture. Throughout the section, we present concrete examples to illustrate the results, build intuition, and make connections to previous work. 

\subsection{Finite discrete distributions}

Let $\mathcal{I} = \{x_1, \ldots, x_d\} \subset \mathbbm{R}$ with $x_1 < \ldots < x_d.$ Suppose $f$ and $g$ are probability mass functions with $\sum_{x_i \in \mathcal{I}} f(x_i) = 1, \sum_{x_i \in \mathcal{I}} g(x_i) = 1,$ and $g > 0$ on $\mathcal{I}.$ Let $F$ and $G$ be the corresponding distribution functions. For each $x_j \in \mathcal{I} \setminus \{x_d\},$ let $g^{x_j}$ denote the mass function that results from truncating $g$ above $x_j.$ More precisely, $g^{x_j}(x_i) = g(x_i)\mathbbm{1}_{(-\infty, x_j]}(x_i)/G(x_j)$ for $x_i \in \mathcal{I}.$ The superscript alludes to truncation above. We reserve the subscript for truncation below. Let $G^{x_j}$ be the corresponding distribution function.

With this notation in place, we can state the first result on mixture representations for likelihood ratio ordered distributions. 
\begin{theorem}[$f$ as a mixture] \label{thm:fmixture_finite}   The ratio $f/g$ is non-increasing on $\mathcal{I}$ if and only if there exists $\theta \in [0,1]$ and a mass function $u$ with $\sum_{x_j \in \mathcal{I} \setminus \{x_d\}} u(x_j) = 1$ 
such that 
\begin{align} \label{fmixture_finite}
    f(x_i) &= \theta g(x_i) + (1-\theta) \sum_{x_j \in \mathcal{I} \setminus \{ x_d \}}  g^{x_j}(x_i) u(x_j), \quad x_i \in \mathcal{I}.
\end{align} Suppose a mixture representation \eqref{fmixture_finite} exists. Then $\theta = f(x_d)/g(x_d).$ If $\theta \in [0,1),$ $u$ is also uniquely determined with 
\begin{align*}
    u(x_j) = \frac{G(x_j)}{1-\theta}\left\{\frac{f(x_j)}{g(x_j)} - \frac{f(x_{j+1})}{g(x_{j+1})}\right\}, \quad x_j \in \mathcal{I}\setminus \{x_d\}.
\end{align*}
\end{theorem}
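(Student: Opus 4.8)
The plan is to divide the proposed identity \eqref{fmixture_finite} through by $g(x_i)$ and work with the likelihood ratio directly. The key device is the elementary identity $g^{x_j}(x_i)/g(x_i) = \mathbbm{1}_{(-\infty, x_j]}(x_i)/G(x_j)$. Writing $r(x_i) = f(x_i)/g(x_i)$ and $w(x_j) = u(x_j)/G(x_j) \ge 0$, equation \eqref{fmixture_finite} becomes equivalent to
\begin{align*}
    r(x_i) = \theta + (1-\theta) \sum_{j \,:\, x_j \ge x_i,\ j \le d-1} w(x_j), \qquad x_i \in \mathcal{I}.
\end{align*}
Both directions of the equivalence, together with the explicit formulas for $\theta$ and $u$, can be read off from this reformulation.

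For sufficiency, I would observe that as $x_i$ increases the index set $\{j : x_j \ge x_i,\ j \le d-1\}$ only shrinks, so the displayed sum loses nonnegative terms and $r$ is non-increasing. Evaluating at $x_i = x_d$ gives an empty sum, yielding $\theta = r(x_d) = f(x_d)/g(x_d)$.

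For necessity, suppose $r$ is non-increasing and set $\theta = r(x_d)$. I first check $\theta \in [0,1]$: nonnegativity is immediate since $f \ge 0$ and $g > 0$, while $\theta \le 1$ follows from $\sum_{i} g(x_i)\{r(x_i) - 1\} = \sum_i f(x_i) - \sum_i g(x_i) = 0$, which prevents the minimum value $r(x_d)$ from exceeding $1$. When $\theta \in [0,1)$, I would define $u$ by the stated formula, equivalently $w(x_j) = \{r(x_j) - r(x_{j+1})\}/(1-\theta) \ge 0$, and verify the representation by telescoping:
\begin{align*}
    \theta + (1-\theta)\sum_{j=i}^{d-1} w(x_j) = \theta + \sum_{j=i}^{d-1}\{r(x_j) - r(x_{j+1})\} = \theta + r(x_i) - r(x_d) = r(x_i).
\end{align*}
Multiplying by $g(x_i)$ recovers \eqref{fmixture_finite}. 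When $\theta = 1$, the same inequality argument forces $r \equiv 1$, i.e.\ $f = g$, and the mixture term vanishes; this degeneracy is exactly why $u$ is claimed unique only for $\theta \in [0,1)$.

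The one genuinely computational step, and the main obstacle, is confirming that the constructed $u$ is a probability mass function, namely $\sum_{j \le d-1} u(x_j) = 1$. I would handle this by summation by parts: with $G_j = G(x_j)$ and $r_j = r(x_j)$,
\begin{align*}
    \sum_{j=1}^{d-1} G_j (r_j - r_{j+1}) = \sum_{j=1}^{d-1} g(x_j)\, r_j - G_{d-1}\, r_d = \{1 - f(x_d)\} - \{1 - g(x_d)\}\frac{f(x_d)}{g(x_d)} = 1 - \theta,
\end{align*}
using $g(x_j) r_j = f(x_j)$ and $G_{d-1} = 1 - g(x_d)$; dividing by $1 - \theta$ gives $\sum_j u(x_j) = 1$. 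Uniqueness then follows from the reformulated identity: evaluation at $x_d$ forces $\theta = r(x_d)$, and for $\theta < 1$ successive differences force $w(x_j) = \{r(x_j) - r(x_{j+1})\}/(1-\theta)$, hence $u(x_j) = G(x_j)\,w(x_j)$, determined uniquely.
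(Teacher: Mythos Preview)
Your proof is correct and follows essentially the same approach as the paper's: both divide through by $g(x_i)$ to reduce \eqref{fmixture_finite} to the identity $r(x_i) = \theta + (1-\theta)\sum_{j\ge i}^{d-1} u(x_j)/G(x_j)$, argue monotonicity from shrinking index sets, verify the normalization of $u$ via an Abel/summation-by-parts rearrangement, and confirm the representation by telescoping. Your uniqueness argument via consecutive differences $r(x_j)-r(x_{j+1}) = (1-\theta)w(x_j)$ is a slightly more direct phrasing of the paper's inductive step, but the underlying idea is the same.
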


Theorem \ref{thm:fmixture_finite} provides a characterization of the likelihood ratio order in terms of random truncations. The likelihood ratio ordering $F \leq_{\text{LR}} G$ is equivalent to the existence of $\theta \in [0,1]$ and a distribution function $U,$ corresponding to the mass function $u$ in \eqref{fmixture_finite},  such that the following procedure produces a random variable $X \sim F:$ With probability $\theta,$ draw $X \sim G$; otherwise, draw $S \sim U$ and then $X \mid S \sim G^S.$ This characterization suggests a promising approach to modeling likelihood ratio ordered distributions. Instead of modeling the constrained mass functions $(f,g)$ directly, we can model $(g, u, \theta)$ without constraints.  

Theorem \ref{thm:fmixture_finite} also highlights the simple convex geometry of the likelihood ratio order. Consider the collection of mass functions $\tilde{f}$ with $\sum_{x_i \in \mathcal{I}} \tilde{f}(x_i) = 1$ such that $\tilde{f}/g$ is non-increasing on $\mathcal{I}.$ The theorem tells us this collection is a $(d-1)$-dimensional simplex with $\{g, g^{x_{d-1}}, \ldots , g^{x_1}\}$ as its vertex set. From this perspective, Theorem \ref{thm:fmixture_finite} is reminiscent of Corollary 1 in \citet{Hoff2003a}, which concerns mixture representations for convex sets of probability measures. We can visualize the convex geometry of the likelihood ratio order by considering a low-dimensional example.

\begin{figure}
     \centering
     \begin{subfigure}[b]{0.45\textwidth}
         \centering
         \includegraphics[width=\textwidth]{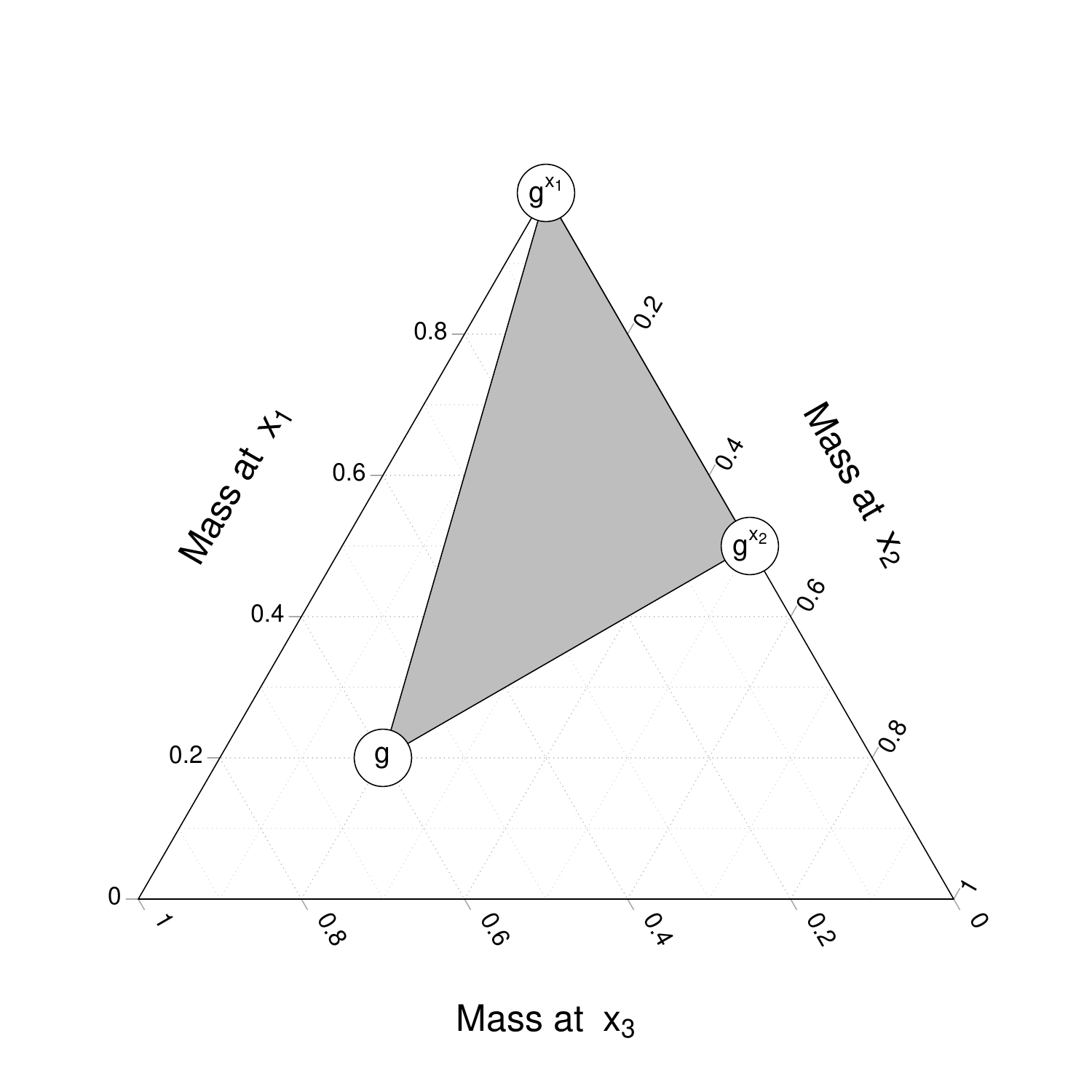}
         \caption{}
         \label{fig:ternary}
     \end{subfigure}
     \hfill
     \begin{subfigure}[b]{0.45\textwidth}
         \centering
         \includegraphics[width=\textwidth]{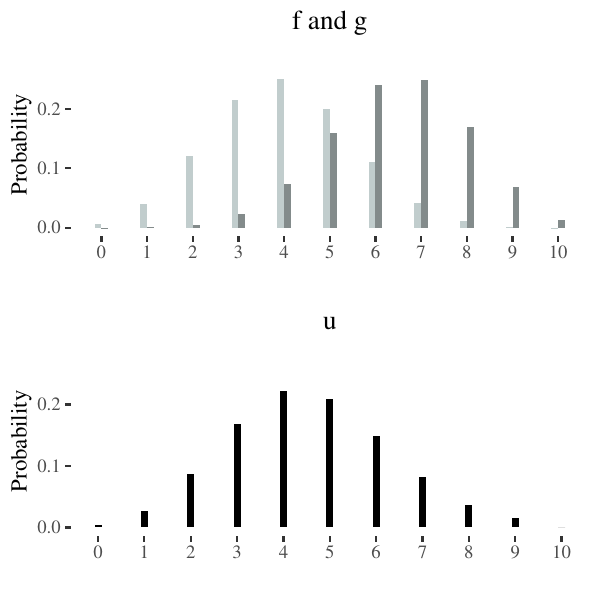}
         \caption{}
         \label{fig:binomial}
     \end{subfigure}
        \label{fig:ternary_plus_binomial}
        \caption{(a) The ternary plot described in Example \ref{ex:ternary}. (b) Top: the $\text{Binomial}(10,1/3)$ mass function $f$ (light grey) and the $\text{Binomial}(10,2/3)$ mass function $g$ (dark grey), as discussed in Example \ref{ex:binomial}. Bottom: the corresponding mass function $u.$}
\end{figure}

\begin{example} \label{ex:ternary} 
Suppose that $d=3$ with $g(x_1)= g(x_2)= 0.2$ and $ g(x_3)= 0.6.$ Mass functions supported on subsets of $\mathcal{I}$ can be identified with points in the two-dimensional probability simplex. For example, $g$ can be identified with the probability vector $[g(x_1), g(x_2), g(x_3)]^\T.$ Elements and regions of the probability simplex can in turn be visualized with a ternary plot. 
Figure \ref{fig:ternary} plots the elements corresponding to the mass functions $g, g^{x_{2}},$ and $g^{x_1}$ with white circles. The shaded region corresponds to the collection of mass functions $\tilde{f}$  with $\sum_{x_i \in \mathcal{I}} \tilde{f}(x_i) = 1$ such that $\tilde{f}/g$ is non-increasing on $\mathcal{I}.$ As we expect based on Theorem \ref{thm:fmixture_finite}, this region is triangular and its vertices correspond to the mass functions $g, g^{x_{2}},$ and $g^{x_1}.$
\end{example}

\begin{example} \label{ex:binomial} Let $F$ and $G$ correspond to $\text{Binomial}(10, 1/3)$ and $\text{Binomial}(10, 2/3)$ distributions, respectively. Then $F \leq_{\text{LR}} G.$ The top panel of Fig. \ref{fig:binomial} plots the two binomial mass functions, while the bottom panel plots the mass function $u.$ In this example, $\theta \approx  9.8 \times 10^{-4}.$
\end{example}

There is a symmetric counterpart to Theorem \ref{thm:fmixture_finite}. We now suppose $f$ rather than $g$ is positive on $\mathcal{I}.$ For each $x_j \in \mathcal{I} \setminus \{x_1\},$ let $f_{x_j}$ denote the mass function that results from truncating $f$ below $x_j.$ More precisely, $f_{x_j}(x_i) = f(x_i)\mathbbm{1}_{[x_j, \infty)}(x_i)/\{1 - F(x_{j-1})\}$ for $x_i \in \mathcal{I}.$ 

\begin{theorem}[$g$ as a mixture] \label{thm:gmixture_finite}
The ratio $g/f$ is non-decreasing on $\mathcal{I}$ if and only if there exists $\omega \in [0,1]$ and a mass function $v$ with $\sum_{x_j \in \mathcal{I}\setminus \{x_1\}} v(x_j) = 1$ such that
\begin{align} \label{gmixture_finite}
    g(x_i) &= \omega f(x_i) + (1-\omega) \sum_{x_j \in \mathcal{I}\setminus \{x_1\}} f_{x_j}(x_i) \, v(x_j), \quad x_{i} \in \mathcal{I}.
\end{align} Suppose a mixture representation \eqref{gmixture_finite} exists. Then $\omega=g(x_1)/f(x_1).$ If $\omega \in [0,1),$ $v$ is also uniquely determined with 
\begin{align*}
    v(x_j) = \frac{1-F(x_{j-1})}{1-\omega}\left\{\frac{g(x_j)}{f(x_j)} - \frac{g(x_{j-1})}{f(x_{j-1})} \right\}, \quad x_j \in \mathcal{I}\setminus \{x_1\}.
\end{align*}
\end{theorem}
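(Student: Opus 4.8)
The plan is to obtain Theorem~\ref{thm:gmixture_finite} from Theorem~\ref{thm:fmixture_finite} by a reflection argument, exploiting that the two statements are mirror images under the map $x \mapsto -x$, which interchanges truncation above with truncation below and swaps the roles of $f$ and $g$. Concretely, I would set $x_k' := -x_{d+1-k}$, so that $x_1' < \cdots < x_d'$ enumerate the reflected grid $\mathcal{I}'$, and define reflected mass functions $f'(x_k') := g(x_{d+1-k})$ and $g'(x_k') := f(x_{d+1-k})$. The key observation is that $f'(x_k')/g'(x_k') = g(x_{d+1-k})/f(x_{d+1-k})$ is non-increasing in $k$ if and only if $g/f$ is non-decreasing on $\mathcal{I}$; equivalently, the hypothesis of Theorem~\ref{thm:gmixture_finite} for $(f,g)$ coincides with the hypothesis of Theorem~\ref{thm:fmixture_finite} for $(f',g')$.

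Next I would apply Theorem~\ref{thm:fmixture_finite} to the pair $(f',g')$, obtaining $\theta' \in [0,1]$ and a mass function $u'$ on $\mathcal{I}'\setminus\{x_d'\}$ that represents $f'$ as a mixture of the upper truncations $g'^{x_k'}$. The remaining step is to translate every object back through the reflection. One checks that truncating $g'$ above $x_k'$ corresponds to truncating $f$ below $x_{d+1-k}$, so $g'^{x_k'}$ pulls back to $f_{x_{d+1-k}}$; that $\theta' = f'(x_d')/g'(x_d') = g(x_1)/f(x_1) = \omega$; and that the normalizer $G'(x_k') = 1 - F(x_{d-k})$ matches $1 - F(x_{j-1})$ under the correspondence $j = d+1-k$. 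Substituting these identifications into the formula for $u'$ supplied by Theorem~\ref{thm:fmixture_finite} yields exactly the claimed expression for $v(x_j) = u'(x_{d+1-j}')$, and the mixture \eqref{gmixture_finite} is the reflected image of \eqref{fmixture_finite}. The uniqueness of $\omega$ and of $v$ (when $\omega < 1$) transfers immediately from the uniqueness of $\theta'$ and $u'$.

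If a self-contained argument is preferred, the result can instead be proved directly by mirroring the proof of Theorem~\ref{thm:fmixture_finite}. For the reverse implication, substituting $f_{x_j}(x_i) = f(x_i)\mathbbm{1}_{[x_j,\infty)}(x_i)/\{1-F(x_{j-1})\}$ into \eqref{gmixture_finite} gives $g(x_i)/f(x_i) = \omega + (1-\omega)\sum_{x_j \le x_i,\, j \ge 2} v(x_j)/\{1-F(x_{j-1})\}$, a sum that acquires additional non-negative terms as $x_i$ increases, whence $g/f$ is non-decreasing. For the forward implication, one takes $\omega$ and $v$ as defined in the statement, verifies $v \ge 0$ from the monotonicity of $g/f$ and $\sum_j v(x_j) = 1$ by telescoping, and confirms \eqref{gmixture_finite} by an Abel summation.

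I expect the main obstacle to be the index bookkeeping in the reflection: the support of the mixing distribution shifts from $\mathcal{I}\setminus\{x_d\}$ to $\mathcal{I}\setminus\{x_1\}$, the truncation normalizers change from $G(x_j)$ to $1-F(x_{j-1})$, and the difference $f(x_j)/g(x_j) - f(x_{j+1})/g(x_{j+1})$ becomes $g(x_j)/f(x_j) - g(x_{j-1})/f(x_{j-1})$. Each index must therefore be tracked carefully to land precisely on the stated formula for $v$; once the correspondence $j \leftrightarrow d+1-k$ is fixed, the remaining manipulations are routine substitutions.
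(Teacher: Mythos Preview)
Your proposal is correct and aligns with the paper, which simply omits the proof with the remark that it is ``similar to that of Theorem~\ref{thm:fmixture_finite}.'' Both routes you outline---the reflection $x\mapsto -x$ that swaps $(f,g)$ and converts upper to lower truncation, and the direct line-by-line mirror of the proof of Theorem~\ref{thm:fmixture_finite}---are legitimate ways of making that remark precise, and your index bookkeeping (the shift from $G(x_j)$ to $1-F(x_{j-1})$ and the correspondence $j\leftrightarrow d+1-k$) is exactly what is needed.
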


\subsection{Absolutely continuous distributions}

A natural question is whether analogous mixture representations exist for absolutely continuous distributions that satisfy a likelihood ratio order. The answer is yes, provided that the density ratio is sufficiently regular.

Let $I=(a,b)$ with $a,b \in \mathbbm{R} \cup \{-\infty, \infty\}.$ Suppose $f$ and $g$ are density functions with $\int_{I} f(x)\, dx = 1,$ $\int_{I} g(x)\, dx = 1,$ and $g > 0$ on $I.$ Let $F$ and $G$ be the corresponding distribution functions. For each $s \in I,$ let $g^s$ denote the truncated density function with $g^s(x) = g(x) \mathbbm{1}_{(-\infty,s]}(x)/G(s)$ for $x \in I.$ Following \citet{Leoni2017}, we say that a function $h: I \to \mathbbm{R}$ is locally absolutely continuous on $I$ if it is absolutely continuous on every compact interval $J \subset I,$ and we denote the space of all such functions as $AC_\text{loc}(I).$ 

With this notation in place, we can state an analogue of Theorem \ref{thm:fmixture_finite} for absolutely continuous distributions. 
\begin{theorem}[$f$ as a mixture] \label{thm:fmixture_ac}
The ratio $f/g$ is non-increasing and locally absolutely continuous on $I$ if and only if there exists $\theta \in [0,1]$ and an absolutely continuous distribution function $U$ with $U(a+)=0$ and $U(b-)=1$ such that
\begin{align} \label{fx_mixture}
    f(x) &= \theta g(x) + (1-\theta) \int_a^b g^s(x) \, U'(s) \,  ds, \quad x\in I.
\end{align} Suppose a mixture representation \eqref{fx_mixture} exists. Then $\theta=\lim_{x \uparrow b} f(x)/g(x).$ If $\theta \in [0,1),$ $U$ is also uniquely determined with 
\begin{align*}
    U(x) = \frac{G(x)}{1-\theta}\left\{\frac{F(x)}{G(x)} - \frac{f(x)}{g(x)}\right\}, \quad x \in I.
\end{align*}
\end{theorem}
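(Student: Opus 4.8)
The plan is to reduce the entire statement to a single scalar identity for the ratio $r = f/g$, exploiting that truncating $g$ above $s$ only affects points $x \le s$. Fixing $x$ and noting that $g^s(x) = g(x)\mathbbm{1}_{(-\infty,s]}(x)/G(s)$ vanishes unless $s \ge x$, a Tonelli computation collapses the mixing integral to $\int_a^b g^s(x)\,U'(s)\,ds = g(x)\int_x^b U'(s)/G(s)\,ds$. Consequently the representation \eqref{fx_mixture} is equivalent to
\begin{align} \label{plan:scalar}
    \frac{f(x)}{g(x)} = \theta + (1-\theta)\int_x^b \frac{U'(s)}{G(s)}\, ds, \quad x \in I.
\end{align}
This equation is the hinge of the whole argument, and both directions follow from reading it correctly.

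For sufficiency, suppose $\theta$ and $U$ are given. Since the integrand $U'/G$ is nonnegative, the right-hand side of \eqref{plan:scalar} is non-increasing in $x$; and since $U'$ is integrable while $1/G$ is bounded on each compact subinterval (as $G$ is positive and continuous), the integral term lies in $AC_{\text{loc}}(I)$. Hence $f/g$ is non-increasing and locally absolutely continuous, and a second Tonelli computation confirms $\int_I f = 1$. For necessity, I would set $\theta = \lim_{x\uparrow b} r(x)$, which exists by monotonicity; since $r \ge \theta$ everywhere forces $f \ge \theta g$ and thus $1 = \int_I f \ge \theta$, together with $r \ge 0$ this gives $\theta \in [0,1]$. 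Assuming $\theta < 1$, differentiating \eqref{plan:scalar} suggests the density $U'(x) = -G(x)r'(x)/(1-\theta)$, which is nonnegative because $r$ is non-increasing. Integrating this candidate and integrating by parts, using $G(a+)=0$ and $g(t)r(t)=f(t)$, yields the closed form
\begin{align*}
    U(x) = \frac{F(x) - G(x) r(x)}{1-\theta} = \frac{G(x)}{1-\theta}\left\{\frac{F(x)}{G(x)} - \frac{f(x)}{g(x)}\right\}.
\end{align*}
Because $F$ and $G$ are absolutely continuous and $r \in AC_{\text{loc}}(I)$, this $U$ lies in $AC_{\text{loc}}(I)$, and I would verify it satisfies \eqref{plan:scalar} by running the fundamental theorem of calculus for $r$ backwards from $b$, writing $r(x) = \theta - \int_x^b r'$, which local absolute continuity licenses.

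The main obstacle will be the boundary behaviour, especially $U(a+)=0$: since $r(a+)$ may be infinite, one cannot argue that $G(x)r(x)\to 0$ by treating the factors separately. I would instead use the monotonicity squeeze $0 \le F(x) - G(x)r(x) \le F(x)$, whose lower bound comes from $F(x)=\int_a^x r\,g \ge r(x)G(x)$ and whose upper bound is immediate, so that $U(a+)=0$ follows from $F(x)\to 0$; the same squeeze also kills the boundary term in the integration by parts above. The limit $U(b-)=1$ drops out of $F(x),G(x)\to 1$ and $r(x)\to\theta$.

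Finally, monotonicity of $U$ together with $\int_a^b U' = U(b-)-U(a+) = 1$ certifies that $U$ has no singular part and is a genuine absolutely continuous distribution function rather than merely a locally absolutely continuous one; this is precisely where the hypothesis that $f/g$ is locally absolutely continuous, and not just non-increasing, earns its keep, since a non-increasing ratio with a singular component would produce an invalid $U$. For the uniqueness claims, I would read $\theta$ off \eqref{plan:scalar} by letting $x \uparrow b$, and when $\theta < 1$ observe that \eqref{plan:scalar} determines $U'/G$, hence $U'$, hence $U$ via the normalization $U(a+)=0$.
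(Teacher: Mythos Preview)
Your proposal is correct and follows essentially the same route as the paper: both arguments pivot on the scalar identity you label \eqref{plan:scalar} (this is the paper's equation \eqref{ratio}), set $\theta = r(b-)$, arrive at the same closed form $U(x) = \{F(x) - G(x)r(x)\}/(1-\theta)$, and dispatch the delicate boundary $U(a+)=0$ via the same squeeze $0 \le G(x)r(x) \le F(x)$. The only substantive difference is in the uniqueness step: the paper integrates \eqref{fx_mixture} over $t \in (a,x_0)$, swaps the order of integration by Fubini, and solves the resulting expression for $U(x_0)$, whereas you differentiate \eqref{plan:scalar} in $x$ to determine $U'/G$ almost everywhere and then recover $U$ from $U(a+)=0$. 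Your route is shorter; the paper's has the minor advantage of never invoking a.e.\ differentiability of the integral term, working entirely at the level of distribution functions.
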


\begin{figure}
     \centering
     \begin{subfigure}[b]{0.45\textwidth}
         \centering
         \includegraphics[width=\textwidth]{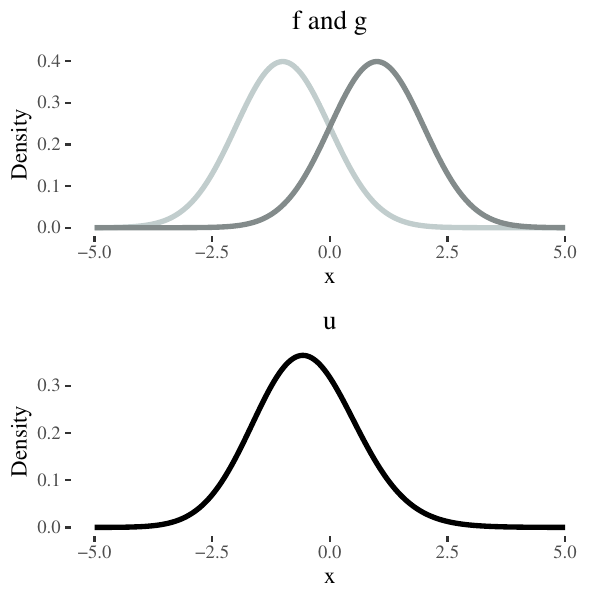}
         \caption{}
         \label{fig:normal}
     \end{subfigure}
     \hfill
     \begin{subfigure}[b]{0.45\textwidth}
         \centering
         \includegraphics[width=\textwidth]{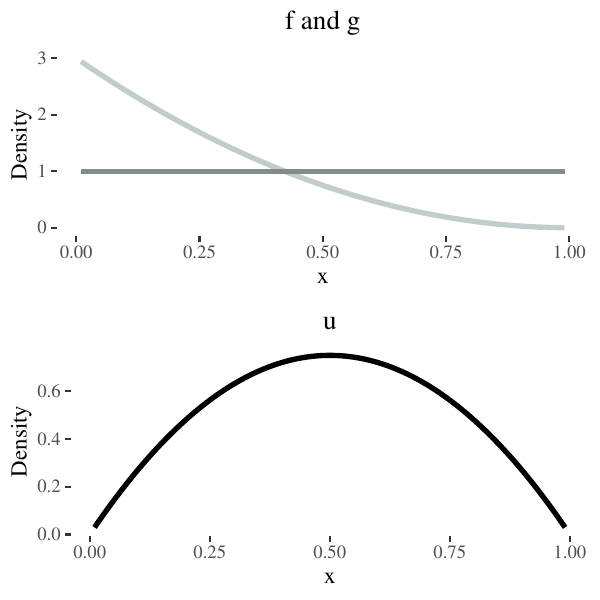}
         \caption{}
         \label{fig:beta}
     \end{subfigure}
    \caption{(a) Top: the $\text{Normal}(-1,1)$ density $f$ (light grey) and the $\text{Normal}(1,1)$ density $g$ (dark grey), as discussed in Example \ref{ex:normal}. Bottom: the corresponding density $u.$ (b) Top: the $\text{Beta}(1,3)$ density $f$ (light grey) and the $\text{Beta}(1,1)$ density $g$ (dark grey), as discussed in Example \ref{ex:beta}. Bottom: the corresponding density $u.$}
\end{figure}

We offer a couple of initial remarks related to Theorem \ref{thm:fmixture_ac}. First, any density $u$ of $U$ satisfies $u(x) = G(x)\left\{-f(x)/g(x)\right\}'/(1-\theta)$ for almost all $x \in I.$ The second remark concerns the regularity condition. Local absolute continuity of the ratio $f/g$ is critical to the existence of the mixture representation \eqref{fx_mixture}, but it is not obvious how to verify that it holds for a particular example. The following proposition gives a simple sufficient condition.
\begin{proposition} \label{prop:suff_cond_fx}
Suppose $f/g$ is continuous and non-increasing on $I.$ If $f/g$ is differentiable on $I$ except at countably many points, then $f/g \in AC_\text{loc}(I).$
\end{proposition}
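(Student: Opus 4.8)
The plan is to reduce the claim to a verification of Luzin's condition (N) via the Banach--Zaretsky characterization of absolute continuity, and then to verify (N) by splitting off the countable exceptional set from the set of differentiability. Write $h = f/g$, which is continuous and non-increasing on $I$ by hypothesis, and let $E \subset I$ denote the countable set on which $h$ fails to be differentiable. Fix an arbitrary compact interval $J = [c,d] \subset I$; since establishing absolute continuity on every such $J$ is precisely the assertion $h \in AC_\text{loc}(I)$, it suffices to work on a single $J$. Being continuous and monotone, $h$ is of bounded variation on $J$, so by the Banach--Zaretsky theorem it is absolutely continuous on $J$ if and only if it satisfies Luzin's condition (N): the image $h(N)$ of every Lebesgue-null set $N \subset J$ is again null.

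To verify (N), I would fix a null set $N \subset J$ and decompose it as $N = (N \cap E) \cup (N \setminus E)$. The first piece has countable, hence null, image because $E$ is countable. For the second piece I would invoke the standard lemma that a function possessing a finite derivative at every point of a set $A$ maps null subsets of $A$ to null sets; applied with $A = J \setminus E$, where $h$ is differentiable with finite derivative (finiteness being automatic from monotonicity), this yields that $h(N \setminus E)$ is null. Combining the two pieces shows $h(N)$ is null, so (N) holds and the conclusion follows.

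The main work, and the one genuine estimate, is that differentiability-implies-(N) lemma. Its proof decomposes $A = J \setminus E$ into the sets $A_k = \{x \in A : |h(y) - h(x)| \le k\,|y - x| \text{ whenever } |y - x| \le 1/k\}$, whose union is $A$ by finite differentiability; on each $A_k$ the map $h$ behaves like a Lipschitz map at small scales, so covering $N \cap A_k$ by short intervals and summing lengths gives $|h(N \cap A_k)| = 0$, whence $|h(N)| \le \sum_k |h(N \cap A_k)| = 0$. I would flag that the countability of $E$ is essential rather than cosmetic: weakening the hypothesis to differentiability merely almost everywhere is false, since the continuous monotone Cantor function (or its decreasing reflection $1 - \text{Cantor}$) is differentiable with derivative zero off a null set yet is singular. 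Countability is exactly what guarantees $h(E)$ is null and thereby lets the two pieces of the decomposition combine. A more hands-on alternative would be to prove directly that $h(x) - h(c) = \int_c^x h'$ on $J$, but the condition (N) route is cleaner and isolates the single nontrivial estimate above.
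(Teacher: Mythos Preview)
Your argument is correct but takes a different route from the paper's. The paper's proof is a two-line citation: by Exercise~3.48 of Leoni (2017), a continuous function on a compact interval that is differentiable except at countably many points and has Lebesgue-integrable derivative is absolutely continuous, so one only has to check that $(f/g)'$ is integrable on each compact $J\subset I$; that in turn follows from monotonicity via Corollary~1.25 of the same reference (the derivative of a monotone function on a bounded interval is integrable). Your approach instead goes through Banach--Zaretsky and verifies Luzin's condition (N) directly, splitting a null set into its intersection with the countable exceptional set $E$ (whose image is countable, hence null) and the complement where $h$ has a finite derivative (handled by the standard $A_k$-decomposition Lipschitz lemma). Both are valid and standard; the paper's version is shorter but leans on a textbook reference, whereas your version is self-contained and, with the Cantor-function remark, makes explicit why countability of the exceptional set cannot be weakened to mere null-ness. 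Either argument would be acceptable here.
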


\begin{example} \label{ex:normal}
Let $F$ and $G$ correspond to $\text{Normal}(-1, 1)$ and $\text{Normal}(1, 1)$ distributions, respectively. Then $F \leq_{\text{LR}} G$ and $f/g \in AC_\text{loc}(I).$ The top panel of Fig. \ref{fig:normal} shows a plot of the two normal densities, while the bottom panel shows a plot of the density $u.$ In this example, $\theta =0.$ 
\end{example}

Theorem \ref{thm:fmixture_ac} is akin to previous work on mixture representations for unimodal and monotone densities. 
The results of \citet{Shepp1962} and \citet{Khintchine1938} imply that unimodal densities can be characterized as mixtures of uniform densities. See \citet{Feller1971}, \citet[p.~171]{Devroye1986}, or \citet{Jones2002} for discussion. \citet{Williamson1956} provided a similar characterization of $k$-monotone densities. As a simple illustration of the connection between these results and our own, suppose that $g$ is a uniform density on a bounded interval $I.$ In that case, the ratio $f/g \propto f$ on $I.$ It follows from Theorem \ref{thm:fmixture_ac} that $f$ is non-increasing and locally absolutely continuous on $I$ if and only if $f$ is a mixture of uniform densities. The following example makes this more concrete.

\begin{example} \label{ex:beta}
Let $F$ and $G$ correspond to $\text{Beta}(1, 3)$ and $\text{Beta}(1, 1)$ distributions, respectively. Then $F \leq_{\text{LR}} G$ and $f/g \in AC_\text{loc}(I).$ The top panel of Fig. \ref{fig:beta} shows a plot of the two beta densities, while the bottom panel shows a plot of the density $u.$ In this example, $\theta = 0.$ 
\end{example}

Again, there is a symmetric counterpart to Theorem \ref{thm:fmixture_ac}. We now suppose $f$ rather than $g$ is positive on $I.$ For each $s \in I,$ let $f_s$ denote the truncated density function with $f_s(x) = f(x)\mathbbm{1}_{[s,\infty)}(x)/\{1 - F(s)\}$ for $x\in \mathbbm{R}.$

\begin{theorem}[$g$ as a mixture] \label{thm:gmixture_ac}
The ratio $g/f$ is non-decreasing and locally absolutely continuous on $I$ if and only if there exists $\omega \in [0,1]$ and an absolutely continuous distribution function $V$ with $V(a+) = 0$ and $V(b-)=1$ such that
\begin{align} \label{gx_mixture}
    g(x) &= \omega f(x) + (1-\omega) \int_a^b f_s(x) \, V'(s) \,  ds, \quad x\in I.
\end{align} Suppose a mixture representation \eqref{gx_mixture} exists. Then $\omega = \lim_{x\downarrow a} g(x)/f(x).$ If $\omega \in [0,1),$ $V$ is also uniquely determined with 
\begin{align*}
    1 - V(x) = \frac{1-F(x)}{1-\omega}\left\{\frac{1-G(x)}{1-F(x)} - \frac{g(x)}{f(x)}\right\}, \quad x \in I.
\end{align*}
\end{theorem}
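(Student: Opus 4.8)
The plan is to deduce Theorem~\ref{thm:gmixture_ac} from Theorem~\ref{thm:fmixture_ac} by a reflection of the real line, which simultaneously swaps the roles of $f$ and $g$, converts upper truncations into lower truncations, and reverses the direction of monotonicity. Concretely, I would set $\tilde{I} = (-b, -a)$ and define reflected densities $\tilde{f}(x) = g(-x)$ and $\tilde{g}(x) = f(-x)$ for $x \in \tilde{I}$. Under this substitution the hypothesis that $f > 0$ on $I$ becomes $\tilde{g} > 0$ on $\tilde{I}$, exactly the positivity condition required by Theorem~\ref{thm:fmixture_ac}.

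Next I would record the three correspondences that make the reduction work. First, since $\tilde{f}(x)/\tilde{g}(x) = g(-x)/f(-x)$, the map $x \mapsto -x$ shows that $g/f$ is non-decreasing on $I$ if and only if $\tilde{f}/\tilde{g}$ is non-increasing on $\tilde{I}$; local absolute continuity is preserved because $x \mapsto -x$ is a smooth diffeomorphism carrying compact subintervals to compact subintervals. Second, a direct computation gives $\tilde{G}(s) = 1 - F(-s)$, whence the upper truncation $\tilde{g}^{s}$ of $\tilde{g}$ satisfies $\tilde{g}^{s}(x) = f_{-s}(-x)$; that is, reflection turns upper truncations of $\tilde{g}$ into lower truncations of $f$. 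Third, I would note the distribution-function identities $\tilde{F}(-t) = 1 - G(t)$ and $\tilde{G}(-t) = 1 - F(t)$, together with $\tilde{f}(-t)/\tilde{g}(-t) = g(t)/f(t)$, which will be needed to translate the closed-form expressions.

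With these in hand I would apply Theorem~\ref{thm:fmixture_ac} to the pair $(\tilde{f}, \tilde{g})$, obtaining $\tilde{\theta} \in [0,1]$ and an absolutely continuous distribution function $\tilde{U}$ on $\tilde{I}$ with
\begin{equation*}
\tilde{f}(x) = \tilde{\theta}\, \tilde{g}(x) + (1 - \tilde{\theta}) \int_{-b}^{-a} \tilde{g}^{s}(x)\, \tilde{U}'(s)\, ds.
\end{equation*}
Substituting $\tilde{f}(x) = g(-x)$, $\tilde{g}(x) = f(-x)$, and $\tilde{g}^{s}(x) = f_{-s}(-x)$, writing $y = -x$, and changing variables $t = -s$ recovers the representation \eqref{gx_mixture} with $\omega = \tilde{\theta}$ and $V(t) = 1 - \tilde{U}(-t)$, since $V'(t) = \tilde{U}'(-t)$. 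The boundary conditions transfer directly: $\tilde{U}((-b)+) = 0$ and $\tilde{U}((-a)-) = 1$ give $V(b-) = 1$ and $V(a+) = 0$. For the uniqueness statements, $\tilde{\theta} = \lim_{x \uparrow -a} \tilde{f}(x)/\tilde{g}(x) = \lim_{x \downarrow a} g(x)/f(x)$ yields the formula for $\omega$, and substituting $x = -t$ into the expression for $\tilde{U}$ supplied by Theorem~\ref{thm:fmixture_ac}, using the identities above, produces the stated formula for $1 - V$. The converse direction, that any representation \eqref{gx_mixture} forces $g/f$ to be non-decreasing and locally absolutely continuous, follows by reflecting back and invoking the converse half of Theorem~\ref{thm:fmixture_ac}.

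I expect the only real obstacle to be bookkeeping rather than mathematics: one must track carefully how each distribution function is replaced by its complement under reflection, which is precisely why the complementary quantities $1 - V$, $1 - F$, and $1 - G$ appear in the final formula, and one must confirm that the one-sided limits defining $\omega$ and the endpoint values of $V$ land on the correct ends of $I$. An alternative would be to reprove Theorem~\ref{thm:fmixture_ac} verbatim with lower truncations in place of upper ones, but the reflection argument is shorter and reuses the earlier result intact.
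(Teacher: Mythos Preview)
Your reflection argument is correct and complete; the bookkeeping you outline (in particular $\tilde G(s)=1-F(-s)$, $\tilde g^{\,s}(x)=f_{-s}(-x)$, $V(t)=1-\tilde U(-t)$, and the limit $\tilde\theta=\lim_{x\uparrow -a}\tilde f/\tilde g=\lim_{y\downarrow a}g/f$) checks out line by line. The paper itself omits the proof entirely, saying only that it ``is similar to that of Theorem~\ref{thm:fmixture_ac},'' so your symmetry reduction is a perfectly valid---and arguably the cleanest---way to make that remark precise.
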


Any density $v$ of $V$ satisfies $v(x) = \left\{1-F(x)\right\}\left\{g(x)/f(x)\right\}'/(1-\omega)$ for almost all $x \in I.$ The following proposition, analogous to Proposition \ref{prop:suff_cond_fx}, provides a simple sufficient condition for verifying that $g/f$ is locally absolutely continuous on $I.$ 
\begin{proposition} \label{prop:suff_cond_gx}
Suppose $g/f$ is continuous and non-decreasing on $I.$ If $g/f$ is differentiable on $I$ except at countably many points, then $g/f \in AC_\text{loc}(I).$
\end{proposition}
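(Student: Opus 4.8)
The claim is the exact mirror of Proposition \ref{prop:suff_cond_fx}, and the plan is to prove it by the same argument with ``non-increasing'' replaced by ``non-decreasing'' (equivalently, by reflecting through $x \mapsto -x$, which turns the non-decreasing ratio $g/f$ into a non-increasing one while preserving continuity, the countable exceptional set, and membership in $AC_\text{loc}$). Writing $r = g/f$ and fixing an arbitrary compact interval $J = [c,d] \subset I$, it suffices to show that $r$ is absolutely continuous on $J$, since $AC_\text{loc}(I)$ is defined by absolute continuity on every such $J$. On $J$ the function $r$ is continuous by hypothesis and non-decreasing, hence of bounded variation.

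The main tool I would use is the Banach--Zarecki theorem: a function on $J$ is absolutely continuous if and only if it is continuous, of bounded variation, and satisfies Luzin's condition (N), i.e.\ it maps Lebesgue-null sets to Lebesgue-null sets. The first two conditions are already in hand, so the entire content of the proposition reduces to verifying condition (N) for $r$, and this is exactly where the differentiability hypothesis enters.

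To check condition (N), let $E \subset I$ denote the countable set of points at which $r$ fails to be differentiable, and let $N \subseteq J$ be any Lebesgue-null set. I would split $N = (N \setminus E) \cup (N \cap E)$ and handle the pieces separately. On $N \setminus E$ the function $r$ is differentiable with finite derivative at every point, so the standard lemma that a map differentiable (with finite derivative) at each point of a set sends null subsets to null sets---proved by a Vitali covering argument, decomposing the set according to the size of the derivative and bounding the image measure of each piece by a constant times its measure---gives that $r(N \setminus E)$ is null. The set $N \cap E$ is countable, so its image $r(N \cap E)$ is countable and hence null. Therefore $r(N) = r(N\setminus E) \cup r(N \cap E)$ is null, establishing condition (N) and completing the proof.

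The main obstacle is the Vitali covering lemma underlying the null-to-null step for the differentiable part; it is standard but needs care in the case of an unbounded derivative. It is also worth noting precisely where countability of $E$ is essential: it guarantees $r(N \cap E)$ is null regardless of the behaviour of $r$ on $E$, and this is exactly the feature that fails for the Cantor function, which is continuous, non-decreasing, and of bounded variation, yet non-differentiable on an uncountable null set and hence not absolutely continuous. An equivalent route avoiding Banach--Zarecki would pass through the Lebesgue--Stieltjes measure $\mu$ of $r$: continuity of $r$ makes $\mu$ atomless, its singular part is carried by the set $\{r' = +\infty\} \subseteq E$, which is countable and therefore $\mu$-null, so the singular part vanishes and $r$ is absolutely continuous on $J$.
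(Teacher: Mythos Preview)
Your argument is correct. It differs from the paper's route, which simply mirrors the proof of Proposition~\ref{prop:suff_cond_fx}: the paper invokes Exercise~3.48 of \citet{Leoni2017} (a continuous function differentiable outside a countable set with Lebesgue integrable derivative is absolutely continuous on a compact interval) together with Corollary~1.25 of the same reference (the a.e.\ derivative of a monotone function is integrable over compact intervals). You instead go through the Banach--Zarecki characterization and verify Luzin's condition~(N) by splitting a null set into its intersection with the countable exceptional set and its complement. Both arguments are standard and equally short once the cited tools are granted; the paper's version is a two-line appeal to textbook results, while yours is more self-contained, makes explicit where countability of the exceptional set is used, and ties the result to the Cantor function counterexample. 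Your alternative Lebesgue--Stieltjes argument is also sound and arguably the cleanest of the three.
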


\section{The density estimation problem} \label{dens_est_section}

\subsection{Overview}

In this section, we revisit the density estimation problem in light of Theorem \ref{thm:fmixture_ac}. We propose a prior distribution based on the mixture representation \eqref{fx_mixture} and establish that it meets the requirements laid out in \citet{Ferguson1973} for prior distributions in nonparametric problems: It has large support on the space of pairs of densities satisfying the monotone ratio constraint, and it leads to computationally tractable posterior inference. Furthermore, posterior consistency holds under reasonable conditions on the prior specification and the true unknown densities. To our knowledge, this is the first posterior consistency result in the literature on order constrained inference. We evaluate the finite sample performance of the proposed density estimation method through a simulation study. Then we apply our method to the evaluation of C-reactive protein as a means of diagnosing bacterial infection in children with systemic inflammatory response syndrome.

\subsection{A prior from Dirichlet process mixtures} \label{prior_description}

As in Section \ref{intro}, suppose we observe two independent samples $X_1, \ldots, X_{n} \sim F$ and $Y_1, \ldots, Y_{m} \sim G$ with $F \leq_{\text{LR}} G.$ Our goal is to estimate the densities $f$ and $g$ subject to the constraint that the ratio $f /g$ is non-increasing. The primary challenge, from a Bayesian perspective, is to define a sufficiently flexible prior distribution for $(f,g)$ such that $f/g$ is non-increasing almost surely. 

Our strategy takes advantage of the mixture representation from Theorem \ref{thm:fmixture_ac}. To do so, we must make mild assumptions on the densities $(f,g).$ Define $\text{LR}_{\text{mix}}(\mathbbm{R})$ as the set of all pairs of densities $(\tilde{f}, \tilde{g})$ on $\mathbbm{R}$ with $\tilde{g} > 0$ such that $\tilde{f}/\tilde{g}$ is non-increasing and locally absolutely continuous on $\mathbbm{R}.$ We assume $(f,g) \in \text{LR}_{\text{mix}}(\mathbbm{R}).$  Theorem \ref{thm:fmixture_ac} then guarantees there exists $\theta \in [0,1]$ and an absolutely continuous distribution function $U$ such that \eqref{fx_mixture} holds. If we assign a prior measure $\Pi$ to $(g, u, \theta),$ the map $T_1: (g, u, \theta) \mapsto (f,g)$ induces a prior measure $\Pi \circ T_1^{-1}$ on $\text{LR}_{\text{mix}}(\mathbbm{R}).$ 

We model the densities $g$ and $u$ flexibly using Dirichlet process mixtures. A nonparametric approach is especially natural in the case of $u$ which, as the density of the latent truncation points, is somewhat unintuitive. 
Let $\text{DP}(P_0, \alpha)$ denote the Dirichlet process with base measure $P_0$ and precision parameter $\alpha > 0,$ and let $\phi_{\mu, \sigma^2}$ denote the density of a Gaussian distribution with mean $\mu$ and variance $\sigma^2.$  We suppose $g$ and $u$ are location-scale mixtures of Gaussians and assign the mixing measures Dirichlet process priors: 
\begin{align*}
g(\cdot \mid P_1) = \int_{\mathbbm{R} \times \mathbbm{R}^+} \phi_{\mu,\sigma^2}(\cdot) P_1(d\mu,d\sigma^2), \quad P_1 \sim \text{DP}(P_{1,0}, \alpha_1) \\ 
u(\cdot \mid P_2) = \int_{\mathbbm{R} \times \mathbbm{R}^+} \phi_{\mu,\sigma^2}(\cdot) P_2(d\mu,d\sigma^2), \quad P_2 \sim \text{DP}(P_{2,0}, \alpha_2).
\end{align*}
Computational details are summarized in Section \ref{computation} and presented more fully in the supplementary material. 

The parameter $\theta$ is also unknown and therefore requires a prior distribution. Choosing a spike-and-slab prior distribution that assigns positive prior probability to the event $\theta = 1$ enables us to evaluate the posterior probability of the null hypothesis $H_0: F = G,$ equivalent to $\theta = 1,$ versus the alternative $H_1: F \leq_{\text{LR}}G,$ equivalent to $\theta \in [0,1).$ The mixture representation \eqref{fx_mixture} thus offers a convenient way of addressing the testing problem alluded to in Section \ref{intro}. We illustrate this approach to testing in an additional data example in the supplementary material.


\subsection{Large support}

Large support of the prior distribution is a defining characteristic of Bayesian nonparametric procedures and a necessary condition for the posterior distribution to concentrate around the truth. Under mild conditions, the prior measure $\Pi \circ T_1^{-1}$ constructed from Dirichlet process mixtures has full support on $\text{LR}_{\text{mix}}(\mathbbm{R}).$ Full support, in this context, means that $\Pi \circ T_1^{-1}$ assigns positive mass to all neighborhoods of any element of $\text{LR}_{\text{mix}}(\mathbbm{R}).$ 

The full support property is defined with respect to a topology on $\text{LR}_{\text{mix}}(\mathbbm{R}).$ A natural choice in density estimation problems is the topology induced by the $L_1$ metric \citep{Devroye1985}. Since $\text{LR}_{\text{mix}}(\mathbbm{R})$ consists of pairs of densities, we consider the topology induced by the $L_1$-based metric $d_1\{(h, \ell), (\tilde{h}, \tilde{\ell})\} = \|h - \tilde{h}\|_1 + \|\ell - \tilde{\ell}\|_1,$ where $h, \ell, \tilde{h},$ and $\tilde{\ell}$ are densities and $\|h - \tilde{h}\|_1 = \int_{\mathbbm{R}} |h(x) - \tilde{h}(x)|\, dx.$ The topology induced by the metric $d_1$ is a particular case of the integrated $L_1$-topology employed by \citet{Pati2013}.

The following theorem states the conditions under which the prior measure $\Pi \circ T_1^{-1}$ has full $d_1$-support on $\text{LR}_{\text{mix}}(\mathbbm{R}).$ 
\begin{theorem}\label{thm:support}
The prior measure $\Pi \circ T_1^{-1}$ has full $d_1$-support on $\text{LR}_{\text{mix}}(\mathbbm{R})$ provided that \\
\indent A1. $P_{1},$ $P_{2},$ and $\theta$ are independent; \\ 
\indent A2. $P_{1,0}$ and $P_{2,0}$ have full support on $\mathbbm{R}\times \mathbbm{R}^+;$ \\ 
\indent A3. The prior distribution for $\theta$ has full support on $[0,1].$
\end{theorem}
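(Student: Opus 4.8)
The plan is to establish full $d_1$-support by exploiting the continuity of the map $T: (g, u, \theta) \mapsto (f, g)$ together with the known large-support properties of Dirichlet process mixtures of Gaussians. First I would reduce the problem to a statement about the preimage: given a target pair $(f_0, g_0) \in LR_{\text{mix}}(\mathbbm{R})$ and $\varepsilon > 0$, Theorem \ref{thm:fmixture_ac} supplies a canonical decomposition $(g_0, u_0, \theta_0)$ with $f_0$ represented as in \eqref{fx_mixture}. The goal becomes showing that $\Pi$ charges every $\Pi$-relevant neighborhood of $(g_0, u_0, \theta_0)$ whose image under $T$ lands inside the $d_1$-ball of radius $\varepsilon$ around $(f_0, g_0)$. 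The key structural fact is that $T$ leaves the second coordinate untouched, so $\|g - g_0\|_1$ is controlled directly, and the first coordinate $f$ depends on $(g, u, \theta)$ through the mixture integral.

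The central analytic step is a continuity (stability) estimate: I would show that if $g$ is $L_1$-close to $g_0$, $u$ is $L_1$-close to $u_0$, and $\theta$ is close to $\theta_0$, then $f = Tg,u,\theta)_1$ is $L_1$-close to $f_0$. The cleanest route is to bound $\|f - f_0\|_1$ by a sum of three terms, perturbing one argument at a time. The $\theta$ term is trivial since the dependence is affine. The $g$ term requires controlling $\|\int g^s u_0'(s)\,ds - \int g_0^s u_0'(s)\,ds\|_1$; here I would use that truncation is an $L_1$-continuous operation away from the degeneracies, integrate against the mixing density $u_0$, and invoke dominated convergence or a direct Fubini-type bound. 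The $u$ term, $\|\int g_0^s \{u'(s) - u_0'(s)\}\,ds\|_1$, is the most delicate: since each $g_0^s$ is a probability density, $\|\int g_0^s (u' - u_0')\,ds\|_1 \le \int \|g_0^s\|_1 |u'(s) - u_0'(s)|\,ds = \|u - u_0\|_1$ by Fubini and the triangle inequality for integrals, which bounds it cleanly by the $L_1$ distance between the mixing densities. Assembling these gives $\|f - f_0\|_1 \le C(\|g - g_0\|_1 + \|u - u_0\|_1 + |\theta - \theta_0|)$ for an absolute constant.

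With the continuity estimate in hand, the probabilistic step is routine given assumptions A1--A3. By A1 the prior factorizes, so it suffices to charge a product neighborhood. The marginal supports come from standard Bayesian nonparametric theory: under A2, a Dirichlet process mixture of Gaussians with a base measure of full support on $\mathbbm{R} \times \mathbbm{R}^+$ has full $L_1$-support on the space of densities (the Kullback--Leibler and $L_1$ support results for DP location-scale Gaussian mixtures, e.g.\ as used in \citet{Pati2013}), which applies to both $g$ and $u$. Under A3, $\theta_0$ lies in the support of its prior. Hence $\Pi$ assigns positive mass to $\{\|g - g_0\|_1 < \delta\} \times \{\|u - u_0\|_1 < \delta\} \times \{|\theta - \theta_0| < \delta\}$ for every $\delta > 0$, and choosing $\delta = \varepsilon/(3C)$ forces the image under $T$ into the $d_1$-ball of radius $\varepsilon$, establishing positive mass on an arbitrary $d_1$-neighborhood of $(f_0, g_0)$.

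I expect the main obstacle to be the $g$-perturbation term in the continuity estimate rather than the probabilistic machinery. Truncation $g \mapsto g^s$ involves dividing by $G(s)$, so closeness in $L_1$ of $g$ to $g_0$ does not immediately transfer to closeness of the truncated densities uniformly over the truncation point $s$; the normalizing factor $G(s)$ can be small for $s$ near $a$, and the DP mixture representation of $u_0$ could place mass there. Handling this cleanly will likely require either restricting to a compact range of truncation points carrying all but $\varepsilon$ of the mass of $u_0$ and controlling the tail separately, or establishing uniform integrability so that the normalization does not destabilize the bound. A secondary technical point is verifying that the target's canonical mixing density $u_0$ (which Theorem \ref{thm:fmixture_ac} only guarantees to exist as a density of the absolutely continuous $U$) is itself approximable in $L_1$ by DP mixtures of Gaussians, i.e.\ that $u_0 \in L_1$ and lies in the $L_1$-support; this is where the assumption $(f_0, g_0) \in LR_{\text{mix}}(\mathbbm{R})$ and the explicit formula for $U$ from Theorem \ref{thm:fmixture_ac} must be used to confirm the required regularity.
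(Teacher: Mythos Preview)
Your plan is correct and matches the paper's approach: prove continuity of $T$ with respect to the product metric, then invoke full $L_1$-support of Gaussian Dirichlet process mixtures together with A1--A3. Your diagnosis of the obstacle is also exactly right: the clean Lipschitz bound $\|f-f_0\|_1 \le C(\|g-g_0\|_1+\|u-u_0\|_1+|\theta-\theta_0|)$ with an \emph{absolute} constant fails precisely because of the $1/G(s)$ normalization, and the paper does just what you propose as the fix---it splits the mixing integral at a point $s_0$ chosen so that $U_0(s_0)$ is small, controls the contribution from $(-\infty,s_0]$ crudely by $\|u-u_0\|_1+2U_0(s_0)$, and on $[s_0,\infty)$ obtains a bound with a multiplicative factor $\int_{s_0}^\infty u_0(s)/\{G(s)G_0(s)\}\,ds$ that is finite once $s_0$ is fixed and $\|g-g_0\|_1$ is small enough to keep $G(s_0)$ bounded below. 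This gives continuity (not global Lipschitz), which is all that is needed; the paper also supplies a short self-contained proof of the $L_1$-support lemma for the DP Gaussian mixture rather than citing it.
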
 

The proof of Theorem \ref{thm:support}, which appears in the supplementary material, depends upon two important and nontrivial lemmas. The first concerns the continuity of the map $T_1: (g, u, \theta) \mapsto (f,g),$ while the second concerns the $L_1$-support of Dirichlet process location-scale mixtures of Gaussian distributions. The second lemma may be of independent interest. 

\subsection{Posterior consistency} \label{consistency}

A strong case can be made, both from a frequentist and a Bayesian perspective, for considering large sample properties of the posterior distribution and, in particular, posterior consistency \citep[p.~126]{Ghosal2017}. Posterior consistency holds when the posterior distribution concentrates around the true parameter of the data-generating distribution as the sample size grows. It guarantees the existence of a point estimator that is consistent in the usual frequentist sense and ensures that the data eventually swamp the prior distribution. Posterior consistency also implies a degree of robustness with respect to the choice of prior distribution and a merging of opinions with increasing information. Despite the strong case for considering large sample properties of the posterior distribution, we are not aware of any posterior consistency results in the literature on order constrained inference. 

We leverage Schwartz's theorem \citep{Schwartz1965, Ghosal2017}, a foundational result in Bayesian nonparametric statistics, to establish the posterior consistency of our density estimation procedure. Schwartz's theorem requires that the data are i.i.d. from a single density, while our density estimation problem involves independent samples from two densities. Thus, in order to apply Schwartz's theorem, we reformulate the density estimation problem using a strategy similar to that employed by \citet{Pati2013} and \citet{Barrientos2017} in the context of densities indexed by predictors. Let $\left(\mathfrak{f}, \mathfrak{g}\right) \in \text{LR}_{\text{mix}}(\mathbbm{R})$ 
and suppose we observe data $\left(Z_1, D_1\right), \ldots, \left(Z_\ell, D_\ell \right)$ where $\ell = n + m$  and each $Z_i$ is drawn from either $\mathfrak{f}$ or $\mathfrak{g}$ depending upon the value of the binary random variable $D_i.$ More precisely, suppose that
\begin{align*}
&Z_i \, \vert \, D_i = 0 \stackrel{\text{ind.}}{\sim} \mathfrak{f} \\  
&Z_i \, \vert \, D_i = 1 \stackrel{\text{ind.}}{\sim} \mathfrak{g} \\  
&D_i \stackrel{\text{i.i.d.}}{\sim} \text{Bernoulli}\left(q\right) 
\end{align*}
with $q$ known. After this reformulation, the data $\left(Z_1, D_1\right), \ldots, \left(Z_\ell, D_\ell \right)$ are an i.i.d. sample from the joint density 
\begin{align*}
\mathfrak{m}\left(z, d\right) = \left[q \cdot \mathfrak{g}(z) \right]^{d} \left[\left(1-q\right) \cdot \mathfrak{f}(z)\right]^{1-d}
\end{align*} and we are in a position to apply Schwartz's theorem. 

Before we do so, we must introduce notation and define relevant concepts. 
Let $$\text{LR}_\text{joint}(\mathbbm{R}) = \left\{ \tilde{m}\left(z,d\right) = \left[q \cdot \tilde{g}(z) \right]^{d} [\left(1-q\right) \cdot \tilde{f}(z)]^{1-d} \mathrel{\Big|}  (\tilde{f}, \tilde{g}) \in \text{LR}_\text{mix}\left(\mathbbm{R}\right) \right\},$$ i.e. $\text{LR}_\text{joint}(\mathbbm{R})$ is the set of joint densities $\tilde{m}$ constructed from a pair of likelihood ratio ordered densities $\left(\tilde{f}, \tilde{g}\right)\in \text{LR}_\text{mix}(\mathbbm{R})$ as described above. The posterior distribution associated with the random joint density $$m(z,d) = \left[q \cdot g(z) \right]^{d} \left[\left(1-q\right) \cdot f(z)\right]^{1-d}$$ is \textit{strongly consistent} at $\mathfrak{m}$ if, almost surely, the posterior probability of any neighborhood $N$ of $\mathfrak{m}$ converges to one as the sample size $\ell$ goes to infinity. (Following \citet[p.~123]{Ghosal2017}, ``strong" refers to almost sure convergence rather than the topology on the parameter space.)

The concept of posterior consistency is defined with respect to a specific topology. In this work, we focus on consistency with respect to the weak topology \citep[p.~507]{Ghosal2017} on $\text{LR}_\text{joint}(\mathbbm{R}).$ The map $T_2: \text{LR}_\text{mix}(\mathbbm{R}) \to \text{LR}_\text{joint}(\mathbbm{R})$ implied by the discussion above induces a topology on $\text{LR}_\text{mix}(\mathbbm{R}).$ For any neighborhood $N$ of $\mathfrak{m},$ we can define a neighborhood 
$T_2^{-1}\left(N\right)$ of $(\mathfrak{f}, \mathfrak{g}).$ Strong consistency of the posterior associated with $m$ at $\mathfrak{m}$ with respect to the weak topology on $\text{LR}_\text{joint}(\mathbbm{R})$ implies strong consistency of the posterior associated with $(f,g)$ at $(\mathfrak{f}, \mathfrak{g})$ with respect to the topology induced on $\text{LR}_\text{mix}(\mathbbm{R}).$ 

Schwartz's theorem provides a sufficient condition for strong consistency of the posterior associated with $m$ at $\mathfrak{m}$ with respect to the weak topology on $\text{LR}_\text{joint}(\mathbbm{R}):$ The prior distribution $\Pi \circ T_2^{-1}$ on $\text{LR}_\text{joint}(\mathbbm{R})$ must assign positive probability to any Kullback-Leibler neighborhood of $\mathfrak{m}.$ That is, for any $\epsilon > 0,$ we must have that 
\begin{align} \label{KLproperty}
\Pi \circ T_2^{-1}\left\{\text{KL}\left(\mathfrak{m},m\right) < \epsilon \right\} > 0
\end{align} where $\text{KL}(\cdot, \cdot)$ denotes the Kullback-Leibler divergence. 

The following theorem establishes conditions on $\Pi$ and $(\mathfrak{f}, \mathfrak{g})$ such that the Kullback-Leibler property \eqref{KLproperty} is satisfied and our density estimation procedure achieves posterior consistency. Provided that $\theta_0 = \lim_{x \uparrow \infty} \mathfrak{f}(x)/\mathfrak{g}(x) \in [0,1),$ there exists a density $\mathfrak{u}$ that uniquely determines $\mathfrak{f}$ through the mapping $T_1: (\mathfrak{g}, \mathfrak{u}, \theta_0) \mapsto (\mathfrak{f},\mathfrak{g}).$ Thus, conditions on $(\mathfrak{f},\mathfrak{g})$ may also be stated in terms of $(\mathfrak{g}, \mathfrak{u}, \theta_0).$ The proof, which appears in the appendix, is quite involved. 

\begin{theorem}\label{thm:consistency}
Let $\Pi$ be chosen to satisfy conditions A1, A2, and A3 of Theorem \ref{thm:support}. Suppose that \\ 
\indent B1. The densities $\left(\mathfrak{f}, \mathfrak{g}\right) \in \text{LR}_\text{mix}(\mathbbm{R})$ with $\theta_0 = \lim_{x \uparrow \infty} \mathfrak{f}(x)/\mathfrak{g}(x) \in [0,1);$ \\  
\indent B2. The density ratio $\mathfrak{f}/\mathfrak{g}$ is bounded above. 

\noindent Furthermore, suppose that $\mathfrak{g} , \mathfrak{u} $ satisfy the following conditions (adapted from Theorem 3.2 in \citet{tokdar2006posterior}): \\ 
\indent B3.\, The densities $\mathfrak{g} , \mathfrak{u} $ are continuous, nowhere zero, and bounded above; \\ 
\indent B4.\,  $\left|\int_{\mathbbm{R}} \mathfrak{g} (x) \log \mathfrak{g} (x) \, dx \right| <\infty$ and $\left|\int_{\mathbbm{R}} \mathfrak{u} (x) \log \mathfrak{u} (x) \, dx \right| <\infty$;\\
\indent B5.\,  There exist $\eta_1,\,\eta_2>0$ such that $\int_{\mathbbm{R}} |x|^{2(\eta_1+1)} \mathfrak{g} (x) \, dx <\infty$ and $\int_{\mathbbm{R}} |x|^{2(\eta_2+1)} \mathfrak{u} (x) \, dx <\infty$; \\
\indent B6.\, $\int_{\mathbbm{R}} \mathfrak{g} (x) \log \frac{\mathfrak{g} (x)}{\psi_1(x)} \, dx < \infty$ and $\int_{\mathbbm{R}} \mathfrak{u} (x) \log \frac{\mathfrak{u} (x)}{\psi_2(x)}  \, dx <\infty$ where $\psi_1(x) = \inf_{t\in[x-1,x+1]}\mathfrak{g} (t)$ and $\psi_2(x) = \inf_{t\in[x-1,x+1]}\mathfrak{u} (t)$.


\noindent In that case, the posterior distribution is strongly consistent at $\mathfrak{m}$ with respect to the weak topology on $\text{LR}_\text{joint}(\mathbbm{R}).$ As discussed above, it follows that the posterior distribution is strongly consistent at $(\mathfrak{f}, \mathfrak{g})$ with respect to the topology induced on $\text{LR}_\text{mix}(\mathbbm{R}).$ 
\end{theorem} 

We briefly remark upon assumptions B4, B5, and B6. The condition of assumption B4 is satisfied by most common densities. Assumption B5 introduces important moment conditions. Assumption B6 imposes regularity conditions. In the statement of assumption B6, the interval $[x-1,x+1]$ could be replaced with the interval $[x-c,x+c]$ for any $c > 0.$
 
\subsection{Computational implementation} \label{computation}

The posterior distribution is not available in closed form, but can be approximated via Markov chain Monte Carlo methods. We propose a slice-within-Gibbs sampler based on a truncated version of Sethuraman's stick-breaking representation of the Dirichlet process \citep{Sethuraman1994}. The implementation depends upon whether or not we intend to test the hypothesis $H_0: F = G$ against the alternative $H_1: F \leq_{\text{LR}}G$ using the spike-and-slab prior described in Subsection \ref{prior_description}. We briefly outline both strategies, leaving the full details to the supplementary material. 

If we are uninterested in the hypothesis testing problem, 
we assign a beta prior distribution distribution to $\theta.$ To sample $\theta$ from its full conditional distribution, it is helpful to introduce  binary latent variables that associate each observation $X_i$ with one of the two components in the mixture representation \eqref{fx_mixture}. The full conditional distribution for $\theta$ is then a beta distribution, while the full conditional distributions of the latent variables are Bernoulli distributions. To sample the atoms and weights of $P_1$ and $P_2$ from their full conditional distribution, we use the slice sampler described in Figure 8 of \citet{Neal2003}. We approximate the integral in the mixture representation of $f$ by numerical integration. As an alternative to numerical integration, we could introduce latent variables representing random truncation points.

More care is required if we are interested in the hypothesis testing problem. In that case, we assign a spike-and-slab prior to $\theta$ with a beta distribution for the slab. 
The Gibbs update for $\theta$ is slightly more involved than before but still relies upon standard techniques. When updating the atoms and weights of $P_1$ and $P_2,$ there are two cases. If $\theta < 1$, we sample the atoms and weights as described above. If $\theta=1$, the data are uninformative regarding $P_2.$ Thus, we sample the atoms and weights of $P_1$ using the slice sampling approach, but sample the atoms and weights of $P_2$ from their prior distribution. 
Sampling from the prior distribution can lead to values far outside the region where the posterior distribution is concentrated, negatively impacting convergence. To address this issue, we take advantage of the pseudo-prior framework proposed by \citet{Carlin1995}. 

\subsection{Simulation study} \label{simstudy}

We now evaluate the performance of the Bayesian nonparametric density estimation method introduced in Section \ref{prior_description} through a simulation study. Before presenting the results, we describe the simulation scenarios, the competing baseline method, the performance measures, and other details. 

The four simulation scenarios include two cases with a decreasing likelihood ratio and two cases with a constant likelihood ratio. To construct the data generating densities $\left(\mathfrak{f}, \mathfrak{g}\right) \in \text{LR}_\text{mix}(\mathbbm{R}),$ we specify $(\mathfrak{g}, \mathfrak{u}, \theta_0).$ By Theorem \ref{thm:fmixture_ac}, the pair $\left(\mathfrak{f}, \mathfrak{g}\right) \in \text{LR}_\text{mix}(\mathbbm{R})$ is uniquely determined through the mapping $T_1: (\mathfrak{g}, \mathfrak{u}, \theta_0) \mapsto (\mathfrak{f},\mathfrak{g}).$ 
The four scenarios are as follows: 

\paragraph{Scenario I:} In this scenario,
\begin{align*}
        \mathfrak{g}(\cdot) = 0.25 \cdot \phi_{1,\, 0.75^2}(\cdot) + 0.75 \cdot \phi_{-1,\, 0.75^2}(\cdot), \quad 
        \mathfrak{u}(\cdot) = \phi_{-0.5,\, 1}(\cdot), \quad \theta_0 = 0.1
\end{align*}
where $\phi_{\mu, \sigma^2}$ denotes the density of a Gaussian distribution with mean $\mu$ and variance $\sigma^2.$ Both $\mathfrak{f}$ and $\mathfrak{g}$ are unimodal with $\mathfrak{f}$ nearly symmetric. 
 
\paragraph{Scenario II:} In this scenario, $\mathfrak{f} = \mathfrak{g}$ with $\mathfrak{g}$ as defined in Scenario I. 

\paragraph{Scenario III:} In this scenario, 
\begin{align*}
        \mathfrak{g}(\cdot) &= 0.2 \cdot \phi_{1.5,\, 0.75^2}(\cdot) + 0.2 \cdot\phi_{-1.5,\, 0.75^2}(\cdot) + 0.6 \cdot \phi_{0.5,\, 1, \, 2}^{\rm skew}(\cdot), \\
        \mathfrak{u}(\cdot) &= 0.5 \cdot \phi_{1.5,\, 0.75^2}(\cdot) + 0.5 \cdot\phi_{-1.5,\, 0.75^2}(\cdot), \quad \theta_0 = 0.5
\end{align*}
where $\phi_{\xi,\, \omega, \, \alpha}^{\rm skew}$ is the density of a skew-normal distribution with location $\xi,$ scale $\omega,$ and shape $\alpha.$  Both $\mathfrak{f}$ and $\mathfrak{g}$ are bimodal and asymmetric.

\paragraph{Scenario IV:} In this scenario,  $\mathfrak{f} = \mathfrak{g}$ with $\mathfrak{g}$ as defined in Scenario III. \\ 

\noindent For each of the four scenarios, we consider a range of sample sizes. In particular, the sample sizes of the groups are balanced ($n = m$) with $n \in \{100, 250, 500\}$. For each scenario and sample size $n$, we simulate 100 data sets. 

To assess the benefits of incorporating the monotonicity constraint, we compare the results obtained with our approach to those obtained by modeling $(\mathfrak{f}, \mathfrak{g})$ as independent Dirichlet process mixtures of Gaussian densities. 

We recommend standardizing the data by subtracting the pooled sample mean and dividing by the pooled standard deviation. The standardized data fall within the range $(-4,4)$ with high probability, allowing us to define default prior distributions on the hyperparameters of the Dirichlet processes. We choose a normal inverse-gamma distribution as the base measure and set the precision parameter equal to one. The prior on $\theta$ is a spike-and-slab mixture with the spike at one and a Uniform$(0, 1)$ slab distribution. A detailed description of the prior specification is provided at the end of the section on computation in the supplementary material. 

The comparison between constrained and unconstrained approaches is based on the following posterior expected $L_1$ distances:
    \begin{align*}
    L_1(\mathfrak{f}) &= E(\|f - \mathfrak{f}\|_1 \mid X_1,\ldots, X_n, Y_1, \ldots, Y_m) \\
    L_1(\mathfrak{g}) &= E(\|g - \mathfrak{g}\|_1 \mid X_1,\ldots, X_n, Y_1, \ldots, Y_m) \\
    L_1(\mathfrak{f}, \mathfrak{g}) &= E(d_1\{(f,g), (\mathfrak{f},\mathfrak{g})\} \mid X_1,\ldots, X_n, Y_1, \ldots, Y_m) \\
    L_1(\mathfrak{f}/\mathfrak{g}) &= E(\|f/g - \mathfrak{f}/\mathfrak{g}\|_1 \mid X_1,\ldots, X_n, Y_1, \ldots, Y_m).
    \end{align*}
Because there is no guarantee that the $L_1$ distance between likelihood ratios is bounded, we compute $L_1(\mathfrak{f}/\mathfrak{g})$ over the range of the observed data. That is, we calculate
        \begin{align*}
    L_1(\mathfrak{f}/\mathfrak{g}) &
    & = E\left( \left. \int_{{\rm Range}(X_1,\ldots, X_n, Y_1, \ldots, Y_m)} \left|f(x)/g(x) - \mathfrak{f}(x)/\mathfrak{g}(x)\right|\, dx \right| X_1,\ldots, X_n, Y_1, \ldots, Y_m\right).
    \end{align*}

Table~\ref{tab:simulations} reports the posterior expected $L_1$ distances averaged over the simulated data sets and broken down by simulation scenario, sample size, and method. Each individual posterior expectation was approximated based on 1000 Markov Chain Monte Carlo draws obtained after 2000 iterations of burn-in and after thinning to retain every 10th iteration. 


The results reported in Table~\ref{tab:simulations} substantiate the value of the proposed density estimation method and affirm the benefits of incorporating the monotonicity constraint. For all scenarios and sample sizes, the constrained method provides better estimates of the densities $\left(\mathfrak{f},\mathfrak{g}\right)$ and the density ratio $\mathfrak{f}/\mathfrak{g}$ than the unconstrained method. The difference between the two methods is most pronounced when estimating the density ratio $\mathfrak{f}/\mathfrak{g}$. As we should expect, the average $L_1$ distances decrease as the sample size increases. Under Scenarios II and IV where $H_0: F  = G$ holds, the posterior probability of $H_0$ is large and increases with the sample size. Under Scenarios I and II where $H_1: F \leq_{\text{LR}} G$ holds, the posterior probability of $H_0$ is small and decreases with the sample size. 

In the supplementary material, we report other measures of performance. For example, we evaluate the posterior mean $E(f \mid X_1,\ldots, X_n, Y_1, \ldots, Y_m)$ as a point estimate of $\mathfrak{f}$ by looking at the $L_2$ distance $L_2(\mathfrak{f}) = \|E(f \mid X_1,\ldots, X_n, Y_1, \ldots, Y_m) - \mathfrak{f}\|_2$ averaged over the simulated data sets. The main conclusion is the same: In each of the scenarios considered, incorporating the monotonicity constraint leads to improved estimates. 


\begin{table} 
\centering  
\caption{Average posterior expected $L_1$ distances and posterior probability of $H_0$.} \label{tab:simulations}
\begin{tabular}{c|c|cccc|cccc|c}
         &   & \multicolumn{4}{c|}{LR order constrained model}  & \multicolumn{4}{c|}{Unconstrained model}  & Prob. of\\  
S & $n$ & $\bar L_1(\mathfrak{f})$ & $\bar L_1(\mathfrak{g})$  & $\bar L_1(\mathfrak{f}, \mathfrak{g}) $  & $\bar L_1(\mathfrak{f}/\mathfrak{g}) $  & $\bar L_1(\mathfrak{f})$ & $\bar L_1(\mathfrak{g})$  & $\bar L_1(\mathfrak{f}, \mathfrak{g}) $  & $\bar L_1(\mathfrak{f}/\mathfrak{g})$ & $H_0:f=g$ \\ 
  \hline
  I & 100 & 0.012 & 0.015 & 0.027 & 2.184 & 0.012 & 0.017 & 0.029 & 12.680 & 0.002 \\ 
  I & 250 & 0.008 & 0.010 & 0.017 & 1.791 & 0.008 & 0.011 & 0.019 & 5.028 & 0.001 \\ 
  I & 500 & 0.006 & 0.007 & 0.013 & 1.504 & 0.006 & 0.008 & 0.014 & 2.711 & 0.001 \\ 
  \hline
  II & 100 & 0.011 & 0.011 & 0.022 & 0.005 & 0.016 & 0.015 & 0.031 & 0.448 & 0.816 \\ 
  II & 250 & 0.007 & 0.007 & 0.014 & 0.004 & 0.010 & 0.010 & 0.020 & 0.328 & 0.850 \\ 
  II & 500 & 0.005 & 0.005 & 0.010 & 0.000 & 0.007 & 0.007 & 0.014 & 0.279 & 0.913 \\ 
  \hline
  III & 100 & 0.019 & 0.018 & 0.037 & 0.284 & 0.024 & 0.025 & 0.050 & 1.030 & 0.062 \\ 
  III & 250 & 0.011 & 0.011 & 0.021 & 0.199 & 0.014 & 0.015 & 0.029 & 0.475 & 0.002 \\ 
  III & 500 & 0.008 & 0.008 & 0.015 & 0.168 & 0.010 & 0.009 & 0.019 & 0.328 & 0.001 \\ 
  \hline
  IV & 100 & 0.018 & 0.018 & 0.035 & 0.008 & 0.026 & 0.026 & 0.052 & 0.524 & 0.821 \\ 
  IV & 250 & 0.011 & 0.011 & 0.022 & 0.008 & 0.015 & 0.016 & 0.031 & 0.448 & 0.855 \\ 
  IV & 500 & 0.008 & 0.008 & 0.015 & 0.003 & 0.010 & 0.010 & 0.021 & 0.343 & 0.896 \\ 
  \end{tabular}
\end{table}


\subsection{Analysis of C-reactive protein data} \label{subsec:CHOP}

Following \citet{Westling2023}, we apply our method to data from a study by \citet{Downes2020}. The study enrolled children who presented symptoms of systemic inflammatory response syndrome (SIRS) at the pediatric intensive care unit of the Children's Hospital of Philadelphia between August 10, 2012 and June 9, 2016. Rapid treatment with broad-spectrum antibiotics is associated with reduced morbidity and mortality in SIRS patients with a bacterial infection. However, many patients with SIRS do not have a bacterial infection, and overuse of broad-spectrum antibiotics presents risks for the individual patient and the broader community. \citet{Downes2020} investigated whether the commonly available biomarker C-reactive protein can be used to guide antibiotic discontinuation in pediatric SIRS patients. 

The data set analyzed by \citet{Westling2023} includes pairs $\{Z_i, D_i\}$ of the C-reactive protein measurement $Z_i,$ in milligrams per deciliter, and the infection status $D_i$ of each patient for which this information was recorded. Although some patients had multiple SIRS episodes, these are treated as independent, because all such episodes occurred at least 30 days apart. Due to the inaccuracy of the assay for low C-reactive protein concentrations, values below $0.5$ were rounded up to equal $0.5.$ In contrast to \citet{Westling2023}, we exclude these measurements.  In the end, we are left with 454 C-reactive protein measurements. Of these, 194 come from patients with a bacterial infection, while 260 come from patients without a bacterial infection.

\begin{figure}[htbp]
\centerline{\includegraphics[width=6in]{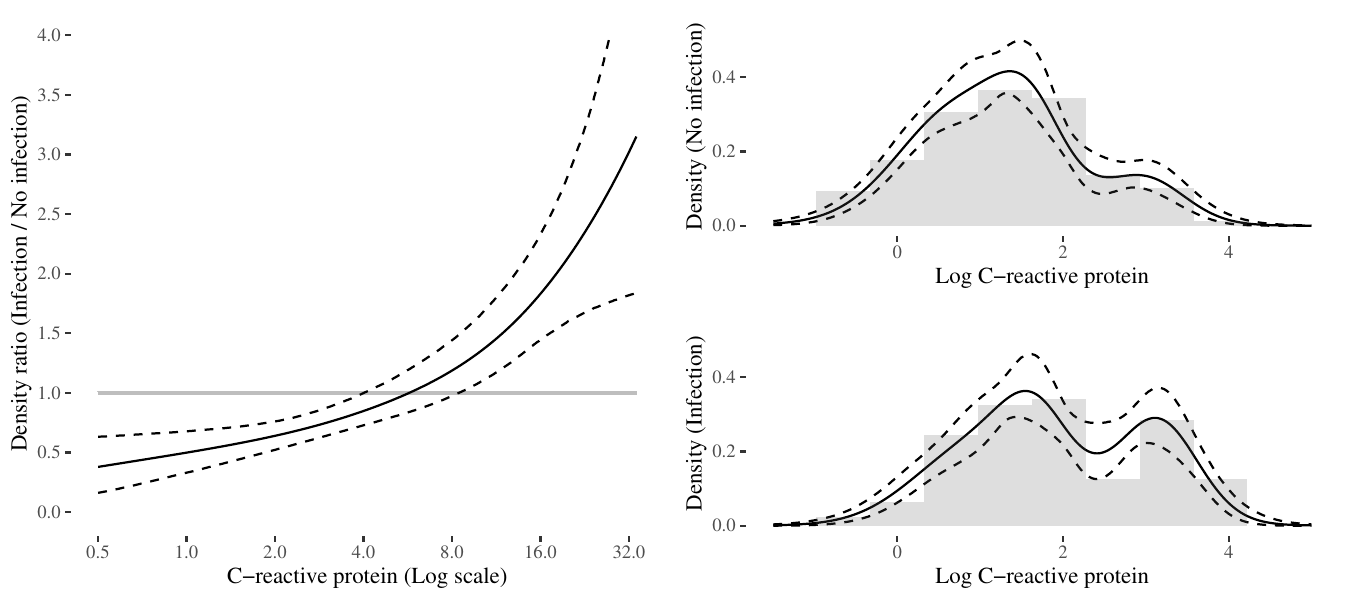}}
\caption{Left: The posterior mean of the density ratio $g/f$ (solid) along with 95\% pointwise credible intervals (dashed). The logarithmic scale is meant to ease comparison with Fig. 4 of \citet{Westling2023}. Right: Density estimates (solid) and 95\% pointwise credible intervals (dashed) plotted on top of histograms calculated from the log C-reactive protein measurements.}
\label{fig:CHOP_plot}
\end{figure}

High concentrations of C-reactive protein are associated with bacterial infection in pediatric SIRS patients \citep{Downes2017}. Thus, there is reason to expect that a likelihood ratio ordering holds. Let $f$ and $g$ denote the densities of the log C-reactive protein measurements conditional on non-infection and infection, respectively. We conduct posterior inference for $f$ and $g$ using the prior distribution described earlier in this section. Because the likelihood ratio order is preserved under monotonic transformations \citep[Theorem 1.C.8]{Shaked2007}, we can make inferences about the densities of the untransformed C-reactive protein measurements without issue through a change of variables. Details on the choice of prior hyperparameters and the computational implementation can be found in the supplementary material.

Figure \ref{fig:CHOP_plot} presents the results of our analysis, allowing comparison with \citet{Westling2023} and \citet{Downes2020}. The emphasis in \citet{Westling2023} is on the ratio 
\begin{align*} 
    \frac{g(z)}{f(z)} &= \frac{\text{pr}(D_i=1 \mid Z_i=z)/\text{pr}(D_i=1)}{\text{pr}(D_i=0 \mid Z_i=z)/\text{pr}(D_i=0)}.
\end{align*} 
As this equation makes clear, the density ratio can be interpreted as an odds ratio. In particular, $g(z)/f(z)$ is the multiplicative factor by which we adjust the prior odds ${\text{pr}(D=1)/\text{pr}(D=0)}$ upon observing $Z=z.$ The left panel of Fig. \ref{fig:CHOP_plot} displays the posterior mean of $g/f$ along with 95\% pointwise credible intervals. 
The posterior mean and credible intervals are qualitatively similar to the estimates and intervals presented in Fig. 4 of \citet{Westling2023}, except the former are smooth while the latter are piecewise constant. Our credible intervals are narrower than the corresponding confidence intervals of \citet{Westling2023}. This is unsurprising, because we make stronger assumptions on the distributions of interest. The results indicate that C-reactive protein measurements below $4$ are associated with decreased odds of bacterial infection, while C-reactive protein measurements above $8$ are associated with increased odds of bacterial infection. C-reactive protein measurements between $4$ and $8$ are not clearly associated with either increased or decreased odds of bacterial infection. These conclusions align with the algorithm described in \citet{Downes2017} which suggests discontinuing antibiotics in pediatric SIRS patients with C-reactive protein measurements below $4,$ provided that those patients also meet certain other criteria. The right panel of Fig. \ref{fig:CHOP_plot} displays the density estimates along with 95\% pointwise credible intervals. To our knowledge, there is no other method for density estimation under a likelihood ratio order constraint that also allows for uncertainty quantification. 
That there were no Markov chain Monte Carlo iterations with $\theta = 1$ is strong evidence against the null hypothesis $H_0: F = G.$

The C-reactive protein data were obtained through a data sharing agreement with the Children's Hospital of Philadelphia that does not permit the data to be made public. 
Therefore, we present an additional application in the supplementary material based on the publicly available survival data analyzed in \citet{Hoel1972} and \citet{Carolan2005}.

\section{Connections to other statistical problems}

Previous work on statistical methods for likelihood ratio ordered distributions has highlighted connections with seemingly unrelated statistical tasks. 
\citet{Dykstra1995} pointed out how their methods can be used to compare trends in nonhomogenous Poisson processes, while \citet{Yu2017} applied their density estimation method to shape-constrained estimation of receiver operating characteristic curves. The connection between the likelihood ratio order and selection models is also worthy of comment. 

Accounting for selection bias is a central problem in statistics. The problem is often formulated as follows: We are interested in the distribution of a random variable $Z$ with density $g,$ but we only observe $Z$ with probability $w(Z).$ The conditional density of $Z,$ given that it is observed, is $f(z) \propto w(z) g(z),$ and the function $w$ is called the biasing or weight function. 

In many applications, we expect the weight function $w$ to be monotone. 
\citet{Sun1997} gave examples from quality control, sample surveys, astronomy, wildlife biology, and oil discovery. In meta-analysis, we expect that studies with small p-values have a higher probability of being published than those with large p-values \citep{Bayarri1987, Iyengar1988, Silliman1997, Silliman1997a}. A monotone weight function $w(x) \propto x$ gives rise to so-called size-biased or length-biased data \citep{Cox1969, Patil1978}. 


If the weight function $w$ is monotone, the ratio $f/g \propto w$ is likewise monotone. In that case, the mixture representations introduced in this article can be brought to bear on the problem of selection bias. A semi-parametric approach along the lines of \citet{Sun1997} or \citet{Lee2001} would be a natural direction for future work. 

\section*{Acknowledgement}

We would like to acknowledge Craig Boge, Susan E. Coffin, and Kevin J. Downes of the Children's Hospital of Philadelphia for providing the C-reactive protein data.

\bibliographystyle{apalike}
\bibliography{references}

\clearpage

\begin{center}
{\huge \textbf{Supplementary material}}
\end{center}
\appendix


\section*{Proof of Theorem~\ref{thm:fmixture_finite} } \label{subsec:f_finite}

\subsection*{The backward direction}

Suppose that there exists $\theta \in [0,1]$ and a mass function $u$ with $\sum_{x_j \in \mathcal{I} \setminus \{x_d\}} u(x_j) = 1$ 
such that \eqref{fmixture_finite} holds. Our goal is to show that the ratio $r = f/g$ is non-increasing on $\mathcal{I}$. The ratio can be expressed as
\begin{align} \label{discrete_ratio}
    r(x_i) &=    \theta  + (1-\theta) \,  \sum_{\substack{x_j \in \mathcal{I} \setminus \{x_d\} \\  x_j \ge x_i }} \frac{u(x_j)}{G(x_j)}, \quad x_i \in \mathcal{I}.
\end{align} 
As $x_i$ increases, $r(x_i)$ is the sum of fewer non-negative terms. Thus, the ratio $r$ is non-increasing.

\subsection*{The forward direction}

Suppose the ratio $r=f/g$ is non-increasing on $\mathcal{I}$. Our goal is to find $\theta \in [0,1]$ and a mass function $u$ with $\sum_{x_j \in \mathcal{I} \setminus \{x_d\}} u(x_j) = 1$ such that \eqref{fmixture_finite} holds. 

Let $\theta = r(x_d) = \min_{x_i \in \mathcal{I}} r(x_i).$ One can verify that $\theta \in [0,1]$ with $\theta = 1$ if only if $f=g$ on $\mathcal{I}.$ 

We turn our attention to finding a mass function $u$ with $\sum_{x_j \in \mathcal{I} \setminus \{x_d\}} u(x_j) = 1$ such that \eqref{fmixture_finite} holds. There are two cases: $\theta = 1$ and $\theta \in [0,1).$ When $\theta = 1,$ we know $f=g$ on $\mathcal{I}.$ In that case, \eqref{fmixture_finite} holds for any mass function $u$ with $\sum_{x_j \in \mathcal{I} \setminus \{x_d\}} u(x_j) = 1.$ When $\theta \in [0,1),$ we set $ u(x_j) = {G(x_j)}\left\{r(x_j) - r(x_{j+1})\right\}/(1-\theta)$ for $x_j \in \mathcal{I}\setminus\{x_d\}.$

First, we verify that $u$ is a mass function. It is non-negative because $G$ is a distribution function, $r$ is non-increasing, and $\theta \in [0,1).$ We also have that 
\begin{align*}
    (1-\theta)  \sum_{x_j \in \mathcal{I} \setminus \{x_d\}} u(x_j) 
    &= \sum_{x_j \in \mathcal{I} \setminus \{x_d\}} G(x_j) \left\{ r(x_j) - r(x_{j+1}) \right\}  \\
    &=  - \underbrace{ G(x_{d-1}) \, \theta}_{\theta - f(x_d)} + \underbrace{G(x_1) r(x_1)}_{f(x_1)} +   \underbrace{\sum_{x_j \in \mathcal{I} \setminus \{x_1, x_d\}}  r(x_j) \left\{ 
    G(x_j) - 
    G(x_{j-1}) \right\}}_{\sum_{x_j \in \mathcal{I} \setminus \{x_1, x_d\}} f(x_j)}  \\
    &=  \sum_{x_j \in \mathcal{I}} f(x_j)  - \theta \\
    &= 1 - \theta,
\end{align*}
implying that $\sum_{x_j \in \mathcal{I} \setminus \{x_d\}} u(x_j)  = 1.$

Now we verify that \eqref{fmixture_finite} holds. 
For $x_i \in \mathcal{I},$ we have
\begin{align*}
     \theta g(x_i) +  (1-\theta) \sum_{x_j \in \mathcal{I} \setminus \{ x_d \}} u(x_j) g^{x_j}(x_i) &= \theta g(x_i) +  g(x_i) \sum_{\substack{x_j \in \mathcal{I} \setminus \{x_d\} \\ x_j \ge x_i } } \{r(x_j) - r(x_{j+1})\} \\
     &= \theta g(x_i) + g(x_i)\{ r(x_i) - \theta \} \\ 
     &= f(x_i).
\end{align*}

\subsection*{Uniqueness}

Suppose that the equivalent conditions of Theorem \ref{thm:fmixture_finite} hold. That $\theta = r(x_d)$ follows from evaluating \eqref{discrete_ratio} at $x_d.$ Now suppose $\theta \in [0,1).$ Our goal is to show that $u$ is uniquely determined. Toward that end, suppose there exists another probability mass function $\tilde{u}$ with $\sum_{x_j \in \mathcal{I} \setminus \{x_d\}} \tilde{u}(x_j) = 1$ such that 
\begin{align*}
    f(x_i) = \theta g(x_i) + (1-\theta) \sum_{x_j \in \mathcal{I} \setminus \{ x_d \}}  g^{x_j}(x_i) \tilde{u}(x_j), \quad x_i \in \mathcal{I}.
\end{align*}
Then we have
\begin{align*}
0 &= f(x_i) - f(x_i) =  (1-\theta)  g(x_i)  \sum_{\substack{ x_j \in \mathcal{I} \setminus \{x_d\} \\ x_j \ge x_i } } \frac{u(x_j) - \tilde{u}(x_j)} {G(x_j)}, \quad x_i \in \mathcal{I}.
\end{align*}
Because $\theta \in [0,1)$ and $g > 0$ on $\mathcal{I},$ the sum on the right hand side is equal to zero for each $x_i \in \mathcal{I}.$ By taking $x_i = x_{d-1},$ we see that $u(x_{d-1}) = u(x_{d-1}).$ By taking $x_i = x_{d-2},$ we then see that $u(x_{d-2}) = u(x_{d-2}).$ Continuing in this way, we see that $u = \tilde{u}$ on $\mathcal{I} \setminus \{x_d\}$ and conclude that $u$ is uniquely determined. 
 


\section*{Proof of Theorem~\ref{thm:gmixture_finite}} \label{subsec:g_finite}

The proof is omitted because it is similar to that of Theorem~\ref{thm:fmixture_finite}.

\section*{Proof of Theorem \ref{thm:fmixture_ac}}

\subsection*{The backward direction}

Suppose that there exists $\theta \in [0,1]$ and an absolutely continuous distribution function $U$ with $U(a+)=0$ and $U(b-)=1$ such that \eqref{fx_mixture} holds. Our goal is to show that the ratio $r = f/g$ is non-increasing and locally absolutely continuous on $I.$

First, we show that $r$ is non-increasing on $I.$ The ratio can be expressed as 
\begin{align} \label{ratio}
    r(x) &= \theta + (1-\theta) \int_{x}^b \frac{u(s)}{G(s)}\, ds, \quad x\in I.
\end{align}
The integrand is non-negative and the region of integration shrinks as $x$ increases. Thus, the ratio $r$ is non-increasing. 

 Next, we show that $r \in AC_\text{loc}(I).$ Let $J = [c,d] \subset (a,b)$ be given. When $\theta = 1,$ the ratio $r$ is constant and thus absolutely continuous on $J.$ Now suppose that $\theta \in [0,1)$ and define 
 \begin{align*}
     h(x) = \int_c^x \frac{u(s)}{G(s)}\, ds, \quad x\in J.
 \end{align*} 
 The function $h$ is absolutely continuous on $J$ \citep[Theorem 3.16]{Leoni2017}.  For all $x \in J,$ we have that 
 \begin{align*}
     \int_c^b \frac{u(s)}{G(s)}\, ds &= \int_c^x \frac{u(s)}{G(s)}\, ds + \int_x^b \frac{u(s)}{G(s)}\, ds \\ 
     &= h(x) + \frac{r(x)-\theta}{1-\theta}.
 \end{align*} After rearranging, we see that $r(x) =\kappa_1 + \kappa_2 \cdot h(x)$ for constants {$\kappa_1 = \theta + (1-\theta) \int_c^b \frac{u(s)}{G(s)}\, ds$} and $\kappa_2 = -(1-\theta).$ Elementary properties of absolutely continuous functions allow us to conclude that $r$ is absolutely continuous on $J$ \citep[Exercise 3.7]{Leoni2017}. Thus,  $r \in AC_\text{loc}(I).$

\subsection*{The forward direction}

Suppose the ratio $r = f/g$ is non-increasing and locally absolutely continuous on $I.$ Our goal is to find $\theta \in [0,1]$ and an absolutely continuous distribution function $U$  with $U(a+)=0$ and $U(b-)=1$ such that \eqref{fx_mixture} holds.


Let $\theta = r(b-) = \inf_{x\in I} r(x).$ The limit exists and is equal to the infimum because the ratio $r$ is non-increasing and bounded below on $I.$ 
One can verify that $\theta \in [0,1]$ with $\theta=1$ if and only if $f=g$ everywhere on $I.$

We turn our attention to finding an absolutely continuous distribution function $U$  with $U(a+)=0$ and $U(b-)=1$ such that $\eqref{fx_mixture}$ holds. There are two cases:  $\theta=1$ and $\theta \in [0,1).$ When $\theta = 1,$ we know $f=g$ everywhere on $I.$ In that case, $\eqref{fx_mixture}$ holds for any absolutely continuous distribution function $U$ with $U(a+)=0$ and $U(b-)=1.$  When $\theta \in [0,1),$ we define $R=F/G$ and set $U(x) = G(x)\left\{R(x) - r(x)\right\}/(1-\theta)$ for $ x \in I.$

First, we verify that $U$ is an absolutely continuous distribution function: 

\noindent \textit{Absolute continuity.}
Rearranging the expression for $U,$ we have $U(x) = \left\{F(x) - G(x) r(x)\right\}/\\ (1-\theta)$ for $x \in I.$
That $U \in AC_\text{loc}(I)$ follows from \citet[Exercise 3.7]{Leoni2017} and the local absolute continuity of $F, G,$ and $r.$

\noindent \textit{Monotonicity.} 
Let $c, d \in I$ with $c < d.$ 
The function $U$ is differentiable almost everywhere on $J = [c,d],$ with $U'(x) = G(x)\{-r'(x)\}/(1-\theta)$ for almost all $x\in J.$ The ratio $r$ is non-increasing, implying that its derivative $r'(x) \leq 0$ at $x \in J$ for which it exists. Thus, $U'(x) \geq 0$ for almost all $x \in J.$ 
By Proposition 2 of \citet{Koliha2009}, $U$ is non-decreasing on $J.$ In particular, we have $U(d) - U(c) \geq 0,$ which allows us to conclude that $U$ is non-decreasing on $I.$

\noindent \textit{Non-negativity.} Let $x_0 \in I$ be given. If $R(x_0) - r(x_0) \geq 0,$ it immediately follows that $U(x_0) \geq 0.$ For all $x \in (a,x_0],$ we have $f(x)g(x_0) \geq f(x_0)g(x).$ Integrating both sides from $x=a$ to $x=x_0$ yields the desired inequality.


\noindent \textit{Limits.} Verifying that $U(b-) = 1$ is straightforward. Now we evaluate the limit $U(a+).$ It follows from the non-negativity of $U$ that $F(x) \geq G(x) r(x) \geq 0$ for $x \in I.$ This inequality implies $\lim_{x \downarrow a} G(x)r(x) = 0,$ from which we can easily conclude $U(a+)=0.$

Now we verify that \eqref{fx_mixture} holds. Let $x_0 \in I$ be given. For all $c,d \in I$ with $c < d,$ we have 
\begin{align*}
    \theta g(x_0) + (1-\theta) \int_c^d g^s(x_0) \, U'(s) \,  ds 
    &= g(x_0)\left\{\theta - \int_{x_0}^d r'(s)\, ds\right\} \\ 
    &= g(x_0)\left\{\theta + r(x_0) - r(d)\right\}.
\end{align*}
The first equality comes from substituting in known expressions for $g^s(x_0)$ and $U'(x_0)$ and then simplifying, while the second equality follows from the local absolute continuity of $r$ and a version of the fundamental theorem of calculus \citep[Theorem 3.20]{Leoni2017}. By taking the limit of both sides as $c \downarrow a$ and $d \uparrow b,$ we conclude that \eqref{fx_mixture} holds.

\subsection*{Uniqueness}

Suppose the equivalent conditions of Theorem \ref{thm:fmixture_ac} hold. That $\theta=r(b-)$ follows from taking the limit as $x\uparrow b$ of both sides of  \eqref{ratio}. Now suppose $\theta \in [0,1).$ Our goal is to show that $U$ is uniquely determined with $U(x) = G(x)\left\{R(x) - r(x)\right\}/(1-\theta)$ for all  $x \in I.$ Let $x_0 \in I.$ Making use of \eqref{fx_mixture}, we have
\begin{align*}
F(x_0) &= \int_a^{x_0} f(t) \, dt \\ 
&= \theta G(x_0) + (1 - \theta) \int_a^{x_0} \left\{ \int_a^b g^s(t) \, U'(s) \,  ds \right\} dt \\ 
&= \theta G(x_0) + (1 - \theta) \int_a^b \left\{ \int_a^{x_0} g^s(t) \, U'(s) \,  dt\right\} ds \\ 
&= \theta G(x_0) + (1 - \theta) \int_a^b \left\{ \int_a^{s \wedge x_0} g(t) \, dt\right\} \frac{U'(s)}{G(s)} ds \\ 
&= \theta G(x_0) + (1 - \theta) \int_a^b G(s \wedge x_0) \frac{U'(s)}{G(s)} \,  ds.
\end{align*}
Exchanging the order of integration is allowed by Fubini's theorem. The hypotheses of that theorem are satisfied because $g^s(t) \, U'(s)$ is a joint density in $(s,t)$ on $I \times I.$ The integrand of the last integral can be expressed as
\begin{align*}
     G(s \wedge x_0) \frac{U'(s)}{G(s)} & = 
     \begin{cases}
     U'(s), \quad &s \leq x_0 \\ 
     G(x_0)\frac{U'(s)}{G(s)}, \quad &s > x_0
     \end{cases}
\end{align*} so that 
\begin{align*}
F(x_0) &= \theta G(x_0) + (1 - \theta) 
\left\{U(x_0) + G(x_0) \int_{x_0}^b \frac{U'(s)}{G(s)}\,  ds \right\}
\end{align*} From the proof of the backward direction, we know that 
\begin{align*}
    \int_{x_0}^b \frac{U'(s)}{G(s)}\,  ds &= \frac{r(x_0)-\theta}{1-\theta}.
\end{align*} Substituting this into the previous equation and rearranging yields the desired result. 

\section*{Proof of Proposition \ref{prop:suff_cond_fx}}

The proposition follows from Exercise 3.48 and Corollary 1.25 of \citet{Leoni2017}. Suppose $f/g$ is continuous, non-increasing, and differentiable except at countably many points on $I.$ Let a compact interval $J \subset I$ be given. By Exercise 3.48, we only need to show that $(f/g)'$ is Lebesgue integrable over $J$ in order to conclude that $f/g$ is absolutely continuous on $J.$ Since $f/g$ is monotone, the integrability of $(f/g)'$ over $J$ follows immediately from Corollary 1.25.

\section*{Proof of Theorem \ref{thm:gmixture_ac}}

The proof is omitted because it is similar to that of Theorem~\ref{thm:fmixture_ac}.

\section*{Proof of Proposition \ref{prop:suff_cond_gx}}

The proof is essentially identical to that of Proposition \ref{prop:suff_cond_fx}. Simply replace the non-increasing ratio $f/g$ with the non-decreasing ratio $g/f.$

\section*{Proof of Theorem \ref{thm:support}}

To prove Theorem \ref{thm:support}, we need two lemmas. The first concerns continuity of the map $T_1: (g, u, \theta) \mapsto (f,g),$ while the second concerns the $L_1$-support of Dirichlet process location-scale mixtures of Gaussian distributions. Theorem \ref{thm:support} follows easily from the lemmas. 

Denote the space of probability density functions on $\mathbbm{R}$ by $\Delta,$ and denote the subset of strictly positive density functions by $\Delta_{+}.$ As in the main text, the notation $\|\cdot\|_1$ indicates the $L_1$ norm. We endow the domain $\Delta_{+} \times \Delta \times [0,1]$ and codomain $\Delta \times \Delta_{+}$ of the map $T_1$ with metrics $d_1\{(h,\ell), (\tilde{h}, \tilde{\ell})\} = \|h - \tilde{h}\|_1 + \|\ell - \tilde{\ell}\|_1$ for $(h,\ell), (\tilde{h}, \tilde{\ell}) \in \Delta \times \Delta_{+}$ and $d_2\{(h,\ell, \theta), (\tilde{h}, \tilde{\ell}, \tilde{\theta})\} = d_1\{(h,\ell), (\tilde{h}, \tilde{\ell})\} + |\theta - \tilde{\theta}|$ for $(h,\ell, \theta), (\tilde{h}, \tilde{\ell}, \tilde{\theta}) \in \Delta_{+} \times \Delta \times [0,1].$



\begin{lemma*}[Continuity] 
The map $T_1$ is continuous on $\Delta_{+} \times \Delta \times [0,1]$ with respect to the given metrics.
\end{lemma*}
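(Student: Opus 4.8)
The plan is to verify sequential continuity. Since the domain is a metric space, it suffices to show that whenever $(g_n, u_n, \theta_n) \to (g, u, \theta)$ in $d_2$ --- equivalently $\|g_n - g\|_1 \to 0$, $\|u_n - u\|_1 \to 0$, and $|\theta_n - \theta| \to 0$ --- we have $d_1\{T(g_n, u_n, \theta_n), T(g, u, \theta)\} \to 0$. Writing $T(g_n, u_n, \theta_n) = (f_n, g_n)$, the quantity $d_1\{(f_n,g_n),(f,g)\} = \|f_n - f\|_1 + \|g_n - g\|_1$ has its second summand vanishing by hypothesis, so everything reduces to $\|f_n - f\|_1 \to 0$. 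Here, by Theorem \ref{thm:fmixture_ac} and $g^s(x) = g(x)\mathbbm{1}_{(-\infty,s]}(x)/G(s)$,
\[
f_n(x) = \theta_n g_n(x) + (1-\theta_n)\, g_n(x) \int_x^\infty \frac{u_n(s)}{G_n(s)}\, ds,
\]
and analogously for $f$. The key structural observation is that $f_n$ and $f$ are genuine probability densities, so I can establish $L_1$-convergence through almost-everywhere convergence together with Scheff\'e's theorem.

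Because $L_1$-convergence only yields almost-everywhere convergence along subsequences, the argument invokes the subsequence principle: to prove $\|f_n - f\|_1 \to 0$ it is enough to show that every subsequence admits a further subsequence along which $\|f_n - f\|_1 \to 0$. Along an arbitrary subsequence I extract a further one (not relabeled) with $g_n \to g$ and $u_n \to u$ almost everywhere. The distribution functions then satisfy $|G_n(s) - G(s)| \le \|g_n - g\|_1 \to 0$ uniformly in $s$, while $G(x) > 0$ for every $x$ since $g \in \Delta_{+}$.

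The central step is pointwise convergence $f_n \to f$. Fix an $x$ at which $g_n(x) \to g(x)$; the term $\theta_n g_n(x) \to \theta g(x)$ is immediate, so the crux is
\[
\int_x^\infty \frac{u_n(s)}{G_n(s)}\, ds \;\longrightarrow\; \int_x^\infty \frac{u(s)}{G(s)}\, ds.
\]
The integrands converge almost everywhere in $s$ (from $u_n \to u$ a.e. and $G_n \to G$), but \emph{no fixed integrable dominating function is available}, because $1/G_n(s)$ is unbounded near $-\infty$ and $u_n$ itself varies with $n$. This limit interchange is the main obstacle, and I expect it to be the only delicate point. I resolve it with the generalized (Pratt) dominated convergence theorem: on $[x,\infty)$ monotonicity of $G_n$ gives the domination $u_n(s)/G_n(s) \le u_n(s)/G_n(x)$, and the dominating functions $\varphi_n(s) = u_n(s)\mathbbm{1}_{[x,\infty)}(s)/G_n(x)$ converge almost everywhere to $u(s)\mathbbm{1}_{[x,\infty)}(s)/G(x)$ with $\int \varphi_n = G_n(x)^{-1}\int_x^\infty u_n \to G(x)^{-1}\int_x^\infty u = \int \varphi$ (using $G_n(x)\to G(x)>0$ and $\|u_n - u\|_1 \to 0$). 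Pratt's theorem then yields the displayed convergence, whence $f_n(x) \to f(x)$ for almost every $x$.

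Finally, since $f_n$ and $f$ are probability densities with $f_n \to f$ almost everywhere, Scheff\'e's theorem gives $\|f_n - f\|_1 \to 0$ along the subsequence. The subsequence principle upgrades this to convergence of the full sequence, and together with $\|g_n - g\|_1 \to 0$ this shows $d_1\{T(g_n,u_n,\theta_n),T(g,u,\theta)\} \to 0$. Hence $T$ is continuous, with all steps routine apart from the Pratt-lemma interchange.
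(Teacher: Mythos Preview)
Your argument is correct and takes a genuinely different route from the paper's. The paper proceeds by a direct $\epsilon$--$\delta$ estimate: it expands $d_1\{(f,g),(\mathfrak f,\mathfrak g)\}$, repeatedly adds and subtracts terms, and ultimately bounds everything by $\|g-\mathfrak g\|_1$, $\|u-\mathfrak u\|_1$, $|\theta-\theta_0|$, plus a tail term $2\mathfrak U(s_0)$ arising from a split of the $s$-integral at a carefully chosen $s_0$. The delicate point in the paper is controlling $\int_{s_0}^\infty \mathfrak u(s)/\{G(s)\mathfrak G(s)\}\,ds$ after the split; this requires choosing $s_0$ first and then $\delta_1$ small enough that $G(s_0)$ stays bounded away from zero. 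By contrast, you sidestep all explicit inequalities by exploiting that both $f_n$ and $f$ are probability densities: pointwise convergence (obtained via the subsequence trick and Pratt's generalized dominated convergence with the moving dominators $u_n/G_n(x)$) plus Scheff\'e delivers $L_1$-convergence for free. Your approach is shorter and more conceptual; the paper's approach, though longer, yields an explicit modulus of continuity, which could in principle be sharpened to quantitative statements. Both handle the same essential obstruction---the blow-up of $1/G(s)$ as $s\to-\infty$---but the paper does so by truncation at $s_0$, whereas you do so by the monotonicity bound $1/G_n(s)\le 1/G_n(x)$ on $[x,\infty)$ together with Pratt's lemma.
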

\begin{proof} Let $(\mathfrak{g},\mathfrak{u},\theta_0)\in \Delta_{+} \times \Delta \times [0,1]$ and $\epsilon > 0$ be given. We need to find $\delta>0$ such that $d_2\{(g,u,\theta),(\mathfrak{g},\mathfrak{u},\theta_0)\}<\delta$ implies $d_1\{(f,g),(\mathfrak{f},\mathfrak{g})\}<\epsilon$. To do so, we start by noticing that
\begin{align}
d_{1}\{(f,g),(\mathfrak{f},\mathfrak{g})\} 
=    &  \|f-\mathfrak{f}\|_{1}+\|g-\mathfrak{g}\|_{1}\notag\\
=    &  \int_{-\infty}^{\infty}\left|f(x)-\mathfrak{f}(x)\right|\,dx+\|g-\mathfrak{g}\|_{1}\notag\\
=    &  \int_{-\infty}^{\infty}\left|\theta g(x)+(1-\theta)\int_{-\infty}^{\infty}g^{s}(x)u(s)\,ds-\theta_{0}\mathfrak{g}(x)-\left(1-\theta_{0}\right)\int_{-\infty}^{\infty}\mathfrak{g}^{s}(x)\mathfrak{u}(s)\,ds\right|\,dx\notag\\
     &  +\|g-\mathfrak{g}\|_{1}\notag\\
\leq &  \|\theta g-\theta_{0}\mathfrak{g}\|_{1}+\int_{-\infty}^{\infty}\left|(1-\theta)\int_{-\infty}^{\infty}g^{s}(x)u(s)\,ds-(1-\theta_{0})\int_{-\infty}^{\infty}\mathfrak{g}^{s}(x)\mathfrak{u}(s)\,ds\right|\,dx\notag\\
     &  +\|g-\mathfrak{g}\|_{1}\notag\\
=    &  \|\theta g-\theta_{0}\mathfrak{g}\pm\theta\mathfrak{g}\|_{1}\notag\\
     &  +\int_{-\infty}^{\infty}\left|(1-\theta)\int_{-\infty}^{\infty}g^{s}(x)u(s)\,ds\right.\notag\\
     &  - \left.(1-\theta_{0})\int_{-\infty}^{\infty}\mathfrak{g}^{s}(x)\mathfrak{u}(s)\,ds\pm(1-\theta)\int_{-\infty}^{\infty}\mathfrak{g}^{s}(x)\mathfrak{u}(s)\,ds\right|\,dx\notag\\
     &  +\|g-\mathfrak{g}\|_{1}\notag\\
\leq &  \|\theta g-\theta\mathfrak{g}\|_{1}+\|\theta\mathfrak{g}-\theta_{0}\mathfrak{g}\|_{1}\notag\\
     &  +\int_{-\infty}^{\infty}\left|(1-\theta)\int_{-\infty}^{\infty}g^{s}(x)u(s)\,ds-(1-\theta)\int_{-\infty}^{\infty}\mathfrak{g}^{s}(x)\mathfrak{u}(s)\,ds\right|\,dx\notag\\
     &  +\int_{-\infty}^{\infty}\left|(1-\theta)\int_{-\infty}^{\infty}\mathfrak{g}^{s}(x)\mathfrak{u}(s)\,ds-(1-\theta_{0})\int_{-\infty}^{\infty}\mathfrak{g}^{s}(x)\mathfrak{u}(s)\,ds\right|\,dx\notag\\
     &   +\|g-\mathfrak{g}\|_{1}\notag\\
=    &   \theta\|g-\mathfrak{g}\|_{1}+\left|\theta-\theta_{0}\right|\|\mathfrak{g}\|_{1}\notag\\
     &   +(1-\theta)\int_{-\infty}^{\infty}\left|\int_{-\infty}^{\infty}g^{s}(x)u(s)\,ds-\int_{-\infty}^{\infty}\mathfrak{g}^{s}(x)\mathfrak{u}(s)\,ds\right|\,dx\notag\\
     &   +\left|\theta-\theta_{0}\right|\int_{-\infty}^{\infty}\left|\int_{-\infty}^{\infty}\mathfrak{g}^{s}(x)\mathfrak{u}(s)\,ds\right|\,dx\notag\\
     &  +\|g-\mathfrak{g}\|_{1}\notag\\
\leq &  \|g-\mathfrak{g}\|_{1}+\left|\theta-\theta_{0}\right|\notag\\
     &  +\int_{-\infty}^{\infty}\left|\int_{-\infty}^{\infty}g^{s}(x)u(s)\,ds-\int_{-\infty}^{\infty}\mathfrak{g}^{s}(x)\mathfrak{u}(s)\,ds\right|\,dx\notag\\
    &   +\left|\theta-\theta_{0}\right| +\|g-\mathfrak{g}\|_{1}\notag\\
=   &   2\|g-\mathfrak{g}\|_{1}+2\left|\theta-\theta_{0}\right|+\int_{-\infty}^{\infty}\left|\int_{-\infty}^{\infty}g^{s}(x)u(s)\,ds-\int_{-\infty}^{\infty}\mathfrak{g}^{s}(x)\mathfrak{u}(s)\,ds\right|\,dx. \label{eq:LemmaContIneq1}
\end{align}
We now focus on finding an upper bound for the integral in the last line of the inequality \eqref{eq:LemmaContIneq1}. For any $s_0\in \mathbb{R}$,
\begin{align}
&   \hspace{-25mm}
 \int_{-\infty}^{\infty}\left|\int_{-\infty}^{\infty}g^{s}(x)u(s)\,ds-\int_{-\infty}^{\infty}\mathfrak{g}^{s}(x)\mathfrak{u}(s)\,ds\right|\,dx \notag\\
& \hspace{-15mm}  = 
   \int_{-\infty}^{\infty}\left|\int_{-\infty}^{s_0}g^{s}(x)u(s)\,ds-\int_{-\infty}^{s_0}\mathfrak{g}^{s}(x)\mathfrak{u}(s)\,ds \right.    \notag\\ 
& \hspace{-11mm}
\left. + \int_{s_0}^{\infty}g^{s}(x)u(s)\,ds-\int_{s_0}^{\infty}\mathfrak{g}^{s}(x)\mathfrak{u}(s)\,ds\right|\,dx
   \notag\\ 
& \hspace{-15mm}  \leq  
   \int_{-\infty}^{\infty}\left|\int_{-\infty}^{s_0}g^{s}(x)u(s)\,ds-\int_{-\infty}^{s_0}\mathfrak{g}^{s}(x)\mathfrak{u}(s)\,ds \right|\,dx   \notag\\ 
& \hspace{-11mm}
 +
\int_{-\infty}^{\infty}\left| \int_{s_0}^{\infty}g^{s}(x)u(s)\,ds-\int_{s_0}^{\infty}\mathfrak{g}^{s}(x)\mathfrak{u}(s)\,ds\right|\,dx.
   \label{eq:LemmaContIneq2} 
\end{align}
Denote the distribution functions corresponding to $\mathfrak{g}$ and $\mathfrak{u}$ by $\mathfrak{G}$ and $\mathfrak{U},$ respectively. For the first integral in the last line of the inequality \eqref{eq:LemmaContIneq2}, we have that
\begin{align}
 &  \hspace{-25mm}  \int_{-\infty}^{\infty}\left|\int_{-\infty}^{s_{0}}g^{s}(x)u(s)\,ds-\int_{-\infty}^{s_{0}}\mathfrak{g}^{s}(x)\mathfrak{u}(s)\,ds\right|\,dx\notag \\
\leq & \int_{-\infty}^{\infty}\int_{-\infty}^{s_{0}}g^{s}(x)u(s)\,ds\,dx+\int_{-\infty}^{\infty}\int_{-\infty}^{s_{0}}\mathfrak{g}^{s}(x)\mathfrak{u}(s)\,ds\,dx\notag \\
=    & \int_{-\infty}^{s_{0}}\int_{-\infty}^{\infty}g^{s}(x)u(s)\,dx\,ds+\int_{-\infty}^{s_{0}}\int_{-\infty}^{\infty}\mathfrak{g}^{s}(x)\mathfrak{u}(s)\,dx\,ds\notag \\
=    & \int_{-\infty}^{s_{0}}\int_{-\infty}^{\infty}g^{s}(x)\,dx\,u(s)\,ds+\int_{-\infty}^{s_{0}}\int_{-\infty}^{\infty}\mathfrak{g}^{s}(x)\,dx\,\mathfrak{u}(s)\,ds\notag \\
=    &\, U(s_{0})+\mathfrak{U}(s_{0})\notag \\
=    & \left\{U(s_{0})-\mathfrak{U}(s_{0})\right\}+2\mathfrak{U}(s_{0})\notag \\
\leq & \sup_{s\in\mathbb{R}}\left\{U(s)-\mathfrak{U}(s)\right\}+2\mathfrak{U}(s_{0})\notag \\
\leq &\, \|u-\mathfrak{u}\|_{1}+2\mathfrak{U}(s_{0}). \label{eq:LemmaContIneq2a}
\end{align}
For the second integral in the last line of the inequality  \eqref{eq:LemmaContIneq2}, we have that
\begin{align}
     &  \int_{-\infty}^{\infty}\left|\int_{s_{0}}^{\infty}g^{s}(x)u(s)\,ds-\int_{s_{0}}^{\infty}\mathfrak{g}^{s}(x)\mathfrak{u}(s)\,ds\right|\,dx\notag\\
=    &  \int_{-\infty}^{\infty}\left|\int_{s_{0}}^{\infty}g^{s}(x)u(s)\,ds-\int_{s_{0}}^{\infty}\mathfrak{g}^{s}(x)\mathfrak{u}(s)\,ds\pm\int_{s_{0}}^{\infty}g^{s}(x)\mathfrak{u}(s)\,ds\,\right|dx\notag\\
\leq &  \int_{-\infty}^{\infty}\left|\int_{s_{0}}^{\infty}g^{s}(x)u(s)\,ds-\int_{s_{0}}^{\infty}g^{s}(x)\mathfrak{u}(s)\,ds\right|\,dx\notag\\
     &  +\int_{-\infty}^{\infty}\left|\int_{s_{0}}^{\infty}g^{s}(x)\mathfrak{u}(s)\,ds-\int_{s_{0}}^{\infty}\mathfrak{g}^{s}(x)\mathfrak{u}(s)\,ds\right|\,dx\notag\\
=    &  \int_{-\infty}^{\infty}\left|\int_{s_{0}}^{\infty}g^{s}(x)\left\{u(s)-\mathfrak{u}(s)\right\}\,ds\right|\,dx\notag\\
     &  +\int_{-\infty}^{\infty}\left|\int_{s_{0}}^{\infty}\left\{g^{s}(x)-\mathfrak{g}^{s}(x)\right\}\mathfrak{u}(s)\,ds\right|\,dx\notag\\
\leq &  \int_{-\infty}^{\infty}\int_{s_{0}}^{\infty}g^{s}(x)\left|u(s)-\mathfrak{u}(s)\right|\,ds\,dx\notag\\
     & +\int_{-\infty}^{\infty}\left|\int_{s_{0}}^{\infty}\left\{\frac{g(x)}{G(s)}\mathbbm{1}_{(-\infty,s]}(x)-\frac{\mathfrak{g}(x)}{\mathfrak{G}(s)}\mathbbm{1}_{(-\infty,s]}(x)\right\}\mathfrak{u}(s)\,ds\right|\,dx\notag\\
=    &  \int_{s_{0}}^{\infty}\int_{-\infty}^{\infty}g^{s}(x)\,dx\left|u(s)-\mathfrak{u}(s)\right|\,ds\notag\\
     &  +\int_{-\infty}^{\infty}\left|\int_{s_{0}}^{\infty}\left\{\frac{g(x)}{G(s)}\mathbbm{1}_{(-\infty,s]}(x)-\frac{\mathfrak{g}(x)}{\mathfrak{G}(s)}\mathbbm{1}_{(-\infty,s]}(x)\right\}\mathfrak{u}(s)\,ds\right|\,dx\notag\\
=    &  \int_{s_{0}}^{\infty}\left|u(s)-\mathfrak{u}(s)\right|\,ds\notag\\
     &  +\int_{-\infty}^{\infty}\left|\int_{s_{0}}^{\infty}\left\{\frac{\mathfrak{G}(s)g(x)}{G(s)\mathfrak{G}(s)}\mathbbm{1}_{(-\infty,s]}(x)-\frac{G(s)\mathfrak{g}(x)}{G(s)\mathfrak{G}(s)}\mathbbm{1}_{(-\infty,s]}(x)\right\}\mathfrak{u}(s)\,ds\right|\,dx\notag\\
\leq &   \|u-\mathfrak{u}\|_{1}\notag\\
     &   +\int_{-\infty}^{\infty}\left|\int_{s_{0}}^{\infty}\left\{\mathfrak{G}(s)g(x)-G(s)\mathfrak{g}(x)\right\}\frac{\mathfrak{u}(s)}{G(s)\mathfrak{G}(s)}\mathbbm{1}_{(-\infty,s]}(x)\,ds\right|\,dx\notag\\
=    &  \|u-\mathfrak{u}\|_{1}\notag\\
     &  +\int_{-\infty}^{\infty}\left|\int_{s_{0}}^{\infty}\left\{\mathfrak{G}(s)g(x)-G(s)\mathfrak{g}(x)\pm\mathfrak{G}(s)\mathfrak{g}(x)\right\}\frac{\mathfrak{u}(s)}{G(s)\mathfrak{G}(s)}\mathbbm{1}_{(-\infty,s]}(x)\,ds\right|\,dx\notag\\
\leq &  \|u-\mathfrak{u}\|_{1}\notag\\
     &  +\int_{-\infty}^{\infty}\left|\int_{s_{0}}^{\infty}\left\{\mathfrak{G}(s)g(x)-\mathfrak{G}(s)\mathfrak{g}(x)\right\}\frac{\mathfrak{u}(s)}{G(s)\mathfrak{G}(s)}\mathbbm{1}_{(-\infty,s]}(x)\,ds\right|\,dx\notag\\
     &   +\int_{-\infty}^{\infty}\left|\int_{s_{0}}^{\infty}\left\{\mathfrak{G}(s)\mathfrak{g}(x)-G(s)\mathfrak{g}(x)\right\}\frac{\mathfrak{u}(s)}{G(s)\mathfrak{G}(s)}\mathbbm{1}_{(-\infty,s]}(x)\,ds\right|\,dx\notag\\
\leq &  \|u-\mathfrak{u}\|_{1}\notag\\
     &  +\int_{-\infty}^{\infty}\int_{s_{0}}^{\infty}\left|\mathfrak{G}(s)g(x)-\mathfrak{G}(s)\mathfrak{g}(x)\right|\frac{\mathfrak{u}(s)}{G(s)\mathfrak{G}(s)}\mathbbm{1}_{(-\infty,s]}(x)\,ds\,dx\notag\\
     &  +\int_{-\infty}^{\infty}\int_{s_{0}}^{\infty}\left|\mathfrak{G}(s)\mathfrak{g}(x)-G(s)\mathfrak{g}(x)\right|\frac{\mathfrak{u}(s)}{G(s)\mathfrak{G}(s)}\mathbbm{1}_{(-\infty,s]}(x)\,ds\,dx\notag\\
= &  \|u-\mathfrak{u}\|_{1}\notag\\
     &  +\int_{-\infty}^{\infty}\int_{s_{0}}^{\infty}\mathfrak{G}(s)\left|g(x)-\mathfrak{g}(x)\right|\frac{\mathfrak{u}(s)}{G(s)\mathfrak{G}(s)}\mathbbm{1}_{(-\infty,s]}(x)\,ds\,dx\notag\\
     &  +\int_{-\infty}^{\infty}\int_{s_{0}}^{\infty}\mathfrak{g}(x)\left|\mathfrak{G}(s)-G(s)\right|\frac{\mathfrak{u}(s)}{G(s)\mathfrak{G}(s)}\mathbbm{1}_{(-\infty,s]}(x)\,ds\,dx\notag\\
\leq &  \|u-\mathfrak{u}\|_{1}\notag\\
    &   +\int_{s_{0}}^{\infty}\int_{-\infty}^{\infty}\left|g(x)-\mathfrak{g}(x)\right|\mathbbm{1}_{(-\infty,s]}(x)\,dx\frac{\mathfrak{u}(s)}{G(s)\mathfrak{G}(s)}\,ds\notag\\
    &   +\int_{s_{0}}^{\infty}\int_{-\infty}^{\infty}\mathfrak{g}(x)\mathbbm{1}_{(-\infty,s]}(x)\,dx\left|\mathfrak{G}(s)-G(s)\right|\frac{\mathfrak{u}(s)}{G(s)\mathfrak{G}(s)}\,ds\notag\\
\leq &  \|u-\mathfrak{u}\|_{1}\notag\\
     &  +\|g-\mathfrak{g}\|_{1}\int_{s_{0}}^{\infty}\frac{\mathfrak{u}(s)}{G(s)\mathfrak{G}(s)}\,ds\notag\\
     &  +\sup_{s\in\mathbb{R}}\left|\mathfrak{G}(s)-G(s)\right|\int_{s_{0}}^{\infty}\frac{\mathfrak{u}(s)}{G(s)\mathfrak{G}(s)}\,ds\notag\\
\leq &  \|u-\mathfrak{u}\|_{1}+\left\{\|g-\mathfrak{g}\|_{1}+\sup_{s\in\mathbb{R}}\left|\mathfrak{G}(s)-G(s)\right|\right\}\int_{s_{0}}^{\infty}\frac{\mathfrak{u}(s)}{G(s)\mathfrak{G}(s)}\,ds\notag\\
\leq &  \|u-\mathfrak{u}\|_{1}+\frac{3}{2}\|g-\mathfrak{g}\|_{1}\int_{s_{0}}^{\infty}\frac{\mathfrak{u}(s)}{G(s)\mathfrak{G}(s)}\,ds. \label{eq:LemmaContIneq2b}
\end{align}
Putting together Equations \eqref{eq:LemmaContIneq1}--\eqref{eq:LemmaContIneq2b}, it follows that
\begin{align}
d_{1}\{(f,g),(\mathfrak{f},\mathfrak{g})\}  
\leq &   \|g-\mathfrak{g}\|_{1}\left\{2+\frac{3}{2}\int_{s_{0}}^{\infty}\frac{\mathfrak{u}(s)}{G(s)\mathfrak{G}(s)}\,ds\right\}+2\|u-\mathfrak{u}\|_{1}+2\left|\theta-\theta_{0}\right|+2\mathfrak{U}(s_{0}).\label{eq:LemmaContIneq3}
\end{align}
We are now in a position to find $\delta>0$ such that $d_2\{(g,u,\theta),(\mathfrak{g},\mathfrak{u},\theta_0)\}<\delta$ implies $d_1\{(f,g),(\mathfrak{f},\mathfrak{g})\}<\epsilon$. 
The inequality \eqref{eq:LemmaContIneq3} holds for any $s_0 \in \mathbbm{R}.$ Let $s_0$ be such that $\mathfrak{U}(s_{0})<\epsilon/8.$ If $\|g-\mathfrak{g}\|_{1} < \delta_1$  then $ \sup_{s\in\mathbb{R}}\left|\mathfrak{G}(s)-G(s)\right|<\delta_1/2,$ implying that $ G(s_{0})>\mathfrak{G}(s_{0})-\delta_1/2$ and $\mathfrak{G}(s_{0})G(s_{0})>\mathfrak{G}(s_{0})\left\{\mathfrak{G}(s_{0})-\delta_1/2\right\}$. Therefore, if $\|g-\mathfrak{g}\|_{1} < \delta_1 < 2 \mathfrak{G}(s_{0}),$ we have the following upper bound for the first term on the right hand side of \eqref{eq:LemmaContIneq3}: 
\begin{align*}
    \|g-\mathfrak{g}\|_{1}\left\{2+\frac{3}{2}\int_{s_{0}}^{\infty}\frac{\mathfrak{u}(s)}{G(s)\mathfrak{G}(s)}\,ds\right\} 
    & \leq \delta_1 \left[2+\frac{3}{2}\int_{s_{0}}^{\infty}\frac{\mathfrak{u}(s)}{\mathfrak{G}(s_{0})\left\{\mathfrak{G}(s_{0})-\delta_1/2\right\}}\,ds\right] \\ 
    & = \delta_1 \left[2+\frac{3}{2}\frac{1-\mathfrak{U}(s_0)}{\mathfrak{G}(s_{0})\left\{\mathfrak{G}(s_{0})-\delta_1/2\right\}}\right] \\
    & := \kappa(\delta_1).
\end{align*}
The function $\kappa$ is increasing in $\delta_1$ on the interval $(0,2\mathfrak{G}(s_{0}))$ with $\lim_{\delta_1 \downarrow 0} \kappa(\delta_1)=0,$ 
so for $\delta_1$ sufficiently small we have $\kappa(\delta_1) < \epsilon/4.$ 
If we set $\delta = \min\{\delta_1, \epsilon/8\},$ then
$d_2\{(g,u,\theta),(\mathfrak{g},\mathfrak{u},\theta_0)\}<\delta$ implies $d_1\{(f,g),(\mathfrak{f},\mathfrak{g})\}<\epsilon,$ as desired. 

\end{proof}


As in Subsection \ref{prior_description}, let $\text{DP}(P_0, \alpha)$ denote the Dirichlet process with base measure $P_0$ and precision parameter $\alpha >0,$ and let $\phi_{\mu, \sigma^2}$ denote the density of a Gaussian distribution with mean $\mu$ and variance $\sigma^2.$  Define the random density 
\begin{align*}
g(\cdot \mid P) = \int_{\mathbbm{R} \times \mathbbm{R}^+} \phi_{\mu,\sigma^2}(\cdot) P(d\mu,d\sigma^2), \quad P \sim \text{DP}(P_0, \alpha_1)
\end{align*} 
and let $\Pi_g$ be the probability measure associated with $g.$ 
\begin{lemma*}[Support] 
The probability measure $\Pi_g$ has full $L_1$-support on $\Delta$ provided that $P_0$ has full support on $\mathbb{R} \times \mathbb{R}^+.$ 
\end{lemma*}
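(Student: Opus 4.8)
The plan is to show that $\Pi_g$ charges every $L_1$-ball: fix a target density $f_0 \in \Delta$ and $\epsilon > 0$, and prove $\Pi_g(\{g : \|g - f_0\|_1 < \epsilon\}) > 0$. I would build this up through a standard two-stage approximation followed by an appeal to the weak support of the Dirichlet process, in the spirit of the support arguments underlying \citet{Pati2013} and \citet{Ferguson1973}.

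First I would approximate $f_0$ by a well-behaved Gaussian mixture. Since continuous, compactly supported densities are $L_1$-dense in $\Delta$, pick such a $\tilde f_0$ with $\|f_0 - \tilde f_0\|_1 < \epsilon/4$. Gaussian convolution converges in $L_1$, so for a sufficiently small fixed scale $\sigma^2$ we have $\|\tilde f_0 - \tilde f_0 \ast \phi_{0,\sigma^2}\|_1 < \epsilon/4$, and $\tilde f_0 \ast \phi_{0,\sigma^2}(\cdot) = \int \phi_{\mu,\sigma^2}(\cdot)\, \tilde f_0(\mu)\, d\mu$ is a location mixture whose mixing measure is supported on a compact set $K \subset \mathbbm{R} \times \mathbbm{R}^+$ bounded away from the boundary $\sigma^2 = 0$. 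Discretizing this mixing measure yields a finitely supported probability measure $P^\ast$ with atoms in $K$; because the kernel $(\mu,\sigma^2) \mapsto \phi_{\mu,\sigma^2}(x)$ is bounded and uniformly continuous on $K$, a fine enough discretization gives pointwise closeness of the mixtures, which Scheffé's lemma upgrades to $\|g(\cdot \mid P^\ast) - \tilde f_0 \ast \phi_{0,\sigma^2}\|_1 < \epsilon/4$. Altogether $\|g(\cdot \mid P^\ast) - f_0\|_1 < 3\epsilon/4$.

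Next I would pass from $P^\ast$ to a weak neighborhood. The goal is a weakly open set $W \ni P^\ast$ such that $P \in W$ implies $\|g(\cdot \mid P) - g(\cdot \mid P^\ast)\|_1 < \epsilon/4$, and hence $\|g(\cdot\mid P) - f_0\|_1 < \epsilon$. I would specify $W$ through finitely many bounded continuous test functions: some chosen so that $\int \phi_{\mu,\sigma^2}(x)\, dP(\mu,\sigma^2)$ stays close to $\int \phi_{\mu,\sigma^2}(x)\, dP^\ast(\mu,\sigma^2)$ on a grid of $x$ values, and one chosen as a continuous bump supported near $\{\sigma^2 = 0\}$ so that membership in $W$ forces any $P \in W$ to place only negligible mass near that boundary. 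Together these control $g(\cdot \mid P)$ pointwise while ruling out spikes generated by vanishing scales, and Scheffé's lemma again converts the resulting pointwise control into the desired $L_1$ bound.

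Finally, since $P^\ast$ is finitely supported within $\mathbbm{R} \times \mathbbm{R}^+$ and $P_0$ has full support there, $P^\ast$ lies in the weak support of $\text{DP}(P_0, \alpha_1)$; by the defining property of the Dirichlet process \citep{Ferguson1973}, every weak neighborhood of such a measure receives positive probability, so $\text{DP}(P_0,\alpha_1)(W) > 0$. Therefore $\Pi_g(\{g : \|g - f_0\|_1 < \epsilon\}) \geq \text{DP}(P_0,\alpha_1)(W) > 0$, which is the claim. I expect the main obstacle to be the weak-to-$L_1$ continuity in the third step: the Gaussian kernel $\phi_{\mu,\sigma^2}(x)$ is unbounded as $\sigma^2 \downarrow 0$, so weak convergence of mixing measures does not by itself control the mixtures. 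Keeping $P^\ast$ away from the boundary and inserting the bump-function constraint into $W$ is what tames this, after which Scheffé's lemma does the rest.
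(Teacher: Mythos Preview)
Your approach differs from the paper's. The paper also begins by approximating the target density by a finite Gaussian mixture (invoking Theorem~2 of \citet{Nguyen2020}), but it then works directly with the stick-breaking representation of the Dirichlet process rather than with its weak support. Writing $g = \sum_{j\geq 1} c_j \phi_{\mu_j,\sigma_j^2}$, the paper bounds $\|g - \mathfrak{g}_m\|_1$ by $\sum_{j\leq m} |\tilde c_j - c_j| + \sum_{j\leq m} \tilde c_j \|\phi_{\tilde\mu_j,\tilde\sigma_j^2} - \phi_{\mu_j,\sigma_j^2}\|_1 + \sum_{j>m} c_j$, controls the middle term via an explicit $L_1$ bound between Gaussians (Theorem~1.3 of \citet{Devroye2020}), and then observes that the event in which the first $m$ atoms and weights fall in suitable Euclidean neighborhoods has positive prior probability because $P_0$ and the beta stick-breaking distributions have full support. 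No weak-to-$L_1$ continuity argument is needed at all.

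Your route is plausible in spirit, but the sketch has a real soft spot exactly where you flag it. First, the maps $(\mu,\sigma^2)\mapsto \phi_{\mu,\sigma^2}(x_j)$ are unbounded on $\mathbb{R}\times\mathbb{R}^+$, so they cannot themselves serve as test functions defining a weak neighborhood; you would have to truncate them and use the bump-function constraint to control the truncation error. Second, and more substantively, Scheff\'e's lemma is a sequential statement converting pointwise a.e.\ convergence of densities to $L_1$ convergence; what you need is a \emph{uniform} $L_1$ bound over all $P$ in a fixed weak neighborhood $W$. Controlling $g(\cdot\mid P)$ only at finitely many grid points does not deliver pointwise closeness everywhere: a Dirichlet process draw in $W$ may still place atoms at arbitrarily small $\sigma^2$, creating narrow spikes between grid points. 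Closing this requires an extra equicontinuity and tail argument --- for instance, splitting off the part of $P$ on $\{\sigma^2<\eta\}$, whose $L_1$ contribution equals its mass, and using uniform smoothness of the remainder to interpolate from the grid --- which is doable but more delicate than your outline suggests. The paper's stick-breaking argument is more elementary precisely because it bypasses this issue.
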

\begin{proof}
Let $\mathfrak{g} \in \Delta$ and $\epsilon > 0$ be given. We need to demonstrate that $\Pi_g(\Vert \mathfrak{g} - g\Vert_1 < \epsilon)>0.$ Our main tools are Theorem 2 of \citet{Nguyen2020} and the stick-breaking representation of  Dirichlet process \citep{Sethuraman1994}. 

Theorem 2 of \citet{Nguyen2020} guarantees that there exists a finite mixture of Gaussian densities $\mathfrak{g}_m(x) = \sum_{j=1}^m \tilde c_j \phi_{\tilde \mu_j,\tilde \sigma_j^2}(x)$ 
such that $\Vert \mathfrak{g}_m - \mathfrak{g} \Vert_1<\epsilon/2$. To prove the lemma, it is sufficient to show that $\Pi_g(\Vert \mathfrak{g}_m - g \Vert_1 < \epsilon/2) > 0,$ because $\Pi_g(\Vert \mathfrak{g} - g\Vert_1 < \epsilon) \geq \Pi_g(\Vert \mathfrak{g}_m - g \Vert_1 < \epsilon/2).$ 

From the stick-breaking representation of the Dirichlet process, we have that
\begin{eqnarray*} 
g(x) = \sum_{j=1}^{\infty} c_j \phi_{\mu_j,\sigma_j^2}(x),
\end{eqnarray*}
where $c_j = v_{j}\prod_{l=1}^{j-1} (1-v_{l})$, each $v_j$ is an independent $\text{Beta}(1,\alpha)$ random variable, and each pair $(\mu_j,\sigma_j^2)$ is an independent random vector distributed according to the base measure $P_0$. 

Putting these observations together, it follows that
\begin{eqnarray*}
\Vert \mathfrak{g}_m - g \Vert_1 & \leq & \int_{-\infty}^\infty \left| \sum_{j=1}^m \tilde c_{j} \phi_{\tilde \mu_{j},\tilde \sigma_{j}^2}(x)  - \sum_{j=1}^m c_{j} \phi_{\mu_{j},\sigma_{j}^2}(x) \right| dx + \sum_{j=m+1}^{\infty} c_{j} \\
& = & \int_{-\infty}^\infty \left| \sum_{j=1}^m \tilde c_{j} \phi_{\tilde \mu_{j},\tilde \sigma_{j}^2}(x)  - \sum_{j=1}^m c_{j} \phi_{\mu_{j},\sigma_{j}^2}(x) \pm \sum_{j=1}^m \tilde c_{j} \phi_{\mu_{j},\sigma_{j}^2}(x) \right| dx + \sum_{j=m+1}^{\infty} c_{j} \\
& \leq & \sum_{j=1}^m  \left|\tilde c_{j} - c_{j} \right| +  \sum_{j=1}^m \tilde c_{j} \int_{-\infty}^\infty \left| \phi_{\tilde \mu_{j},\tilde \sigma_{j}^2}(x)  - \phi_{\mu_{j},\sigma_{j}^2}(x) \right| dx + \sum_{j=m+1}^{\infty} c_{j} \\
& \leq & \sum_{j=1}^m  \left|\tilde c_{j} - c_{j} \right| +  \sum_{j=1}^m \frac{|\tilde \mu_{j} - \mu_{j}|}{\tilde\sigma_{j}} +  \sum_{j=1}^m \frac{3|\tilde \sigma_{j}^2 - \sigma_{j}^2|}{\tilde\sigma_{j}^2}  + \sum_{j=m+1}^{\infty} c_{j}.
\end{eqnarray*} 
The last inequality follows from Theorem 1.3 of \citet{Devroye2020}.

We can use the inequality to find a condition on the first $m$ weights and atoms of the stick-breaking representation that guarantees $\|\mathfrak{g}_m - g \|_1 < \epsilon/2.$ In particular, if $\left|\tilde c_{j} - c_{j} \right|<\epsilon/8m$,  $|\tilde \mu_{j} - \mu_{j}| < \tilde\sigma_{j}\epsilon/8m$, and $|\tilde \sigma_{j}^2 - \sigma_{j}^2|<\tilde\sigma_{j}^2\epsilon/24m$ for each $j \in \{1,\ldots, m\}$, then 
\begin{align*}
    \sum_{j=1}^m  \left|\tilde c_{j} - c_{j} \right| <  \epsilon/8, \quad
    \sum_{j=1}^m \frac{|\tilde \mu_{j} - \mu_{j}|}{\tilde\sigma_{j}} < \epsilon/8, \quad  
    \sum_{j=1}^m \frac{3|\tilde \sigma_{j}^2 - \sigma_{j}^2|}{2\tilde\sigma_{j}^2} <  \epsilon/8 \end{align*}
and
\begin{align*}
    \sum_{j=m+1}^{\infty} c_{j}  = 1 - \sum_{j=1}^{m} c_{j} = \sum_{j=1}^{m} \tilde c_{j} - \sum_{j=1}^{m} c_{j} \leq \sum_{j=1}^{m} |\tilde c_{j} - c_{j}| < \epsilon/8,
\end{align*}
implying that $\Vert \mathfrak{g}_m - g \Vert_1<\epsilon/2$.

The map $S: (v_1,\ldots,v_m) \mapsto (c_1,\ldots,c_m)$ is continuous, which implies that $$S^{-1} \left[\{ (c_1,\ldots,c_m) : \left|\tilde c_{j} - c_{j} \right|<\epsilon/8m, \, j=1,\ldots,m\}\right]$$ is an open set. The prior distribution for $(v_1,\ldots,v_m)$ has full support and thus assigns positive mass to this set. 
The proof is completed by noticing that 
\begin{align*}
  & \hspace{-20mm} \Pi_g(\Vert \mathfrak{g}_m - g \Vert_1 < \epsilon/2 ) \geq \\  
  & \int_{S^{-1} \left[\{ (c_1,\ldots,c_m) : \left|\tilde c_{j} - c_{j} \right|<\epsilon/8m, \, j=1,\ldots,m\}\right]}
\prod_{j = 1}^m \frac{\Gamma(1+\alpha)}{\Gamma(\alpha)}(1-v_j)^{\alpha-1}\,  d v_1\ldots d v_m \\  
  &\times
\prod_{j = 1}^m P_0\left(|\tilde \mu_{j} - \mu_{j}| <\frac{\tilde\sigma_{j}\epsilon}{8m}, \,
|\tilde \sigma_{j}^2 - \sigma_{j}^2|<\frac{\tilde\sigma_{j}^2\epsilon}{24m}
\right) \, > \, 0.
\end{align*}

\end{proof}


Theorem~\ref{thm:support} follows easily from these lemmas. Suppose the assumptions A1-A3 of Theorem~\ref{thm:support} hold, let $(\mathfrak{f}, \mathfrak{g})\in LR_{\text{mix}}(\mathbbm{R})$ be given, and let $\mathbb{B}$ be a $d_1$-ball centered at $(\mathfrak{f}, \mathfrak{g}).$ We need to establish that $\Pi \circ T_1^{-1}$ assigns positive probability to $\mathbb{B}.$ By Theorem \ref{thm:fmixture_ac}, there exists $(\mathfrak{g} ,\mathfrak{u} , \theta_0) \in \Delta_{+} \times \Delta \times [0,1]$ such that $(\mathfrak{f}, \mathfrak{g}) = T_1(\mathfrak{g} ,\mathfrak{u} , \theta_0).$ The first lemma guarantees that $T_1^{-1}(\mathbb{B})$ contains a $d_2$-ball $\mathbb{H}$ centered at $(\mathfrak{g} ,\mathfrak{u} , \theta_0).$ It is sufficient to show that $\Pi$ assigns positive mass to $\mathbb{H}.$ Assumption A1 guarantees that $g,u,$ and $\theta$ are independent. That $\Pi(\mathbb{H}) > 0$ then follows from the full support of the prior for $\theta$ and the second lemma. 

\section*{Proof of Theorem \ref{thm:consistency}}

Suppose that the assumptions of Theorem \ref{thm:consistency} are satisfied and let $\epsilon > 0$ be given. If we can demonstrate that the Kullback-Leibler property \eqref{KLproperty} is satisfied, Theorem \ref{thm:consistency} follows from Schwartz's theorem \citep{Schwartz1965, Ghosal2017}. It is sufficient to show that there exists an open interval $\left(\underline{\theta},\overline{\theta}\right) \subset [0,1)$ and weak neighborhoods $\mathcal{V}_1$ and $\mathcal{V}_2$ on $\mathcal{P}(\mathbb{R} \times \mathbb{R}^+)$, the space of all probability measures defined on $\mathbb{R} \times \mathbb{R}^+$, such that $\left(\theta, P_1, P_2\right) \in \left(\underline{\theta},\overline{\theta}\right) \times \mathcal{V}_1 \times \mathcal{V}_2$ implies $\text{KL} \left( \mathfrak{m}, m \right) < \epsilon$. In that case, it follows from assumptions A1--A3  and Proposition 3 of \citet{Ferguson1973} that
\begin{equation}\label{KL_support_proof1}
\Pi \left\{  \text{KL} \left( \mathfrak{m}, m \right) < \epsilon \right\}
\geq
\Pi \left\{ \left(\underline{\theta},\overline{\theta}\right) \times \mathcal{V}_1 \times \mathcal{V}_2 \right\} 
> 0.    
\end{equation}
(Throughout the proof, we will use the notation $\Pi$ to refer to the prior measure rather than the more precise notation $\Pi \circ T_1^{-1}, \Pi \circ T_2^{-1},$ etc. to reduce the burden on the reader.)

One can show that
$$
\text{KL}(\mathfrak{m},m) 
 = q\cdot\text{KL}(\mathfrak{g},g) + (1-q)\cdot\text{KL}(\mathfrak{f},f)$$
which implies that
\begin{equation}\label{KL_support_proof2}
\Pi \left\{  \text{KL} \left( \mathfrak{m}, m \right) < \epsilon \right\}
>
\Pi \left\{ \text{KL} \left( \mathfrak{g}, g \right) < \frac{\epsilon}{2}, \,  \text{KL} \left( \mathfrak{f}, f \right) < \frac{\epsilon}{2}\right\}.
\end{equation}
By Theorem 3.2 in \citet{tokdar2006posterior} and assumptions B3--B6, for any $\epsilon_1>0$, there exists a weak neighborhood $\mathcal{V}_{1,1}$ such that $P_1 \in \mathcal{V}_{1,1}$ implies $\text{KL} \left( \mathfrak{g}, g \right) < \epsilon_1$. We will explicitly select $\epsilon_1$ at the end of this proof. For now, we focus on finding additional conditions on $(\theta, P_1, P_2)$ such that $\text{KL} \left( \mathfrak{f}, f \right)$ can be upper bounded by an arbitrarily small quantity less than $\epsilon/2$.  

Recall that
$$
f(x) = \theta g(x) + (1-\theta) \int_{-\infty}^{\infty} g^s(x) \, u(s) \,  ds
$$
and
$$
\mathfrak{f}(x) = \theta_0 \mathfrak{g}(x) + (1-\theta_0) \int_{-\infty}^{\infty} \mathfrak{g}^s(x) \, \mathfrak{u}(s) \,  ds.
$$
By Lemma 1 in \citet{Do2003}, we have that
\begin{align*}
     \text{KL} \left( \mathfrak{f}, f \right) & \leq
     \theta_0 \log\left(\frac{\theta_0}{\theta}\right) + (1-\theta_0) \log\left(\frac{1-\theta_0}{1-\theta}\right) + \\
     & \hspace{6mm} \theta_0 \text{KL} \left( \mathfrak{g}, g \right) +
     (1-\theta_0) \text{KL} \left\{ \int_{-\infty}^{\infty} \mathfrak{g}^s(\cdot) \, \mathfrak{u}(s) \,  ds, \int_{-\infty}^{\infty} g^s(\cdot) \, u(s) \,  ds \right\}\\
     & \leq
     \log\left(\frac{\theta_0}{\theta}\right) + \log\left(\frac{1-\theta_0}{1-\theta}\right) + \epsilon_1 + 
     \text{KL} \left\{ \int_{-\infty}^{\infty} \mathfrak{g}^s(\cdot) \, \mathfrak{u}(s) \,  ds, \int_{-\infty}^{\infty} g^s(\cdot) \, u(s) \,  ds \right\}
\end{align*}
when $P_1 \in \mathcal{V}_{1,1}.$ It follows from the inequality $\log(x) \leq x-1$ that 
\begin{align} \label{theta_ineq}
 & \log\left(\frac{\theta_{0}}{\theta}\right)+\log\left(\frac{1-\theta_{0}}{1-\theta}\right)  \leq \left(\frac{\theta_{0}}{\theta}-1\right)+\left(\frac{1-\theta_{0}}{1-\theta}-1\right).
\end{align}
For every $\epsilon_2>0$, we can construct an interval $\left(\underline{\theta},  \overline{\theta} \right)$ such that $\theta \in \left(\underline{\theta},  \overline{\theta} \right)$ implies 
\begin{align} \label{theta_inequality}
\left(\frac{\theta_{0}}{\theta}-1\right)+\left(\frac{1-\theta_{0}}{1-\theta}-1\right) \leq \epsilon_2 
\end{align}
as follows. For any $0< \delta_1 < \min(\theta_0,1-\theta_0)$ we have that $\left(\underline{\theta},  \overline{\theta} \right) = \left(\theta_0 - \delta_1, \theta_0 + \delta_1\right) \subset [0,1)$ and  
\begin{eqnarray}\nonumber
  \left(\frac{\theta_{0}}{\theta}-1\right)+\left(\frac{1-\theta_{0}}{1-\theta}-1\right) & < &\left(\frac{\theta_{0}}{\theta_0 - \delta_1}-1\right)+\left\{\frac{1-\theta_{0}}{1-(\theta_0 + \delta_1)}-1\right\} \\ \label{condition_theta}
  & = &\left\{\frac{\theta_{0}}{\theta_0 - \delta_1} + \frac{1-\theta_{0}}{1-(\theta_0 + \delta_1)}\right\}-2. 
\end{eqnarray}
As $\delta_1 \downarrow 0,$ the quantity \eqref{condition_theta} approaches zero. Thus, \eqref{theta_inequality} holds provided that $\delta_1$ is chosen to be sufficiently small. 
In that case, we have that
\begin{align} \label{KLf_ineq1}
     \text{KL} \left( \mathfrak{f}, f \right) &<
     \epsilon_1 + \epsilon_2 +
     \text{KL} \left\{ \int_{-\infty}^{\infty} \mathfrak{g}^s(\cdot) \, \mathfrak{u}(s) \,  ds, \int_{-\infty}^{\infty} g^s(\cdot) \, u(s) \,  ds \right\}
\end{align} when $\theta \in \left(\underline{\theta},  \overline{\theta} \right)$ and $P_1 \in \mathcal{V}_{1,1}.$

To further simplify the right hand side (RHS) of \eqref{KLf_ineq1}, we can apply the chain rule for the Kullback-Leibler divergence (see \citet[p.~92]{Wainwright2019} or \citet[p.~24]{Cover2006}). Let $\mathfrak{q}(x,s)$ and $q(x,s)$ be two joint densities on $\mathbbm{R}^2.$ The chain rule for the Kullback-Leibler divergence tells us that
\begin{eqnarray*}
\text{KL}\left\{\mathfrak{q}(x,s),q(x,s)\right\} & = & \text{KL}\left\{\mathfrak{q}(x), q(x) \right\}+\int_{-\infty}^{\infty}\text{KL}\left\{\mathfrak{q}(s \mid x),q(s \mid x)\right\} \mathfrak{q}(x) \, dx\\
 & = & \text{KL}\left\{\mathfrak{q}(s), q(s)\right\}+\int_{-\infty}^{\infty}\text{KL}\left\{\mathfrak{q}(x\mid s),q(x\mid s)\right\} \mathfrak{q}(s) \, ds
\end{eqnarray*}
which implies that
\begin{eqnarray*}
&  & \hspace{-15mm} \text{KL}\left\{\mathfrak{q}(x),q(x)\right\} \\
& = & \text{KL}\left\{\mathfrak{q}(x,s),q(x,s)\right\}-\int_{-\infty}^{\infty}\text{KL}\left\{\mathfrak{q}(s\mid x),q(s\mid x)\right\}\mathfrak{q}(x)\, dx\\
 & \leq & \text{KL}\left\{\mathfrak{q}(x,s),q(x,s)\right\}\\
 & = & \text{KL}\left\{\mathfrak{q}(s),q(s)\right\}+ \int_{-\infty}^{\infty}\text{KL}\left\{\mathfrak{q}(x\mid s),q(x\mid s)\right\} \mathfrak{q}(s) \, ds.
\end{eqnarray*}
Substituting in $\mathfrak{q}(x,s)=\mathfrak{g}^{s}(x)\mathfrak{u}(s)$ and $q(x,s)=g^{s}(x)u(s),$ we have 
\begin{align*}
\mathfrak{q}(x) &= \int_{-\infty}^{\infty}\mathfrak{g}^{s}(x)\mathfrak{u}(s) \, ds, \quad \mathfrak{q}(s) = \int_{-\infty}^{\infty}\mathfrak{g}^{s}(x)\mathfrak{u}(s) \, dx = \mathfrak{u}(s) \\
q(x) &= \int_{-\infty}^{\infty}g^{s}(x)u(s)\, ds, \quad q(s) = \int_{-\infty}^{\infty}g^{s}(x)u(s)\, ds = u(s) 
\end{align*}
with $\mathfrak{q}(x\mid s) = \mathfrak{g}^s(x)$ and $q(x\mid s) = g^s(x).$
Applying the previous inequality, we get that 
\begin{eqnarray*}
\text{KL} \left\{ \int_{-\infty}^{\infty} \mathfrak{g}^s(\cdot) \, \mathfrak{u}(s) \,  ds, \int_{-\infty}^{\infty} g^s(\cdot) \, u(s) \,  ds \right\} & \leq & \text{KL}\left(\mathfrak{u},u\right)+\int_{-\infty}^{\infty}\text{KL}\left\{\mathfrak{g}^{s}(\cdot),g^{s}(\cdot)\right\}\mathfrak{u}(s)\, ds.
\end{eqnarray*}
By Theorem 3.2 in \citet{tokdar2006posterior} and assumptions B3--B6, for every $\epsilon_3>0$, there exists a weak neighborhood $\mathcal{V}_2$ such that $P_2 \in \mathcal{V}_2$ implies $\text{KL} \left( \mathfrak{u}, u \right) < \epsilon_3$. Putting everything together, we see that 
\begin{align}
    \text{KL} \left( \mathfrak{f}, f \right) &<
     \epsilon_1 + \epsilon_2 +
     \text{KL} \left\{ \int_{-\infty}^{\infty} \mathfrak{g}^s(\cdot) \, \mathfrak{u}(s) \,  ds, \int_{-\infty}^{\infty} g^s(\cdot) \, u(s) \,  ds \right\} \nonumber \\
     & < \epsilon_1 + \epsilon_2 + 
     \text{KL}\left(\mathfrak{u},u\right)+\int_{-\infty}^{\infty}\text{KL}\left\{\mathfrak{g}^{s}(\cdot),g^{s}(\cdot)\right\}\mathfrak{u}(s)\, ds \nonumber \\
     & < \epsilon_1 + \epsilon_2 + \epsilon_3 +
     \int_{-\infty}^{\infty}\text{KL}\left\{\mathfrak{g}^{s}(\cdot),g^{s}(\cdot)\right\}\mathfrak{u}(s)\, ds \label{KLf_ineq2}
\end{align}
when $\theta \in \left(\underline{\theta}, \overline{\theta} \right), P_1 \in \mathcal{V}_{1,1},$ and $P_2 \in \mathcal{V}_2.$

The integral in \eqref{KLf_ineq2} can be expressed as 
\begin{eqnarray*}
 &  & \int_{-\infty}^{\infty}\text{KL}\left\{\mathfrak{g}^{s}(\cdot),g^{s}(\cdot)\right\}\mathfrak{u}(s)\, ds \\
 & = & \int_{-\infty}^{\infty}\int_{-\infty}^{\infty}\mathfrak{g}^{s}(x)\log\left\{\frac{\mathfrak{g}^{s}(x)}{g^{s}(x)}\right\} \, dx \, \mathfrak{u}(s)\, ds\\
 & = & \int_{-\infty}^{\infty}\int_{-\infty}^{s}\frac{\mathfrak{g}(x)}{\mathfrak{G}(s)}\log\left\{\frac{\frac{\mathfrak{g}(x)}{\mathfrak{G}(s)}}{\frac{g(x)}{G(s)}}\right\} \, dx \, \mathfrak{u}(s) \, ds\\
 & = & \int_{-\infty}^{\infty}\int_{-\infty}^{s}\frac{\mathfrak{g}(x)}{\mathfrak{G}(s)}\log\left\{\frac{\mathfrak{g}(x)}{g(x)}\frac{G(s)}{\mathfrak{G}(s)}\right\} \, dx \, \mathfrak{u}(s) \, ds\\
 & = & \int_{-\infty}^{\infty}\int_{-\infty}^{s}\frac{\mathfrak{g}(x)}{\mathfrak{G}(s)}\log\left\{\frac{\mathfrak{g}(x)}{g(x)}\right\} \, dx \, \mathfrak{u}(s) \, ds + \int_{-\infty}^{\infty}\int_{-\infty}^{s}\frac{\mathfrak{g}(x)}{\mathfrak{G}(s)} \, dx \, \log\left\{\frac{G(s)}{\mathfrak{G}(s)}\right\}\mathfrak{u}(s) \, ds\\
 & = & \underbrace{\int_{-\infty}^{\infty}\int_{-\infty}^{s}\frac{\mathfrak{g}(x)}{\mathfrak{G}(s)}\log\left\{\frac{\mathfrak{g}(x)}{g(x)}\right\}\, dx \, \mathfrak{u}(s) \, ds}_{\text{Term I}} + \underbrace{\int_{-\infty}^{\infty}\log\left\{\frac{G(s)}{\mathfrak{G}(s)}\right\} \mathfrak{u}(s) \, ds}_{\text{Term II}}.
\end{eqnarray*}
To complete the proof, we will demonstrate that both Term I and Term II can be bounded above by an arbitrarily small positive constant with positive prior probability.

First, we consider Term II. We have that
\begin{eqnarray} 
 \label{apply_log_ineq}
\int_{-\infty}^{\infty}\log\left\{\frac{G(s)}{\mathfrak{G}(s)}\right\}\mathfrak{u}(s)\, ds & \leq & \int_{-\infty}^{\infty}\left\{\frac{G(s)}{\mathfrak{G}(s)}-1\right\}\mathfrak{u}(s)\, ds \\
 \nonumber
 & = & \int_{-\infty}^{\infty}\frac{G(s)}{\mathfrak{G}(s)}\mathfrak{u}(s)\, ds - 1\\
 \nonumber
 & \leq & \int_{-\infty}^{\infty}\frac{\mathfrak{u}(s)}{\mathfrak{G}(s)}\, ds \\ 
 \label{apply_ratio}
 & = & \frac{\lim_{x \downarrow -\infty}\frac{\mathfrak{f}(x)}{\mathfrak{g}(x)} - \theta_0}{1-\theta_0} \\ 
 \label{apply_assumption_B2}
 & < & \infty
\end{eqnarray}
where \eqref{apply_log_ineq} follows from the inequality $\log(s) \leq s - 1,$ \eqref{apply_ratio} follows from equation \eqref{ratio} in the proof of Theorem \ref{thm:fmixture_ac}, and \eqref{apply_assumption_B2} follows from assumption B2. Because Term II is finite, for every $\epsilon_4>0$, there must exist $s_0 \in \mathbb{R}$ such that
\begin{eqnarray*}
\int_{-\infty}^{\infty}\log\left\{\frac{G(s)}{\mathfrak{G}(s)}\right\}\mathfrak{u}(s)\, ds & = & \int_{-\infty}^{s_{0}}\log\left\{\frac{G(s)}{\mathfrak{G}(s)}\right\}\mathfrak{u}(s) \, ds+\int_{s_0}^{\infty}\log\left\{\frac{G(s)}{\mathfrak{G}(s)}\right\}\mathfrak{u}(s) \, ds \\
 & \leq & \epsilon_4 +\int_{s_0}^{\infty}\log\left\{\frac{G(s)}{\mathfrak{G}(s)}\right\}\mathfrak{u}(s)\, ds.
\end{eqnarray*}
Applying the inequality $\log(s) \leq s-1$ once again, we get that 
\begin{eqnarray*}
\int_{s0}^{\infty}\log\left\{\frac{G(s)}{\mathfrak{G}(s)}\right\}\mathfrak{u}(s) \, ds & \leq & \int_{s0}^{\infty}\left\{\frac{G(s)}{\mathfrak{G}(s)}-1\right\}\mathfrak{u}(s) \, ds\\
 & \leq & \sup_{s\geq s_{0}}\left\{ \frac{G(s)}{\mathfrak{G}(s)}-1\right\}.
\end{eqnarray*}
We can find $\delta_{2}>0$ such that
$
\sup_{s\geq s_{0}} \left| G(s) - \mathfrak{G}(s) \right| < \delta_2
$
and   $P_1 \in \mathcal{V}_{1,1}$,
implies
\begin{align} \label{G_inequality}
\sup_{s\geq s_{0}}\left\{ \frac{G(s)}{\mathfrak{G}(s)}-1\right\} <  \frac{\sqrt{\epsilon_1}}{\mathfrak{G}(s_0)}
\end{align}
as follows. If $\left| G(s) - \mathfrak{G}(s) \right| < \delta_2$ for all $s  \geq s_0,$ then 
\begin{eqnarray*}
-\delta_{2}< & G(s)-\mathfrak{G}(s) & <\delta_{2}\\
\implies\quad\quad \mathfrak{G}(s)-\delta_{2}< & G(s) & <\mathfrak{G}(s)+\delta_{2}\\
\implies \frac{\mathfrak{G}(s)-\delta_{2}}{\mathfrak{G}(s)}-1< & \frac{G(s)}{\mathfrak{G}(s)}-1 & <\frac{\mathfrak{G}(s)+\delta_{2}}{\mathfrak{G}(s)}-1.
\end{eqnarray*}
Because
$$
\frac{\mathfrak{G}(s)+\delta_{2}}{\mathfrak{G}(s)}-1 = \frac{\delta_{2}}{\mathfrak{G}(s)}
$$
and 
$0 < \mathfrak{G}(s_0) \leq \mathfrak{G}(s)$ for all $s \geq s_0,$ we have that
$$
\frac{G(s)}{\mathfrak{G}(s)} -1
<\frac{\delta_{2}}{\mathfrak{G}(s)}
< \frac{\delta_{2}}{\mathfrak{G}(s_0)}, \quad s \geq s_0.
$$
Thus, the inequality \eqref{G_inequality} holds provided that $\delta_{2} < \sqrt{\epsilon_1}$.
The density $\mathfrak{g}$ is in the Kullback-Leibler support of the prior distribution associated with $g.$ It follows that $\mathfrak{g}$ is also in the $L_1$-support of the prior distribution associated with $g.$ The $L_1$-distance and total variation distance are equivalent metrics. Setting $\delta_{2} = \sqrt{\epsilon_1/2},$ Pinsker's inequality and the event
$P_1 \in \mathcal{V}_{1,1}$ implies  $$\sup_{s\geq s_{0}}\left|G(s)-\mathfrak{G}(s)\right| < \sqrt{\frac{\epsilon_1}{2}}$$ and thus
\begin{eqnarray*}
\int_{-\infty}^{\infty}\log\left(\frac{G(s)}{\mathfrak{G}(s)}\right)\mathfrak{u}(s)ds  & < & \epsilon_4 + \frac{\sqrt{\epsilon_1}}{\mathfrak{G}(s_0)}.
\end{eqnarray*} 

Next, we show that Term I is smaller than arbitrary $\epsilon_5>0$ with positive prior probability. The argument is somewhat involved and draws upon ideas from \citet{tokdar2006posterior} (the proofs of Theorem 3.2 and Lemma 3.1) and \citet{Ghosal1999} (the proof of Theorem 3). We introduce a sequence of densities $\left\{\tilde{\mathfrak{g}}(\cdot \, ;\, n) \mid n \geq 1\right\}$ defined on $\mathbb{R}$ and rewrite Term I as
\begin{align}
\nonumber
&\int_{-\infty}^{\infty}\int_{-\infty}^{s}\frac{\mathfrak{g}(x)}{\mathfrak{G}(s)}\log\left\{\frac{\mathfrak{g}(x)}{g(x)}\right\} dx \, \mathfrak{u}(s)\, ds \\ 
\nonumber
\quad=& \int_{-\infty}^{\infty}\int_{-\infty}^{\infty}\mathbbm{1}_{(-\infty,s]}(x)\mathfrak{g}(x)\log\left\{\frac{\mathfrak{g}(x)}{g(x)}\right\} dx \, \frac{\mathfrak{u}(s)}{\mathfrak{G}(s)} \, ds\\
\nonumber
 \quad=&  
 \nonumber \int_{-\infty}^{\infty}\int_{-\infty}^{\infty}\mathbbm{1}_{(-\infty,s]}(x)\mathfrak{g}(x)\log\left\{\frac{\mathfrak{g}(x)}{g(x)}\frac{\tilde{\mathfrak{g}}(x;n)}{\tilde{\mathfrak{g}}(x;n)}\right\} dx \, \frac{\mathfrak{u}(s)}{\mathfrak{G}(s)} \, ds\\
 \label{TermIII}
 \quad=&  \int_{-\infty}^{\infty}\int_{-\infty}^{\infty}\mathbbm{1}_{(-\infty,s]}(x)\mathfrak{g}(x)\log\left\{\frac{\mathfrak{g}(x)}{\tilde{\mathfrak{g}}(x;n)}\right\} dx \, \frac{\mathfrak{u}(s)}{\mathfrak{G}(s)} \, ds \\
 \label{TermIV}
 & + \int_{-\infty}^{\infty}\int_{-\infty}^{\infty}\mathbbm{1}_{(-\infty,s]}(x)\mathfrak{g}(x)\log\left\{\frac{\tilde{\mathfrak{g}}(x;n)}{g(x)}\right\} dx \, \frac{\mathfrak{u}(s)}{\mathfrak{G}(s)} \, ds.
\end{align}
It suffices to show that we can upper bound both \eqref{TermIII} and \eqref{TermIV} by $\epsilon_5/2$ with positive prior probability. 

We first consider the term \eqref{TermIII}. Nothing about this term is random. Our task is to find a sequence $\left\{\tilde{\mathfrak{g}}(\cdot \, ;\, n) \mid n \geq 1\right\}$ and a natural number $n_0$ such that
$$
\int_{-\infty}^{\infty}\int_{-\infty}^{\infty}\mathbbm{1}_{(-\infty,s]}(x)\mathfrak{g}(x)\log\left\{\frac{\mathfrak{g}(x)}{\tilde{\mathfrak{g}}(x;n_0)}\right\} dx \, \frac{\mathfrak{u}(s)}{\mathfrak{G}(s)} \, ds < \frac{\epsilon_5}{2}.
$$
In the proof of Theorem 3.2 from \citet{tokdar2006posterior}, the author shows that there exists a sequence of probability measures $\{\tilde{\mathfrak{P}}_n \mid n \geq 1\}$ with each $\tilde{\mathfrak{P}}_n$ defined on $[-n,n] \times \{\sigma^2_n\}$ such that the sequence $$\left\{\tilde{\mathfrak{g}}(\cdot \, ; \, n) := \int_{[-n,n]} \phi_{\mu,\sigma^2}(\cdot) \, \tilde{\mathfrak{P}}_n(d\mu,d\sigma^2) \Biggm| n \geq 1\right\}$$ satisfies 
$$
\lim_{n \rightarrow \infty} \log\left\{\frac{\mathfrak{g}(x)}{\tilde{\mathfrak{g}}(x;n)}\right\}
= 0, \quad x\in \mathbbm{R}
$$
and 
$$
\left|\log\left\{\frac{\mathfrak{g}(x)}{\tilde{\mathfrak{g}}(x;n)}\right\}\right| \leq \mathfrak{L}(x), \quad x\in \mathbbm{R}
$$
where $\mathfrak{L}(x)$ is a $\mathfrak{g}$ integrable function. It follows that
$$
\lim_{n \rightarrow \infty} 
\mathbbm{1}_{(-\infty,s]}(x)\log\left\{\frac{\mathfrak{g}(x)}{\tilde{\mathfrak{g}}(x;n)}\right\}
= 0, \quad s, x \in \mathbbm{R} 
$$
and 
$$
\mathbbm{1}_{(-\infty,s]}(x)\left|\log\left\{\frac{\mathfrak{g}(x)}{\tilde{\mathfrak{g}}(x;n)}\right\}\right| \leq \mathfrak{L}(x), \quad s, x \in \mathbbm{R}
$$
as well. We can conclude that $\mathfrak{L}(x)$ is $(x,s) \mapsto \mathfrak{g}(x) \frac{\mathfrak{u}(s)}{\mathfrak{G}(s)}$ integrable because it is $\mathfrak{g}$ integrable, it does not depend on $s,$ and $\int_{-\infty}^{\infty}\frac{\mathfrak{u}(s)}{\mathfrak{G}(s)}\, ds < \infty.$ By the dominated convergence theorem, we have that
$$
\lim_{n \rightarrow \infty} 
\int_{-\infty}^{\infty}\int_{-\infty}^{\infty}\mathbbm{1}_{(-\infty,s]}(x)\mathfrak{g}(x)\log\left\{\frac{\mathfrak{g}(x)}{\tilde{\mathfrak{g}}(x;n)}\right\} dx \, \frac{\mathfrak{u}(s)}{\mathfrak{G}(s)}\, ds = 0.
$$
Thus, we can find a natural number $n_{0,1}$ such that, for all $n \geq n_{0,1},$ the term \eqref{TermIII} is as small as desired, say less than $\epsilon_5/2$.  We will see at the conclusion of the proof, when we discuss how to choose the weak neighborhood $\mathcal{V}_1$, that we have an additional requirement related to this sequence. In particular, we also need to find $n_{0,2}$ such that, for all $n \geq n_{0,2}$,
\begin{equation} \label{V1_ineq_n0}
\int_{-\infty}^{\infty}\mathfrak{g}(x)\log\left\{\frac{\mathfrak{g}(x)}{\tilde{\mathfrak{g}}(x;n)}\right\}dx < \frac{\epsilon_5}{2} \left\{\int_{-\infty}^{\infty}\frac{\mathfrak{u}(s)}{\mathfrak{G}(s)}ds\right\}^{-1}.    
\end{equation}
Such an $n_{0,2}$ is guaranteed to exist because 
$$
\lim_{n \rightarrow \infty} 
\log\left\{\frac{\mathfrak{g}(x)}{\tilde{\mathfrak{g}}(x;n)}\right\}
= 0, \quad x \in \mathbbm{R}
$$
and 
$$
\left|\log\left\{\frac{\mathfrak{g}(x)}{\tilde{\mathfrak{g}}(x;n)}\right\}\right| \leq \mathfrak{L}(x), \quad x \in \mathbbm{R}.
$$
By setting $n_0 = \max(n_{0,1}, \, n_{0,2})$, we ensure that the term \eqref{TermIII} is less than $\epsilon_5/2$ and the inequality \eqref{V1_ineq_n0} holds.

We now let $n=n_0$ and turn our attention to establishing that the term \eqref{TermIV} is bounded above by $\epsilon_5/2$ with positive prior probability. Notice that we can rewrite this term as 
\begin{eqnarray}\nonumber
& & \hspace{-10mm} \int_{-\infty}^{\infty}\int_{-\infty}^{\infty}\mathbbm{1}_{(-\infty,s]}(x)\mathfrak{g}(x)\log\left\{\frac{\tilde{\mathfrak{g}}(x;n_0)}{g(x)}\right\}dx\,\frac{\mathfrak{u}(s)}{\mathfrak{G}(s)}ds \\\label{TermV}
& = & \int_{-\infty}^{\infty}\int_{|x|>k}\mathbbm{1}_{(-\infty,s]}(x)\mathfrak{g}(x)\log\left\{\frac{\tilde{\mathfrak{g}}(x;n_0)}{g(x)}\right\}dx \, \frac{\mathfrak{u}(s)}{\mathfrak{G}(s)}ds\\\label{TermVI}
& + & \int_{-\infty}^{\infty}\int_{-k}^{k}\mathbbm{1}_{(-\infty,s]}(x)\mathfrak{g}(x)\log\left\{\frac{\tilde{\mathfrak{g}}(x;n_0)}{g(x)}\right\} dx \, \frac{\mathfrak{u}(s)}{\mathfrak{G}(s)}ds.
\end{eqnarray}
Thus, bounding the term \eqref{TermV} and the term \eqref{TermVI} above by $\epsilon_5/4$ with positive probability is sufficient. 

To bound the term \eqref{TermV}, we use an argument adapted from the proof of Lemma 3.1 in \citet{tokdar2006posterior}. Choose positive values $\underline{\sigma}$ and $\overline{\sigma}$ such that $\underline{\sigma} < \overline{\sigma}$ and $\sigma^2_{n_0} \in (\underline{\sigma},\, \overline{\sigma})$, and select $k>n_0+\overline{\sigma}$ such that 
$$\int_{|x|>k}\mathfrak{g}(x)\left\{\frac{(|x|+n_{0})^{2}}{2\underline{\sigma}^{2}}\right\}dx < \frac{\epsilon_5}{4} \left\{\int_{-\infty}^{\infty}\frac{\mathfrak{u}(s)}{\mathfrak{G}(s)}ds\right\}^{-1}.$$ If $P_1 \in \mathcal{V}_{1,2} = \left\{P \in \mathcal{P}(\mathbb{R} \times \mathbb{R}^+) \mid P([-n_0,n_0]\times[\underline{\sigma},\overline{\sigma}])>\underline{\sigma}/\overline{\sigma}\right\}$, then
\begin{eqnarray}
\nonumber
 &  & \int_{-\infty}^{\infty}\int_{|x|>k}\mathbbm{1}_{(-\infty,s]}(x)\mathfrak{g}(x)\log\left\{\frac{\tilde{\mathfrak{g}}(x;n_0)}{g(x)}\right\} dx \, \frac{\mathfrak{u}(s)}{\mathfrak{G}(s)}ds\\\nonumber
 & = & \int_{-\infty}^{\infty}\int_{|x|>k}\mathbbm{1}_{(-\infty,s]}(x)\mathfrak{g}(x)\log\left\{\frac{\int_{[\underline{\sigma},\overline{\sigma}]}\int_{[-n_{0},n_{0}]}\phi_{\mu, \sigma^{2}}(x)\tilde{\mathfrak{P}}_{n_0}(d\mu,d\sigma^2)}{\int_{\mathbbm{R}\times\mathbbm{R}^{+}}\phi_{\mu,\sigma^{2}}(\cdot)P_{1}(d\mu,d\sigma^{2})}\right\} dx \, \frac{\mathfrak{u}(s)}{\mathfrak{G}(s)}ds\\\nonumber
 & \leq & \int_{-\infty}^{\infty}\int_{|x|>k}\mathbbm{1}_{(-\infty,s]}(x)\mathfrak{g}(x)\log\left\{\frac{\phi_{n_{0},\overline{\sigma}^{2}}(|x|)}{\phi_{-n_{0},\underline{\sigma}^{2}}(|x|)P_{1}(\ensuremath{\mathcal{V}_{1,4}})}\right\} dx \, \frac{\mathfrak{u}(s)}{\mathfrak{G}(s)}ds\\\nonumber
 & \leq & \int_{-\infty}^{\infty}\int_{|x|>k}\mathbbm{1}_{(-\infty,s]}(x)\mathfrak{g}(x)\left\{\frac{(|x|+n_{0})^{2}}{2\underline{\sigma}^{2}}\right\} dx \, \frac{\mathfrak{u}(s)}{\mathfrak{G}(s)}ds\\\nonumber
 & \leq & \int_{|x|>k}\mathfrak{g}(x)\left\{\frac{(|x|+n_{0})^{2}}{2\underline{\sigma}^{2}}\right\}dx \, 
 \int_{-\infty}^{\infty}\frac{\mathfrak{u}(s)}{\mathfrak{G}(s)}ds \\\label{V1_ineq_1}
 & < &  \frac{\epsilon_5}{4}.
\end{eqnarray}

Now we consider bounding the term \eqref{TermVI}. By the argument of \citet{Ghosal1999} referenced in the second paragraph of the proof of Lemma 3.1 in \citet{tokdar2006posterior}, 
there exists a neighborhood $\mathcal{V}_{1,3}$ such that $P_1 \in \mathcal{V}_{1,3},$ $\delta<1/3$, and $x \in [-n_0,n_0]$ imply that
$$
\left|\frac{\tilde{\mathfrak{g}}(x;n_0)}{g(x)}-1\right| < \frac{3\delta}{1-3\delta},
$$
which then implies that 
$$
\mathbbm{1}_{(-\infty,s]}(x)\left|\frac{\tilde{\mathfrak{g}}(x;n_0)}{g(x)}-1\right| < \frac{3\delta}{1-3\delta}.
$$ 
Combining this result with the inequality $\log(x) \leq x-1 \leq |x-1|$, we get that
\begin{eqnarray}\nonumber
 &  & \int_{-\infty}^{\infty}\int_{-k}^{k}\mathbbm{1}_{(-\infty,s]}(x)\mathfrak{g}(x)\log\left\{\frac{\tilde{\mathfrak{g}}(x;n_0)}{g(x)}\right\}dx \, \frac{\mathfrak{u}(s)}{\mathfrak{G}(s)}ds\\\nonumber
 & \leq & \int_{-\infty}^{\infty}\int_{-k}^{k}\mathbbm{1}_{(-\infty,s]}(x)\mathfrak{g}(x)\left|\frac{\tilde{\mathfrak{g}}(x;n_0)}{g(x)}-1\right|dx\frac{\mathfrak{u}(s)}{\mathfrak{G}(s)}ds\\\nonumber
 & \leq & \int_{-\infty}^{\infty}\int_{-k}^{k}\mathfrak{g}(x)\left|\frac{\tilde{\mathfrak{g}}(x;n_0)}{g(x)}-1\right|dx\frac{\mathfrak{u}(s)}{\mathfrak{G}(s)}ds\\\label{V1_ineq_2}
 & < & \frac{3\delta}{1-3\delta}\int_{-\infty}^{\infty}\frac{\mathfrak{u}(s)}{\mathfrak{G}(s)}ds.
\end{eqnarray} 
Therefore, by choosing $\delta$ such that $$\frac{3\delta}{1-3\delta}<\frac{\epsilon_5}{4} \left\{\int_{-\infty}^{\infty}\frac{\mathfrak{u}(s)}{\mathfrak{G}(s)}ds\right\}^{-1},$$ the term \eqref{TermVI} is bounded above by $\epsilon_5/4$. 

The results presented thus far demonstrate that 
$$
\text{KL} \left( \mathfrak{g}, g \right) < \epsilon_1
\,\mbox{ and }  
\text{KL} \left( \mathfrak{f}, f \right) < \epsilon_1 +  \frac{\sqrt{\epsilon_1}}{\mathfrak{G}(s_0)} + \epsilon_2 + \epsilon_3 + \epsilon_4 + \epsilon_5.
$$
We now describe how to choose $\epsilon_1, \ldots, \epsilon_5$, such that 
$$
\text{KL} \left( \mathfrak{g}, g \right) < \frac{\epsilon}{2}
\,\mbox{ and }  
\text{KL} \left( \mathfrak{f}, f \right) < \frac{\epsilon}{2}
$$
as desired in \eqref{KL_support_proof1} and \eqref{KL_support_proof2}. 

We first set $\epsilon_4 = \epsilon/10$, determining the value of $\mathfrak{G}(s_0)$. Subsequently, we select $\epsilon_5$, which leads to an upper bound for $\epsilon_1$ due to the relationship between $\mathcal{V}_{1,1}$, $\mathcal{V}_{1,2}$, and $\mathcal{V}_{1,3}$ as  weak neighborhoods of $\mathfrak{g}.$ Notice that $\mathfrak{P}_{n_0} \in \mathcal{V}_{1,2} \cap \mathcal{V}_{1,3}$, implying that $\mathcal{V}_{1,2} \cap \mathcal{V}_{1,3}$ is not empty and corresponds to an open set under the weak topology. Also, notice that $P_1 \in \mathcal{V}_{1,2} \cap \mathcal{V}_{1,3}$ implies
\begin{eqnarray*}
 \text{KL} \left( \mathfrak{g}, g \right) 
 & = & \int_{-\infty}^{\infty}\mathfrak{g}(x)\log\left\{\frac{\mathfrak{g}(x)}{g(x)}\right\}dx \\
 & = & \int_{-\infty}^{\infty}\mathfrak{g}(x)\log\left\{\frac{\mathfrak{g}(x)}{g(x)}\frac{\tilde{\mathfrak{g}}(x;n_{0})}{\tilde{\mathfrak{g}}(x;n_{0})}\right\}dx\\
 & = & \int_{-\infty}^{\infty}\mathfrak{g}(x)\log\left\{\frac{\mathfrak{g}(x)}{\tilde{\mathfrak{g}}(x;n_{0})}\right\}dx+\int_{-\infty}^{\infty}\mathfrak{g}(x)\log\left\{\frac{\tilde{\mathfrak{g}}(x;n_{0})}{g(x)}\right\}dx\\
 & = & \int_{-\infty}^{\infty}\mathfrak{g}(x)\log\left\{\frac{\mathfrak{g}(x)}{\tilde{\mathfrak{g}}(x;n_{0})}\right\}dx+\\
 &  & \int_{|x|>k}\mathfrak{g}(x)\log\left\{\frac{\tilde{\mathfrak{g}}(x;n_{0})}{g(x)}\right\}dx+\int_{-k}^{k}\mathfrak{g}(x)\log\left\{\frac{\tilde{\mathfrak{g}}(x;n_{0})}{g(x)}\right\}dx\\
 & < & \underbrace{\frac{\epsilon_5}{2}\left\{\int_{-\infty}^{\infty}\frac{\mathfrak{u}(s)}{\mathfrak{G}(s)}ds\right\}^{-1}}_{\mbox{by \eqref{V1_ineq_n0}}} + \underbrace{\int_{|x|>k}\mathfrak{g}(x)\left\{\frac{(|x|+n_{0})^{2}}{2\underline{\sigma}^{2}}\right\}dx}_{\mbox{see proof of Lemma 3.1 in \citet{tokdar2006posterior}}} + \\
 &  & \underbrace{\int_{-\infty}^{\infty}\int_{-k}^{k}\mathfrak{g}(x)\left|\frac{\tilde{\mathfrak{g}}(x;n_0)}{g(x)}-1\right|dx}_{\mbox{see proof of Theorm 3 in \citet{Ghosal1999}}}\\
  & < & \frac{\epsilon_5}{2}\left\{\int_{-\infty}^{\infty}\frac{\mathfrak{u}(s)}{\mathfrak{G}(s)}ds\right\}^{-1}+
  \underbrace{\frac{\epsilon_5}{2}\left\{\int_{-\infty}^{\infty}\frac{\mathfrak{u}(s)}{\mathfrak{G}(s)}ds\right\}^{-1}}_{\mbox{by \eqref{V1_ineq_1} and \eqref{V1_ineq_2}}}\\
 & = & \epsilon_5\left\{\int_{-\infty}^{\infty}\frac{\mathfrak{u}(s)}{\mathfrak{G}(s)}ds\right\}^{-1}.
\end{eqnarray*}
Therefore, we propose to choose $\epsilon_5$ such that 
 $$
 \min\left[
 \epsilon_5, 
 \epsilon_5\left\{\int_{-\infty}^{\infty}\frac{\mathfrak{u}(s)}{\mathfrak{G}(s)}ds\right\}^{-1}
 +
 \frac{1}{\mathfrak{G}(s_0)}\sqrt{\epsilon_5\left\{\int_{-\infty}^{\infty}\frac{\mathfrak{u}(s)}{\mathfrak{G}(s)}ds\right\}^{-1}}
 \right] < \frac{\epsilon}{10}
 $$ and  $\mathcal{V}_{1,1} \subset \mathcal{V}_{1,2} \cap \mathcal{V}_{1,3}$ with $\mathfrak{P}_{n_0} \in \mathcal{V}_{1,1}$. We propose to make $\mathcal{V}_1$ equal to  $\mathcal{V}_{1,1}$. With this choice of $\mathcal{V}_{1}$, we guarantee that $\epsilon_1 + \sqrt{\epsilon_1}/\mathfrak{G}(s_0) < \epsilon/10$, implying that, for $P_1 \in \mathcal{V}_{1}$,
$$
\text{KL} \left( \mathfrak{g}, g \right) < \frac{\epsilon}{10}
\,\mbox{ and }  
\text{KL} \left( \mathfrak{f}, f \right) < \frac{3\epsilon}{10} + \epsilon_2 + \epsilon_3.
$$

Finally, we propose setting $\epsilon_2$ and $\epsilon_3$ equal to $\epsilon/10$. By doing so, we determine $\left(\underline{\theta},\overline{\theta}\right)$, $s_0$, and $\mathcal{V}_2$. We now have that $(\theta, P_1, P_2) \in \left(\underline{\theta},\overline{\theta}\right) \times \mathcal{V}_1 \times \mathcal{V}_2$ implies $\text{KL} \left( \mathfrak{m}, m \right) < \epsilon$. By assumptions A1--A3  and Proposition 3 of \citet{Ferguson1973}, we have
$\Pi \left\{  \text{KL} \left( \mathfrak{m}, m \right) < \epsilon \right\}
\geq
\Pi \left\{ \left(\underline{\theta},\overline{\theta}\right) \times \mathcal{V}_1 \times \mathcal{V}_2 \right\} 
= \Pi \left\{ \left(\underline{\theta},\overline{\theta}\right) \right\}\times \Pi \left( \mathcal{V}_1\right) \times \Pi \left( \mathcal{V}_2 \right) > 0,$ as desired in \eqref{KL_support_proof1}. 

\newpage

\section*{Mixture representations for count distributions} \label{count_results}

There are analogous mixture representations for count distributions with countably infinite support. Let $\mathbbm{N} = \{1, 2, ... \}$  and $\mathbbm{N}_0 = \mathbbm{N} \cup \{0\}.$ Suppose that $f$ and $g$ are probability mass functions with $\sum_{x_i \in \mathbbm{N}_0} f(x_i) = 1, \sum_{x_i \in \mathbbm{N}_0} g(x_i) = 1,$ and $g > 0$ on $\mathbbm{N}_0.$ 

\begin{theorem*}[$f$ as a mixture]    The ratio $f/g$ is non-increasing on $\mathbb{N}_0$ if and only if there exists $\theta \in [0,1]$ and mass function $u$ with $\sum_{x_j \in \mathbbm{N}_0} u(x_j) = 1$ 
such that 
\begin{align}  \label{fmixture_count}
    f(x_i) &= \theta g(x_i) + (1-\theta) \sum_{x_j \in \mathbbm{N}_0}  g^{x_j}(x_i) u(x_j), \quad x_i \in \mathbbm{N}_0.
\end{align} Suppose a mixture representation \eqref{fmixture_count} exists. Then $\theta = \lim_{x_i \rightarrow \infty} f(x_i)/g(x_i).$ If $\theta \in [0,1),$ $u$ is also uniquely determined with 
\begin{align*}
    u(x_j) = \frac{G(x_j)}{1-\theta}\left\{\frac{f(x_j)}{g(x_j)} - \frac{f(x_{j+1})}{g(x_{j+1})}\right\}, \quad x_j \in \mathbb{N}_0.
\end{align*}
\end{theorem*}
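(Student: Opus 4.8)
The plan is to adapt the proof of Theorem~\ref{thm:fmixture_finite} to the countably infinite setting, the essential new difficulty being that the finite algebraic identities must be replaced by limiting arguments. Writing the support as $x_0 < x_1 < x_2 < \cdots,$ I would organize the argument into the same four pieces: the backward direction, the existence and characterization of $\theta,$ the construction and verification of $u,$ and uniqueness.

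For the backward direction, substituting $g^{x_j}(x_i) = g(x_i)\mathbbm{1}_{(-\infty,x_j]}(x_i)/G(x_j)$ into \eqref{fmixture_count} and dividing by $g(x_i)$ gives
\begin{align*}
r(x_i) = \theta + (1-\theta)\sum_{x_j \geq x_i} \frac{u(x_j)}{G(x_j)}, \quad x_i \in \mathbbm{N}_0,
\end{align*}
exactly as in \eqref{discrete_ratio}. The series converges because its terms are nonnegative and $r(x_0)$ is finite, and as $x_i$ increases the index set $\{x_j : x_j \geq x_i\}$ shrinks, so $r$ is a non-increasing function of $x_i.$

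For the forward direction, I would set $\theta = \lim_{x_i \to \infty} r(x_i),$ which exists and equals $\inf_{x_i} r(x_i)$ because $r$ is non-increasing and bounded below by $0.$ As in the continuous case, one checks $\theta \in [0,1]$ with $\theta = 1$ if and only if $f = g$ on $\mathbbm{N}_0$: if $\theta > 1$ then $f > g$ everywhere, contradicting $\sum f = \sum g = 1,$ while $\theta = 1$ forces $f \geq g$ everywhere and hence $f = g.$ When $\theta = 1$ the representation holds for any mass function $u;$ when $\theta \in [0,1)$ I would define $u(x_j) = G(x_j)\{r(x_j) - r(x_{j+1})\}/(1-\theta),$ which is nonnegative because $r$ is non-increasing. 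To see that $u$ sums to one, I would apply summation by parts to the partial sum $\sum_{j=0}^N G(x_j)\{r(x_j) - r(x_{j+1})\},$ using $G(x_0) = g(x_0)$ and $G(x_j) - G(x_{j-1}) = g(x_j)$ to collapse it to $\sum_{j=0}^N f(x_j) - G(x_N) r(x_{N+1}).$ Letting $N \to \infty,$ the first term tends to $1$ while the boundary term tends to $\theta$ since $G(x_N) \to 1$ and $r(x_{N+1}) \to \theta,$ so the total is $1 - \theta$ and $\sum_j u(x_j) = 1.$ Finally, $(1-\theta)u(x_j) = G(x_j)\{r(x_j)-r(x_{j+1})\}$ turns the right-hand side of \eqref{fmixture_count} into $\theta g(x_i) + g(x_i)\sum_{x_j \geq x_i}\{r(x_j) - r(x_{j+1})\};$ this inner sum telescopes to $r(x_i) - \theta,$ yielding $f(x_i).$

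For uniqueness, taking $x_i \to \infty$ in the displayed expression for $r$ shows the tail of the convergent series vanishes, so $\theta = \lim_{x_i\to\infty} r(x_i).$ When $\theta \in [0,1),$ if a second mass function $\tilde u$ also satisfied \eqref{fmixture_count}, subtracting the two representations and dividing by $(1-\theta)g(x_i) > 0$ would give $\sum_{x_j \geq x_i}\{u(x_j) - \tilde u(x_j)\}/G(x_j) = 0$ for every $x_i.$ Differencing this identity at $x_i$ and $x_{i+1}$ isolates the single term $\{u(x_i) - \tilde u(x_i)\}/G(x_i) = 0,$ so $u = \tilde u.$ This differencing replaces the top-down recursion used in the finite case, which is unavailable without a largest point. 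I expect the main obstacle to be the passage to the limit in the summation-by-parts step, in particular verifying that the boundary term $G(x_N) r(x_{N+1})$ converges to $\theta;$ the remainder is a routine adaptation of the finite-dimensional argument.
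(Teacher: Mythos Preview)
Your proposal is correct and follows essentially the same route as the paper, which states that the proof is ``nearly identical'' to that of Theorem~\ref{thm:fmixture_finite} and (in a suppressed draft) singles out exactly the limiting argument you give for the boundary term $G(x_N)r(x_{N+1}) \to \theta$ when verifying $\sum_j u(x_j)=1$. Your differencing argument for uniqueness is the natural replacement for the finite-case top-down recursion and is a clean way to handle the absence of a largest point.
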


\begin{theorem*}[$g$ as a mixture] 
The ratio $g/f$ is non-decreasing on $\mathbb{N}_0$ if and only if there exists $\omega \in [0,1]$ and a mass function $v$ with $\sum_{x_j \in \mathbb{N}} v(x_j) = 1$ such that
\begin{align} \label{gmixture_count}
    g(x_i) &= \omega f(x_i) + (1-\omega) \sum_{ x_j \in \mathbb{N}} f_{x_j}(x_i) \, v(x_j), \quad x_{i} \in \mathbb{N}_0.
\end{align} Suppose a mixture representation \eqref{gmixture_count} exists. Then $\omega=g(0)/f(0).$ If $\omega \in [0,1),$ $v$ is also uniquely determined with 
\begin{align*}
    v(x_j) = \frac{1-F(x_{j-1})}{1-\omega}\left\{\frac{g(x_j)}{f(x_j)} - \frac{g(x_{j-1})}{f(x_{j-1})} \right\}, \quad x_j \in \mathbb{N}.
\end{align*}
\end{theorem*}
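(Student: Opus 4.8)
The plan is to follow the proof of Theorem~\ref{thm:gmixture_finite} almost verbatim, adapting it to countably infinite support in the same way the preceding $f$-as-a-mixture result for count distributions adapts Theorem~\ref{thm:fmixture_finite}. Throughout I would write $r = g/f$ for the ratio, which is well defined since $f > 0$ on $\mathbb{N}_0$.

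For the backward direction, assuming \eqref{gmixture_count} holds, I would divide by $f(x_i)$ and use $f_{x_j}(x_i)/f(x_i) = \mathbbm{1}_{[x_j,\infty)}(x_i)/\{1-F(x_{j-1})\}$ to obtain $r(x_i) = \omega + (1-\omega)\sum_{x_j \le x_i,\, x_j \in \mathbb{N}} v(x_j)/\{1-F(x_{j-1})\}$. As $x_i$ grows this is a sum of more nonnegative terms, so $r$ is non-decreasing. For the forward direction I would set $\omega = r(0) = g(0)/f(0) = \min_{x_i \in \mathbb{N}_0} r(x_i)$ and check $\omega \in [0,1]$: $\omega > 1$ is impossible since it would force $g > f$ throughout $\mathbb{N}_0$, while $\omega = 1$ forces $g \ge f$ throughout and hence, by $\sum f = \sum g = 1$, $f = g$, in which case any admissible $v$ works. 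For $\omega \in [0,1)$ I would define $v$ by the stated formula, note nonnegativity is immediate from monotonicity of $r$ and $\omega < 1$, and verify \eqref{gmixture_count} by substituting $v$ into the right-hand side, where the cumulative sum telescopes to $f(x_i)\{\omega + r(x_i) - r(0)\} = g(x_i)$.

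The one genuinely new step — and the main obstacle — is verifying $\sum_{x_j \in \mathbb{N}} v(x_j) = 1$, because the finite telescoping argument has no last term to read off. Summing by parts, the partial sum regroups as
\begin{align*}
\sum_{j=1}^{N} \{1-F(x_{j-1})\}\{r(x_j)-r(x_{j-1})\}
= \{1-F(x_{N-1})\}\,r(x_N) - \{1-f(0)\}\,\omega + \sum_{k=1}^{N-1} g(x_k),
\end{align*}
using $\{1-F(x_{k-1})\} - \{1-F(x_k)\} = f(x_k)$ and $f(x_k)r(x_k) = g(x_k)$ for the middle terms. The crux is showing the boundary term vanishes as $N \to \infty$. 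Since $r$ is non-decreasing, $f(x_i)\,r(x_N) \le g(x_i)$ for every $i \ge N$, so $\{1-F(x_{N-1})\}\,r(x_N) \le \sum_{i \ge N} g(x_i) = 1 - G(x_{N-1}) \to 0$. Letting $N \to \infty$ then collapses the right-hand side to $1 - \omega$, yielding $\sum_{x_j \in \mathbb{N}} v(x_j) = 1$.

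Finally, for uniqueness I would evaluate the ratio identity from the backward direction at $x_i = 0$; since no $x_j \in \mathbb{N}$ satisfies $x_j \le 0$, this gives $\omega = g(0)/f(0)$. If $\tilde v$ were a second solution, subtracting the two representations and using $f > 0$ and $\omega < 1$ forces the cumulative sums of $v - \tilde v$ to vanish at every $x_i$; reading these off successively from $x_1$ upward gives $v = \tilde v$. I expect no difficulty here beyond the bookkeeping already present in the finite proof.
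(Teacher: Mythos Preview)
Your proposal is correct and follows essentially the same approach the paper indicates: mirror the finite-support argument for Theorem~\ref{thm:gmixture_finite}, with the only new wrinkle being the infinite summation-by-parts when checking $\sum_{x_j\in\mathbb{N}} v(x_j)=1$. Your explicit handling of the boundary term via $\{1-F(x_{N-1})\}\,r(x_N)\le 1-G(x_{N-1})\to 0$ is in fact slightly more careful than what the paper sketches, which simply distributes terms and asserts the result.
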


A peculiarity of these results is that $u$ is supported on a subset of $\mathbbm{N}_0$ while $v$ is supported on a subset of $\mathbbm{N}.$ The proofs are omitted because they are nearly identical to the proofs of Theorems \ref{thm:fmixture_finite}  and \ref{thm:gmixture_finite}.   

\section*{Computational implementation} \label{computation_supplement}

As discussed in Subsection \ref{computation}, the posterior distribution is not available in closed form, but can be approximated via Markov chain Monte Carlo methods. We propose a slice-within-Gibbs sampler based on a truncated version of Sethuraman's stick-breaking representation of the Dirichlet process \citep{Sethuraman1994}. \texttt{R} code to implement the Markov chain Monte Carlo algorithm can be found at \href{https://github.com/michaeljauch/lrmix}{https://github.com/michaeljauch/lrmix}.

Under the truncated stick-breaking representation, the random measures $P_1$ and $P_2$ satisfy
$$
P_k(\cdot) = \sum_{j=1}^N v_{k,j} \left\{ \prod_{l =1}^{j-1} (1-v_{k,l}) \right\} \delta_{\mu_{k,j},\sigma_{k,j}^2}(\cdot), \quad k \in \{1,2\} 
$$
where the truncation level $N$ is fixed, $v_{k,j}$ are independent beta random variables for each $j \in \{1,\ldots,N-1\}$ with $v_{k, N} = 1,$ and the atoms $(\mu_{k,j},\sigma_{k,j}^2)^{\T}$ are independent random vectors distributed according to the base measure $P_{0,k}$. Let $\vec{v}_k = (v_{k,1},\ldots,v_{k,N-1})^{\T}$, $\vec{\mu}_k = (\mu_{k,1},\ldots,\mu_{k,N})^{\T}$,  and $
\vec{\sigma}_k^2 = (\sigma_{k,1}^2,\ldots,\sigma_{k,N}^2)^{\T}$. The densities of $G$ and $U$ are then
\begin{align*}
g(x \mid \vec{v}_1, \vec{\mu}_1, \vec{\sigma}_1^2) & = 
\sum_{j=1}^N v_{1,j} \left\{\prod_{l=1}^{j-1} (1-v_{1,l}) \right\} \phi_{\mu_{1,j},\sigma_{1,j}^2}(x), \\
u(x \mid \vec{v}_2, \vec{\mu}_2, \vec{\sigma}_2^2) & = 
\sum_{j=1}^N v_{2,j} \left\{\prod_{l=1}^{j-1} (1-v_{2,l})\right\} \phi_{\mu_{2,j},\sigma_{2,j}^2}(x),
\end{align*}
where $\phi_{\mu, \sigma^2}$ is the probability density function of a normal distribution with mean $\mu$ and variance $\sigma^2$. 

The spike-and-slab prior for $\theta$ discussed in Subsection \ref{prior_description} enables us to test $H_0: F = G$ against  $H_1: F \leq_{\text{LR}}G$. 
We set $\theta = (1-\gamma) \tilde \theta + \gamma$, where $\gamma$ is a binary random variable and $\tilde \theta$ is a random variable taking values in $(0,1)$. The hypotheses $H_0: F = G$ and $H_1: F \leq_{\text{LR}} G$ can be identified with the events $\gamma=1$ and $\gamma=0$, respectively. If a user is only interested in density estimation, they can use computational scheme outlined here but assign zero prior probability to $H_0$.

The density of $F$ can be expressed as
\begin{align*}
f(x \mid \gamma, \tilde \theta, \vec{v}_1, \vec{\mu}_1, \vec{\sigma}_1^2, \vec{v}_2, \vec{\mu}_2, \vec{\sigma}_2^2) & = \theta g(x \mid \vec{v}_1, \vec{\mu}_1, \vec{\sigma}_1^2) \, + \\ 
& \hspace{5mm} (1-\theta) \int_{-\infty}^{\infty} g^s(x \mid \vec{v}_1, \vec{\mu}_1, \vec{\sigma}_1^2) \, u(s \mid \vec{v}_2, \vec{\mu}_2, \vec{\sigma}_2^2) \, ds, 
\end{align*}
and the joint distribution of the data and parameters is given by
\begin{align}
    X_i \mid \gamma, \tilde \theta, \vec{v}_1, \vec{\mu}_1, \vec{\sigma}_1^2, \vec{v}_2, \vec{\mu}_2, \vec{\sigma}_2^2 & \stackrel{\mathrm{ind.}}{\sim} f(\cdot \mid \gamma, \tilde\theta, \vec{v}_1, \vec{\mu}_1, \vec{\sigma}_1^2, \vec{v}_2, \vec{\mu}_2, \vec{\sigma}_2^2), \quad i \in \{1,\ldots,n\}  \notag \\
    Y_i \mid \vec{v}_1, \vec{\mu}_1, \vec{\sigma}_1^2 & \stackrel{\mathrm{ind.}}{\sim}
    g(\cdot \mid \vec{v}_1, \vec{\mu}_1, \vec{\sigma}_1^2), \quad i \in \{1,\ldots,m\} \notag \\
    v_{1,j} & \stackrel{\mathrm{ind.}}{\sim} \mbox{Beta}(1,\alpha), \quad j\in\{1,\ldots,N-1\} \notag  \\
    (\mu_{1,j},\sigma_{1,j}^2)^{\T} & \stackrel{\mathrm{ind.}}{\sim} \mbox{Normal Inv-Gamma}(m,c,a_1,a_2), \quad j \in \{1,\ldots,N\} \notag  \\
    \tilde \theta \mid \gamma = 0 & \sim \mbox{Beta}(b_1,b_2)  \notag \\
    v_{2,j} \mid \gamma = 0 & \stackrel{\mathrm{ind.}}{\sim} \mbox{Beta}(1,\alpha), \quad j \in \{1,\ldots,N-1\}  \notag  \\
    (\mu_{2,j},\sigma_{2,j}^2)^{\T} \mid \gamma = 0 & \stackrel{\mathrm{ind.}}{\sim} \mbox{Normal Inv-Gamma}(m,c,a_1,a_2), \quad j \in \{1,\ldots,N\} \notag \\ 
    \tilde \theta \mid \gamma = 1 & \sim \mbox{Beta}(\breve b_1, \breve b_2)  \notag \\
    v_{2,j} \mid \gamma = 1 & \stackrel{\mathrm{ind.}}{\sim} \mbox{Beta}(1,\breve \alpha), \quad j \in \{1,\ldots,N-1\} \notag \\
    (\mu_{2,j},\sigma_{2,j}^2)^{\T} \mid \gamma = 1 & \stackrel{\mathrm{ind.}}{\sim} \breve p_{2,0}( \cdot), \quad j \in \{1,\ldots,N\}  \notag \\
    \gamma & \sim \mbox{Bernoulli}(p_0).  \label{HierarchicalModel} 
\end{align}
The density of the $\mbox{Normal Inv-Gamma}(m,c,a_1,a_2)$ distribution is
$$
\pi_{\mathrm{NI}}(\mu,\sigma^2 \mid m,c,a_1,a_2) =  \frac {\sqrt{c}} {\sigma\sqrt{2\pi} } \, \frac{a_2^{a_1}}{\Gamma(a_1)} \, \left( \frac{1}{\sigma^2} \right)^{a_1 + 1}   \exp \left\{ -\frac { 2 a_2 + c(\mu - m)^2} {2\sigma^2}  \right\},
$$
while the density of the $\mbox{Beta}(b_1,b_2)$ distribution is
$$
\pi_{\mathrm{Beta}}(v \mid b_1, b_2) =  \frac{\Gamma(b_1+b_2)}{\Gamma(b_1)\Gamma(b_2)} v^{b_1-1}(1-v)^{b_2-1}.
$$

The priors for $\tilde{\theta}, v_{2,j}$, and $(\mu_{2j}, \sigma^2_{2,j})^{\T}$ are defined conditionally on $\gamma$ for computational purposes. The priors given $\gamma = 1$ are called pseudo-priors \citep{Carlin1995}, and they help to ensure adequate mixing of the Markov chain. \citet{Carlin1995} proposed setting the hyperparameters of their pseudo-priors based on the data, which does not affect the stationary distribution of the Markov chain for the parameters of interest. The reader is referred to \citet{Carlin1995} for further details. A more complete discussion of our pseudo-priors appears later in this section. 

Below, we describe the full conditional distributions for the slice-within-Gibbs sampler. We use a dash as shorthand notation to avoid cumbersome enumeration of data and parameters, a common practice in the Bayesian literature.
\begin{enumerate}
    
    \item Update $(\vec{\mu}_1, \vec{\sigma}_1^2,\vec{v}_1)^{\T}$. For this step, we use the slice sampler described in Figure 8 of \citet{Neal2003} and the conditional densities
    \begin{align*}
        \pi(\vec{\mu}_1, \vec{\sigma}_1^2,\vec{v}_1 \mid \gamma = 0, -) 
        \propto & 
        \left\{\prod_{i=1}^{n} f(X_i \mid \gamma = 0, \tilde \theta, \vec{v}_1, \vec{\mu}_1, \vec{\sigma}_1^2, \vec{v}_2, \vec{\mu}_2, \vec{\sigma}_2^2) \right\}
        \\ & \times 
        \left\{\prod_{i=1}^{m} g(Y_i \mid \vec{v}_1, \vec{\mu}_1, \vec{\sigma}_1^2)\right\}
        \\ & \times 
        \left\{\prod_{j=1}^N \pi_{\mathrm{NI}}(\mu_{1,j},\sigma_{1,j}^2 \mid m,c,a_1,a_2)\right\}
        \left\{\prod_{j=1}^{N-1} \pi_{\mathrm{Beta}}(v_{1,j}|b_1,b_2)\right\} \\ 
        \pi(\vec{\mu}_1, \vec{\sigma}_1^2,\vec{v}_1 \mid \gamma = 1, -) &\propto
        \left\{\prod_{i=1}^{n} g(X_i \mid \vec{v}_1, \vec{\mu}_1, \vec{\sigma}_1^2)\right\}
        \left\{\prod_{i=1}^{m} g(Y_i \mid \vec{v}_1, \vec{\mu}_1, \vec{\sigma}_1^2)\right\}
        \\ & \times 
        \left\{\prod_{j=1}^N \pi_{\mathrm{NI}}(\mu_{1,j},\sigma_{1,j}^2 \mid m,c,a_1,a_2)\right\}
        \left\{\prod_{j=1}^{N-1} \pi_{\mathrm{Beta}}(v_{1,j}|b_1,b_2)\right\}.
    \end{align*}
    We evaluate $f(X_i \mid \gamma = 0, \tilde \theta, \vec{v}_1, \vec{\mu}_1, \vec{\sigma}_1^2, \vec{v}_2, \vec{\mu}_2, \vec{\sigma}_2^2)$ using numerical integration. 
    
    \item Update $(\vec{\mu}_2, \vec{\sigma}_2^2,\vec{v}_2)^{\T}$. 
    \begin{itemize}
        \item[i)] If $\gamma = 0$, we use the slice sampler described in Figure 8 of \citet{Neal2003} and the conditional density
        \begin{align*}
            \pi(\vec{\mu}_2, \vec{\sigma}_2^2,\vec{v}_2 \mid \gamma = 0, -) & \propto
            \left\{\prod_{i=1}^{n} f(X_i \mid \gamma = 0, \tilde \theta, \vec{v}_1, \vec{\mu}_1, \vec{\sigma}_1^2, \vec{v}_2, \vec{\mu}_2, \vec{\sigma}_2^2) \right\}
            \\ & \times 
            \left\{\prod_{j=1}^N \pi_{\mathrm{NI}}(\mu_{2,j},\sigma_{2,j}^2 \mid m,c,a_1,a_2)\right\}
            \left\{\prod_{j=1}^{N-1} \pi_{\mathrm{Beta}}(v_{2,j}| 1, \breve \alpha)\right\}.
        \end{align*}
        \item[ii)] If $\gamma = 1$, we sample
        \begin{align*}
            v_{2,j} \mid \gamma = 1, - & \sim \mbox{Beta}(1,\breve \alpha), \quad j \in \{1,\ldots,N-1\}
            \\
            (\mu_{2,j},\sigma_{2,j}^2) \mid \gamma = 1, - & \sim \breve p_{2,0}(\cdot), \quad j \in \{1,\ldots,N\}.
        \end{align*}
               
    \end{itemize}
    
    \item Update $\tilde \theta$. For each $i \in \{1,2, \, ... \, , n\},$  we introduce a latent variable $R_i$ that associates $X_i$ with one of the two components in the mixture representation \eqref{fx_mixture}. More precisely, 
    \begin{align*}
        X_i \mid R_i=1, \gamma \in \{0,1\}, - & \sim 
        g(\cdot \mid \vec{\mu}_1, \vec{\sigma}_1^2,\vec{v}_1) \\
        X_i \mid R_i=0, \gamma = 0, - & \sim 
        \int_{-\infty}^{\infty} g^s(\cdot \mid \vec{\mu}_1, \vec{\sigma}_1^2,\vec{v}_1) \, u(s \mid \vec{\mu}_2, \vec{\sigma}_2^2,\vec{v}_2) \, ds   \\
        R_i \mid \gamma = 0 &  \sim  \mbox{Bernoulli}(\tilde \theta) \\
        \tilde \theta \mid \gamma = 0 & \sim \mbox{Beta}(b_1,b_2) \\
        R_i \mid \gamma = 1 &  \sim  \delta_{1}(\cdot) \\
        \tilde \theta \mid \gamma = 1 & \sim \mbox{Beta}(\breve b_1, \breve b_2).
    \end{align*}
    Under our specification, $\mathrm{pr}(R_i = 0, \gamma = 1) = 0$. We update each $R_i$ with
    \begin{align*}
        \mathrm{pr}(R_i = 1 \mid \gamma = 0, -) & = \frac{\tilde \theta g(X_i \mid \vec{\mu}_1, \vec{\sigma}_1^2,\vec{v}_1)}{\tilde \theta g(X_i \mid \vec{\mu}_1, \vec{\sigma}_1^2,\vec{v}_1) + (1-\tilde \theta) \int_{-\infty}^{\infty} g^s(X_i \mid \vec{\mu}_1, \vec{\sigma}_1^2,\vec{v}_1)} \\
        \mathrm{pr}(R_i = 0 \mid \gamma = 0, -) &= 1 - \mathrm{pr}(R_i = 1 \mid \gamma = 0, -)\\ 
        \mathrm{pr}(R_i = 1 \mid \gamma = 1) &= 1.
    \end{align*}
    Then we update $\tilde \theta$ with
    \begin{align*}
        \tilde \theta \mid \gamma = 0, - &  \sim  \mbox{Beta}\left(b_1+\sum_{i=1}^{n} R_i,b_2+n-\sum_{i=1}^{n} R_i\right) \\
        \tilde \theta \mid \gamma = 1, - & \sim \mbox{Beta}(\breve b_1, \breve b_2).
    \end{align*}

    \item Update $\gamma$. The full conditional distribution for $\gamma$ is Bernoulli with 
    \begin{align*}
       \mathrm{pr}(\gamma = 0 \mid -) &= \frac{(1-p_0) \prod_{i=1}^{n} f(X_i \mid \gamma = 0, \tilde \theta, \vec{v}_1, \vec{\mu}_1, \vec{\sigma}_1^2, \vec{v}_2, \vec{\mu}_2, \vec{\sigma}_2^2)}{\mathcal{Z}_\gamma} \\
        \mathrm{pr}(\gamma = 1\mid  -) &= { \frac{p_0 \prod_{i=1}^{n} g(X_i|\vec{\mu}_1, \vec{\sigma}_1^2,\vec{v}_1)}{\mathcal{Z}_\gamma}},
    \end{align*}
where
    \begin{align*}
        \mathcal{Z}_\gamma &=  p_0 \prod_{i=1}^{n} g(X_i|\vec{\mu}_1, \vec{\sigma}_1^2,\vec{v}_1)+(1-p_0) \prod_{i=1}^{n} f(X_i \mid \gamma = 0, \tilde \theta, \vec{v}_1, \vec{\mu}_1, \vec{\sigma}_1^2, \vec{v}_2, \vec{\mu}_2, \vec{\sigma}_2^2).
    \end{align*}
\end{enumerate}

Follow the recommendations of \citet{Carlin1995}, we define the pseudo-priors $\mbox{Beta}(\breve b_1, \breve b_2)$, $\mbox{Beta}(1,\breve \alpha)$, and $\breve p_{2,0}(\cdot)$ to resemble the posterior distribution of $\tilde \theta$, $\vec{v}_2$, $\vec{\mu}_2$, and $\vec{\sigma}_k^2$ conditional on $\gamma = 0$. To achieve this, we take a two-stage approach. In the first stage, we run the first three steps of the slice-within-Gibbs sampler for $Q$ iterations with $\gamma$ fixed at zero. The Markov chain output $({\tilde \theta}_{(1)}, {\vec{v}}_{2,(1)}, {\vec{\mu}}_{2, \,(1)}, {\vec{\sigma}}^2_{2,(1)})^{\T}, \ldots, ({\tilde \theta}_{(Q)}, {\vec{v}}_{2,(Q)}, {\vec{\mu}}_{2, \,(Q)}, {\vec{\sigma}}^2_{2,(Q)})^{\T}$ provides an approximation of the posterior distribution of  $\tilde \theta$, $\vec{v}_2$, $\vec{\mu}_2$, and $\vec{\sigma}_2^2$  conditional on $\gamma = 0$. 
In the second stage, we use this Markov chain output to determine the pseudo-priors. For the pseudo-prior of $\tilde \theta$, we set
$$
(\breve b_1, \breve b_2) = \underset{(a,b)}{\rm argmax} \prod_{q=1}^Q \pi_{\rm Beta}\left({\tilde \theta}_{(q)}\left|a,b\right.\right).
$$
For the pseudo-prior of $\vec{v}_2$, we set 
$$
\breve \alpha = \frac{1}{Q}\sum_{q=1}^Q \underset{a}{\rm argmax} \prod_{j=1}^{N-1}  \pi_{\rm Beta}\left({v}_{2,j \, (q)}\left|1,a\right.\right)
$$
Finally, for the pseudo-prior of $(\vec \mu_{2}, \vec \sigma^2_{2})^{\T}$, we take $\breve p_{2,0}(\cdot)$ to be a kernel density estimate computed from $(m_1,s_1^2)^{\T},\ldots,(m_Q,s_Q^2)^{\T}$, where
$$
(m_q,s_q^2) \sim \sum_{j=1}^N  v_{2,j \, (q)} \left\{ \prod_{l = 1}^{j-1} (1- v_{2,l \, (q)}) \right\} \delta_{ \mu_{2,j \, (q)}, \sigma_{2,j \, (q)}^2}(\cdot).
$$
We compute the kernel density estimate using the function \texttt{kde} in the \texttt{R} package \texttt{ks} \citep{ksRpackage2021}. 

We conclude this section with recommendations for choosing the prior hyperparameters and the truncation level of the stick-breaking representation. Instead of working on the original scale of the data, we recommend standardizing. More precisely, we standardize by subtracting off the pooled sample mean and dividing by the pooled standard deviation. 
The density estimates can be transformed back to the original scale after running the slice-within-Gibbs sampler. We recommend $(b_1 = 1, b_2 = 1)$ and  $p_0 = 0.5$ as hyperparameters of the hierarchical prior for $\theta.$ 
We recommend $(m = 0, c = 1, a_1 = 1, a_2 = 1)$ as hyperparameters for the normal inverse-gamma base measure of the Dirichlet processes. For the precision parameter and the truncation level, we recommend $\alpha = 1$ and $N = 25.$ 
To verify that these choices are suitable in a given application, users should sample the densities $f$ and $g$ from the prior distribution, inspect them, and adjust as needed. 

\section*{Simulation study: Additional results}

In this section, we revisit the simulation study in Section 
3.6 to evaluate the performance of posterior means as point estimates. We consider the same scenarios, sample sizes, and number of simulations we considered in Section 3.6. For each scenario, sample size, and synthetic data set, we compute the following measures of performance:
      \begin{align*}
    L_2(\mathfrak{f}) &= \|E(f \mid X_1,\ldots, X_n, Y_1, \ldots, Y_m) - \mathfrak{f}\|_2\\
    L_2(\mathfrak{g}) &= \|E(g \mid X_1,\ldots, X_n, Y_1, \ldots, Y_m) - \mathfrak{g}\|_2\\
    L_2(\mathfrak{f}, \mathfrak{g}) &=  \lVert E(f \mid X_1,\ldots, X_n, Y_1, \ldots, Y_m) - \mathfrak{f} \rVert_2 + \lVert E(g \mid X_1,\ldots, X_n, Y_1, \ldots, Y_m) - \mathfrak{g} \rVert_2 \\
    L_2(\mathfrak{f}/\mathfrak{g}) &= \|E(f/g \mid X_1,\ldots, X_n, Y_1, \ldots, Y_m) - \mathfrak{f}/\mathfrak{g}\|_2.
    \end{align*}
 Since there is no guarantee that the $L_2$ distance between likelihood ratios is bounded, we compute $L_2(\mathfrak{f}/\mathfrak{g})$ only across the range of the observations:
        \begin{align*}
    L_2(\mathfrak{f}/\mathfrak{g}) &
    & = \int_{{\rm Range}(X_1,\ldots, X_n, Y_1, \ldots, Y_m)} \|E\left( \left. f(x)/g(x)\right| X_1,\ldots, X_n, Y_1, \ldots, Y_m\right) - \mathfrak{f}(x)/\mathfrak{g}(x) \|_2 \, dx .
    \end{align*}
    
The results are displayed in Table 2. The conclusion is that the monotonicity constraint is clearly advantageous because it consistently outperforms the unconstrained approach across all scenarios and sample sizes.

\begin{table}[h!]
\label{tab:simulationsl2mean}
\centering
\caption{Evaluation of posterior mean as an estimator through averaged $L_2$ losses}
\begin{tabular}{c|c|cccc|cccc|c}
         &   & \multicolumn{4}{c|}{LR order constrained model}  & \multicolumn{4}{c|}{Unconstrained model}  & Prob. of\\  
S & n & $\bar L_2(\mathfrak{f})$ & $\bar L_2(\mathfrak{g})$  & $\bar L_2(\mathfrak{f}, \mathfrak{g}) $  & $\bar L_2(\mathfrak{f}/\mathfrak{g}) $  & $\bar L_2(\mathfrak{f})$ & $\bar L_2(\mathfrak{g})$  & $\bar L_2(\mathfrak{f}, \mathfrak{g}) $  & $\bar L_2(\mathfrak{f}/\mathfrak{g})$ & $H_0:f=g$ \\ 
  \hline
I  &  100  &  0.059  &  0.076  &  0.135  &  2.449  &  0.059  &  0.087  &  0.146  &  2.942  &  0.002  \\
I  &  250  &  0.035  &  0.050  &  0.085  &  3.911  &  0.040  &  0.055  &  0.095  &  4.331  &  0.001  \\
I  &  500  &  0.027  &  0.037  &  0.064  &  4.644  &  0.030  &  0.039  &  0.070  &  5.049  &  0.001  \\ \hline
II  &  100  &  0.056  &  0.056  &  0.112  &  0.019  &  0.081  &  0.081  &  0.163  &  1.118  &  0.816  \\
II  &  250  &  0.035  &  0.035  &  0.071  &  0.013  &  0.049  &  0.053  &  0.102  &  0.924  &  0.850  \\
II  &  500  &  0.026  &  0.026  &  0.051  &  0.001  &  0.035  &  0.038  &  0.073  &  0.727  &  0.913  \\ \hline
III  &  100  &  0.137  &  0.147  &  0.284  &  2.909  &  0.188  &  0.218  &  0.406  &  3.987  &  0.062  \\
III  &  250  &  0.074  &  0.084  &  0.159  &  5.209  &  0.107  &  0.117  &  0.225  &  3.938  &  0.002  \\
III  &  500  &  0.053  &  0.061  &  0.114  &  4.395  &  0.072  &  0.074  &  0.146  &  5.100  &  0.001  \\ \hline
IV  &  100  &  0.115  &  0.115  &  0.231  &  0.046  &  0.175  &  0.180  &  0.355  &  1.854  &  0.821  \\
IV  &  250  &  0.070  &  0.070  &  0.139  &  0.027  &  0.098  &  0.103  &  0.201  &  1.151  &  0.855  \\
IV  &  500  &  0.052  &  0.052  &  0.104  &  0.013  &  0.067  &  0.067  &  0.134  &  1.191  &  0.896  \\ 
\end{tabular}
\end{table}

\section*{Additional application: Hoel (1972) data}

The C-reactive protein data were obtained through a data sharing agreement with the Children's Hospital of Philadelphia that does not permit the data to be made public. Therefore, we present an additional application based on the publicly available survival data analyzed in \citet{Hoel1972} and \citet{Carolan2005}. \texttt{R} code to reproduce this analysis can be found at \href{https://github.com/michaeljauch/lrmix}{https://github.com/michaeljauch/lrmix}.


The data come from an experiment in which 181 mice were exposed to radiation at the age of 5-6 weeks. The mice were divided into two groups, with $n=82$ living in a conventional lab environment and $m=99$ living in a germ-free environment. The outcome variable is time of death in days. We denote the distribution function of the outcome variable in the conventional environment by $F$ and the distribution function of the outcome variable in the germ-free environment by $G.$

Using the same methodology applied to the C-reactive protein data in Section \ref{subsec:CHOP}, we evaluate the posterior probability of the null hypothesis $H_0: F = G$ versus the alternative $H_1: F \le_{\text{LR}} G.$ Our point estimate of the probability that $F = G$ 
is $8 \times 10^{-4},$ which is consistent with the large test statistic and the rejection of $H_0$ at the $\alpha = 0.05$ significance level reported in \citet{Carolan2005}. These results should be interpreted with some caution, however, because the likelihood ratio order might not truly hold for these data, as discussed in \citet{Carolan2005}. Figure~\ref{fig:hoel_plot} is the analogue of Figure~\ref{fig:CHOP_plot} in the main text. 


\begin{figure}[htbp]
\centerline{\includegraphics[width=6in]{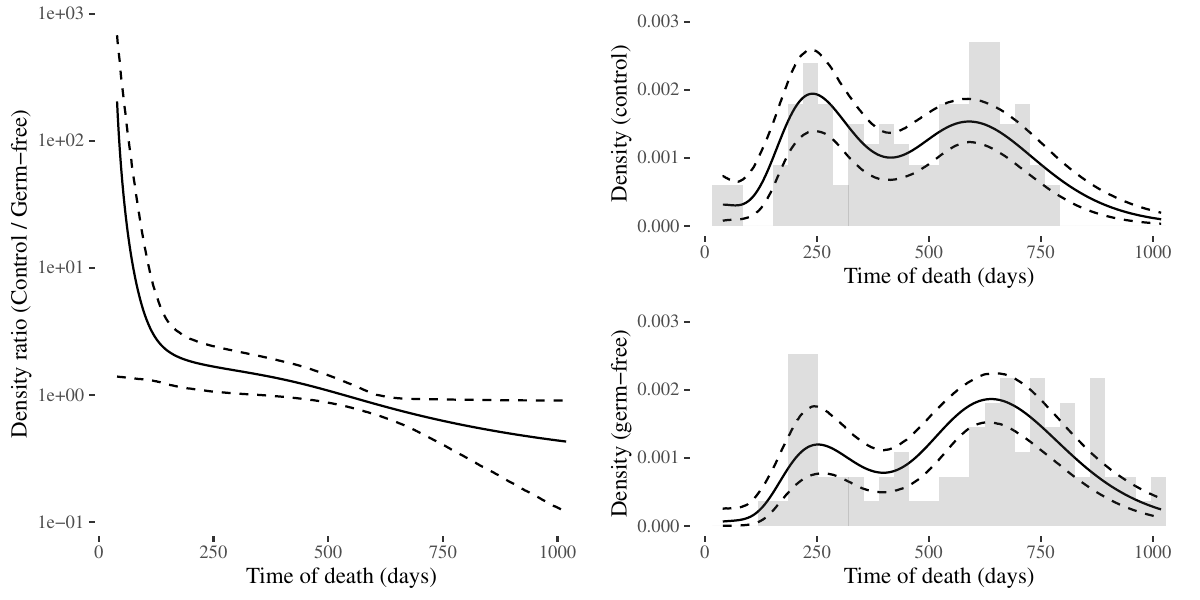}}
\caption{Left: Posterior mean of the density ratio $f/g$ (solid) along with 95\% pointwise credible intervals (dashed). Right: Density estimates (solid) and 95\% pointwise credible intervals (dashed) plotted on top of histograms of the outcome variable for the conventional environment (top) and the germ-free environment (bottom).}
\label{fig:hoel_plot}
\end{figure}



\end{document}